\newtheorem{theorem}{Theorem}[section]
\newtheorem{assumption}{Assumption}[section]
\newtheorem{corollary}{Corollary}[section]
\newtheorem*{setup}{Setup}
\newcommand{\blind}{0}
\theoremstyle{break}
\begin{document}

\def\spacingset#1{\renewcommand{\baselinestretch}%
{#1}\small\normalsize} \spacingset{1}

%%%%%%%%%%%%%%%%%%%%%%%%%%%%%%%%%%%%%%%%%%%%%%%%%%%%%%%%%%%%%%%%%%%%%%%%%%%%%%

\if0\blind
{
  \title{\bf ECoHeN: A Hypothesis Testing Framework for Extracting Communities from Heterogeneous Networks}
  \author{Connor P. Gibbs\hspace{.2cm}\\
    Department of Statistics, Colorado State University\\
    and \\
    Bailey K. Fosdick \\
    Department of Biostatistics and Informatics, Colorado School of Public Health\\
    and \\
    James D. Wilson \\
    Department of Mathematics and Statistics, University of San Francisco}
  \maketitle
} \fi

\if1\blind
{
  \bigskip
  \bigskip
  \bigskip
  \begin{center}
    {\LARGE\bf ECoHeN: A Hypothesis Testing Framework for Extracting Communities from Heterogeneous Networks}
\end{center}
  \medskip
} \fi

% 200 words or fewer
\bigskip
\begin{abstract}
Community discovery is the general process of attaining assortative communities from a network: collections of nodes that are densely connected within yet sparsely connected to the rest of the network. While community discovery has been well studied, few such techniques exist for heterogeneous networks, which contain different types of nodes and possibly different connectivity patterns between the node types.  In this paper, we introduce a framework called ECoHeN, which \textbf{e}xtracts \textbf{co}mmunities from a \textbf{he}terogeneous \textbf{n}etwork in a statistically meaningful way. Using a heterogeneous configuration model as a reference distribution, ECoHeN identifies communities that are significantly more densely connected than expected given the node types and connectivity of its membership. Specifically, the ECoHeN algorithm extracts communities one at a time through a dynamic set of iterative updating rules, is guaranteed to converge, and imposes no constraints on the type composition of extracted communities. To our knowledge this is the first discovery method that distinguishes and identifies both homogeneous and heterogeneous, possibly overlapping, community structure in a network. We demonstrate the performance of ECoHeN through simulation and in application to a political blogs network to identify collections of blogs which reference one another more than expected considering the ideology of its' members.
\end{abstract}

\noindent%
{\it Keywords:}  algorithms, classification and clustering, node metadata, community detection, community extraction
\vfill

\newpage
\spacingset{1.5} % DON'T change the spacing!
\section{Introduction}
\label{sec:introduction}

Complex phenomena, from biological systems \citep{nacu2007gene} to world trade patterns \citep{garcia2019stochastic}, are often modeled as networks (graphs) which consist of entities (nodes), connections between them (edges), and known covariates about the entities and the connections (node and edge attributes). Many disciplines have devoted significant efforts to the analysis and application of network models including statistics, physics, computer science, biology, and the social sciences. One focus that has garnished much attention is \textit{community discovery}: the general process of assigning nodes to collections whose members are densely connected within the collection yet sparsely connected to the rest of the network, calling these relatively dense subsets of nodes \textit{communities} \citep{girvan2002community}. After community members were shown to embody structural and functional similarities \citep{newman2004finding}, a flurry of algorithms designed to partition a graph to create this disparity in connectedness within and between partitions were proposed, e.g., \citep{wu2004finding, radicchi2004defining}. The process of partitioning a network is generally referred to as \textit{community detection}. 

The rapid growth of literature in community detection prompted several comparative studies in the late 2000s \citep{lancichinetti2009community, fortunato2010community} and generalizations to community detection in the 2010s \citep{traag2009community, psorakis2011overlapping, yang2013community}. One generalization of interest is a shift from community detection, a graph partitioning problem, to \textit{community extraction}, a set-identification problem. Community extraction methods seek to discover communities from a network one at a time, allowing for arbitrary structure in the rest of the network. Unlike community detection methods, community extraction methods readily identify \textit{background nodes}, defined as nodes that are not preferentially attached to any well-defined community, and overlapping community structures, where nodes may belong to more than one community. While some extraction methods seek to optimize an extraction criterion \citep{zhao2011community, wilson2017community}, others seek to define the statistical significance of a community's connections under a global null model \citep{lancichinetti2011finding, wilson2014testing}, a strategy we follow in this manuscript.

There has been notable progress in the development of methodology for extracting communities from homogeneous networks; however, few methods exist for a more general heterogeneous network. Formally defined in Section \ref{sec:heterogeneous_nets}, a \textit{heterogeneous network} is a network with different types of nodes. Most networks representing real systems are in fact heterogeneous \citep{shi2016survey}. For example, large-scale biological systems are often represented as heterogeneous networks \citep{pinero2016disgenet,alshahrani2017neuro}, where the biological entities, i.e. nodes, are distinguished by their biological function, i.e. node types. These networks comprise node types like proteins, diseases, phenotypes, and genetic variants \citep{callahan2020framework}. Often extreme interest is placed on understanding fundamental relationships both $\textit{among}$ and $\textit{between}$ these biological entities. While community discovery is a common tool in this setting \citep{choobdar2018open}, few methods are designed to identify communities which are densely connected considering the node types of the community members. Thus, practitioners are forced to either ignore node type altogether, treating the network as homogeneous, or adapt standard community discovery methods to analyze subgraphs of the heterogeneous network separately. In either case, any information gleaned from differences in the rates of connectivity between nodes of different type are subsequently ignored. 

Knowing, for example, that a link between a gene and a phenotype is relatively rare compared to a link between two genes or a link between two phenotypes is valuable information when determining whether a set of nodes composed of both genes and phenotypes should be deemed an assortative community. Relatively few connections between the genes and the phenotypes of this set might be deemed substantial when considering the general propensity for nodes of these types to share a connection in the network.  However, this is an unattainable conclusion if communities in a heterogeneous network are not determined according to the topology of the network as it relates to the node types of the community membership. 

A disparity in connectedness related to node type is not unique to biological networks. \citet{mcpherson2001birds} describe the tendency for kindred individuals to connect as \textit{homophily}, where homophily can occur on categories such as gender, class, or political ideology. Conditioning, for example, on political ideology and accounting for the relative propensity for political actors to connect can provide rich information about the underlying community structure, including communities of mixed political ideology which are undetectable using contemporary methods. Methods designed to account for the heterogeneity in the node types and different connectivity patterns between pairs of node types tend to assume communities should be densely connected with a) similar node types among the community members, e.g., \citep{liu2014framework, sengupta2015spectral, smith2016partitioning, li2018community}, or b) dissimilar node types among the members, e.g., \citep{zhang2020modularity}. As a result, existing methods facilitate the discovery of homogeneous or heterogeneous community structure, but not both. 

In this paper, we introduce ECoHeN: an algorithm designed to \textbf{e}xtract \textbf{co}mmunities from \textbf{he}terogeneous \textbf{n}etworks. The significance of connectivity between a node and a set of nodes is measured using a $p$-value derived from the reference distribution under a heterogeneous degree configuration model. Using these $p$-values, ECoHeN extracts communities one at a time through a dynamic set of iterative updating rules which are guaranteed to converge. ECoHeN is a generalization and refinement of an extraction method ESSC which stands for \textbf{e}xtraction of \textbf{s}tatistically \textbf{s}ignificant \textbf{c}ommunities \citep{wilson2014testing}. Unlike ESSC, ECoHeN accounts for existing differences in the connectivity patterns between pairs of node types to identify communities that are more densely connected than expected given the node types and connectivity of its membership. Existing community discovery methods for heterogeneous networks have treated the discovery of homogeneous and heterogeneous community structure as two separate objectives requiring separate algorithms. In comparison, ECoHeN makes no assumption and places no constraint on the resulting type composition of each community, allowing ECoHeN to distinguish and identify both homogeneous and heterogeneous community structures that may overlap and can identify nodes that are not preferentially attached to any community.

We start by formally defining a heterogeneous graph and outlining notation in Section \ref{sec:heterogeneous_nets}. In Section \ref{sec:hdcm}, we introduce the heterogeneous degree configuration model (HDCM), a null model for studying heterogeneous networks. In Section \ref{sec:sig_of_connection}, we provide the theoretical foundation for ECoHeN before outlining the algorithm in Section \ref{sec:ecohen} and discussing the algorithm's parameter choices in Section \ref{sec:effects_of_choices}. We illustrate the performance of ECoHeN relative to other methods in Section \ref{sec:simulation} before applying ECoHeN to a well-known political blogs data set in Section \ref{sec:results}. Finally, we conclude with a discussion in Section \ref{sec:discussion}.

\section{Heterogeneous Networks}
\label{sec:heterogeneous_nets}

Heterogeneous networks are a special case of a more general class of networks known as $\textit{node-attributed networks}$. Consistent with the terminology of \citet{wasserman1994social}, node-attributed networks consist of a $\textit{structural dimension}$ with nodes and interactions among them, a $\textit{compositional dimension}$ containing the attributes (also called features or metadata) for the nodes, and an $\textit{affiliation dimension}$ describing the group memberships. Since known affiliations can be expressed as node attributes, we interpret affiliation as being some latent membership to be learned through community discovery. We use the term heterogeneous network to mean a node-attributed network whose nodes are distinguished by a categorical feature called \textit{node types}. Informally, a heterogeneous network is a node-colored network where each unique color represents a unique node type. 

Heterogeneous networks are common in the social and biological sciences. For example, the political blogs network introduced by \citet{adamic2005political} is a heterogeneous network in which political blogs, represented as nodes, are distinguished by their political ideology, represented as node types (liberals in blue and conservatives in red, as in Figure \ref{fig:heterogeneous_net_examples}\textcolor{blue}{a}). Used to represent large-scale biological processes, knowledge graphs are another example of heterogeneous networks where biological actors, represented as nodes, are differentiated by their biological function in the network, represented as node types. Common node types in a knowledge graph include genetic variants, proteins, and phenotypes \citep{callahan2020framework}. A smaller, contrived example of a heterogeneous network is provided in Figure \ref{fig:heterogeneous_net_examples}\textcolor{blue}{b} to help solidify notation and concepts to come. This network has two node types, depicted in blue and orange. If multiple categorical features are present, we encode each possible combination of features as a different type.

\begin{figure}[hbt!]
\centering
   \begin{subfigure}{0.48\linewidth} \centering
     \includegraphics[scale=0.55]{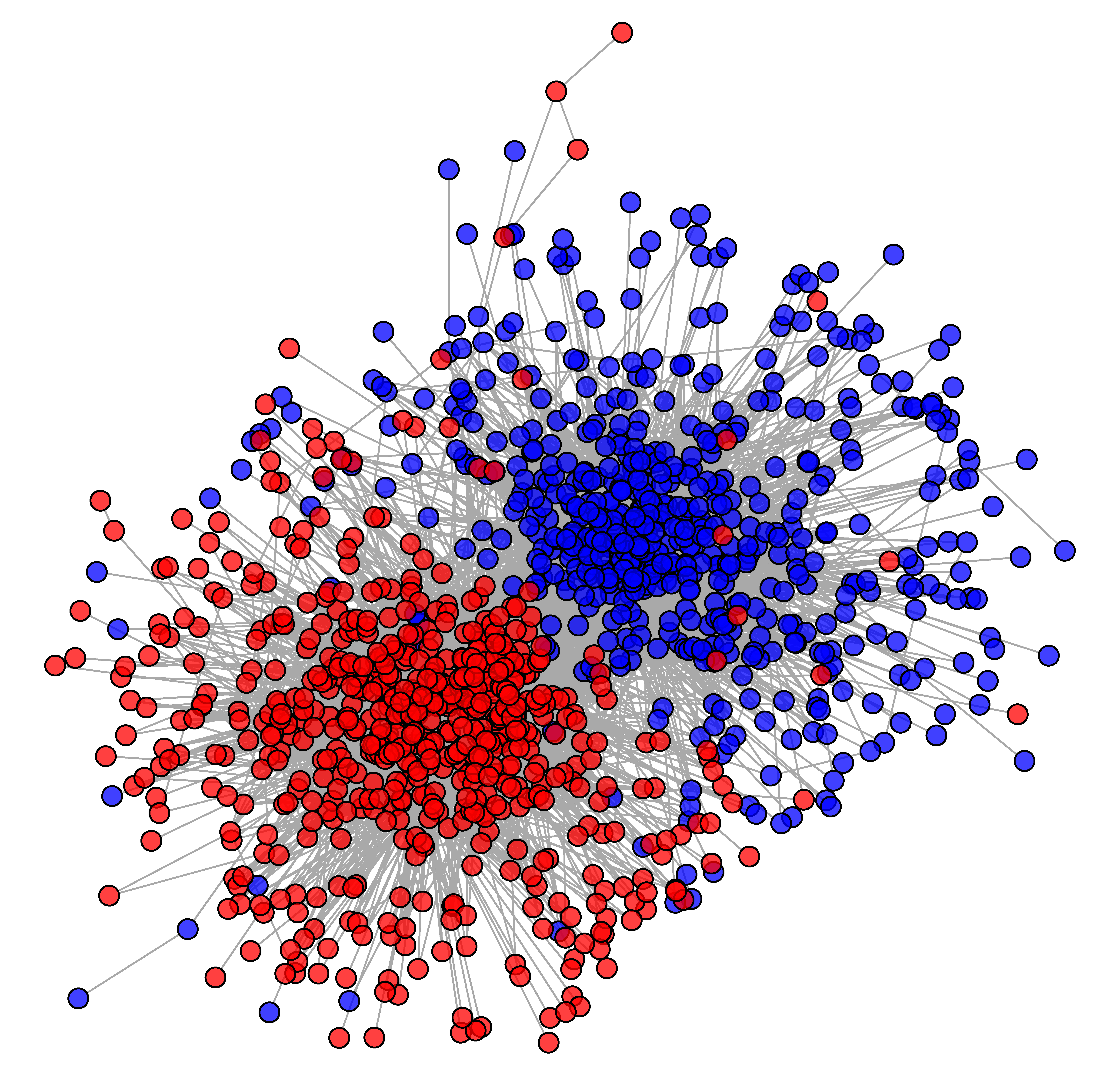}
     \caption{Political blogs network}
   \end{subfigure}
   \begin{subfigure}{0.48\linewidth} \centering
     \includegraphics[scale=0.40]{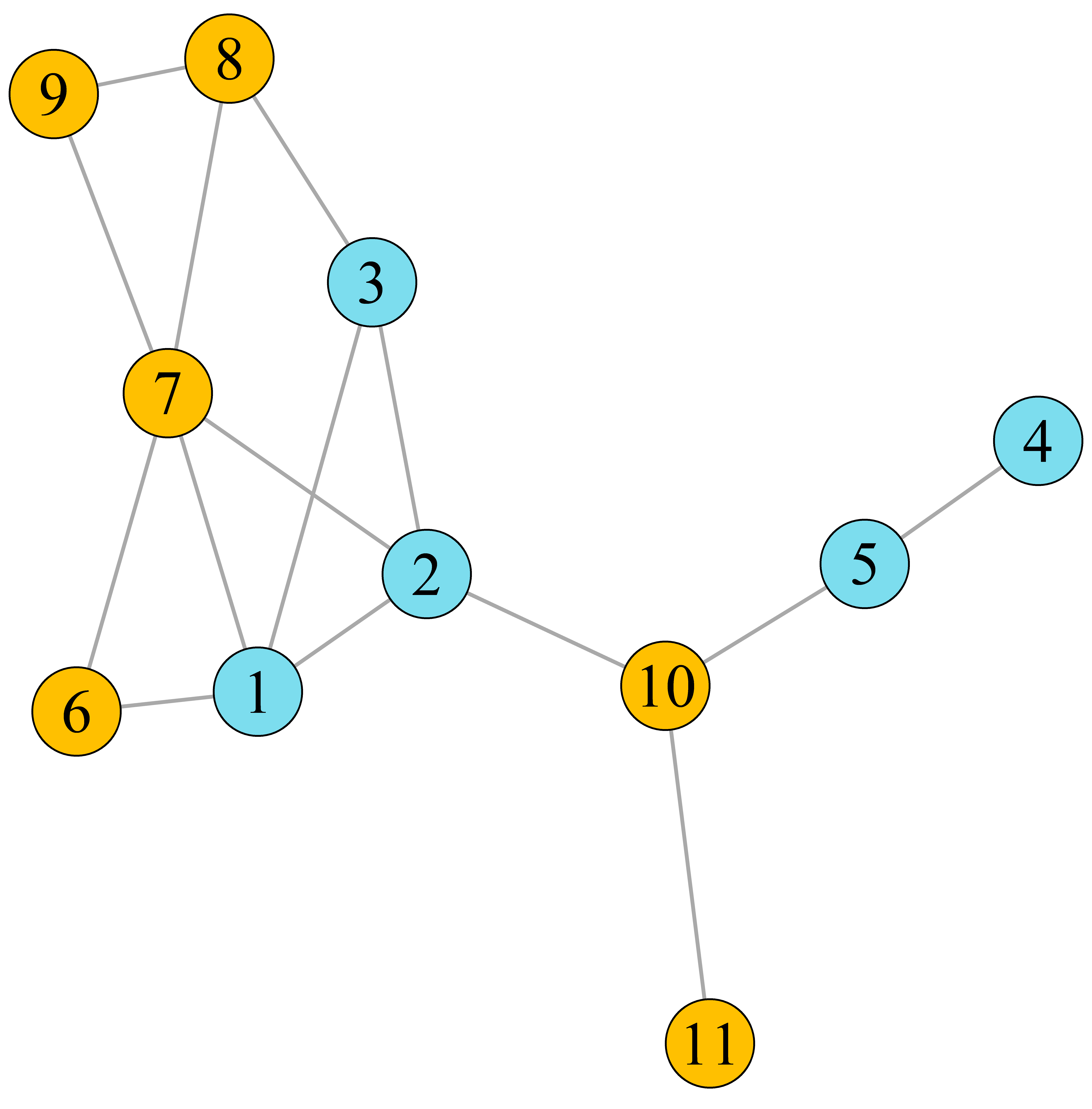}
     \caption{Toy heterogeneous network}
   \end{subfigure}
\caption{Panel (a) is a force-directed layout of the political blogs network: a network of political blogs and the hyperlinks between them. Each blog, represented as a node, is colored according to the political ideology, i.e. its node type, where red is used to indicate conservative ideology and blue is used to indicated liberal ideology. Panel (b) is a toy example of a heterogeneous network $\mathcal{G}$ with 11 nodes and 15 edges, implying $|\mathcal{V}|=11$ and $|\mathcal{E}|=15$. There are five type one nodes (colored in blue) and six type two nodes (colored in orange), implying $|V^{[1]}|=5$ and $|V^{[2]}|=6$. The edge set consists of nine within-type edges (four among type one nodes and five among type two) and six between-type edges, implying $|E^{[11]}|=4$, $|E^{[22]}|=5$, and $|E^{[12]}|=6$.}
\label{fig:heterogeneous_net_examples}
\end{figure}

Formally, let $\mathcal{G}=(\mathcal{V}, \mathcal{E})$ denote an observed, undirected heterogeneous network with $n$ nodes labeled $1, \dots, n$ and $K$ node types labeled $1,\dots, K$. The node set $\mathcal{V}=\bigcup_{k = 1}^{K}{V^{[k]}}$ where $V^{[k]}$ denotes the node set containing $|V^{[k]}|$ nodes of type $k$ and $V^{[k]}\cap V^{[l]}=\emptyset$. When necessary, the node type of an arbitrary node $u$ is denoted in bracketed superscript such that $u^{[l]}$ implies that node $u$ is of type $l\in\{1, \dots, K\}$. Let $\mathcal{T}$ denote an $n$-dimensional vector of node types where the $u^{\text{th}}$ element of $\mathcal{T}$ provides the node type of node $u$. We allow the heterogeneous network $\mathcal{G}$ to be a \textit{multigraph}, meaning there could be self-loops or multi-edges; if there are no self-loops or multi-edges, the heterogeneous network is deemed \textit{simple}. Figure \ref{fig:heterogeneous_net_examples}\textcolor{blue}{b}, for example, is a simple, heterogeneous network with two node types. The edge multiset of $\mathcal{G}$ can be partitioned according to adjacent nodes' types: $\mathcal{E} = \cup_{1\leq k \leq l \leq K} E^{[kl]}$ where $E^{[kl]}$ contains the links between nodes of type $k$ and nodes of type $l$. If $v^{[k]}$ is adjacent to $u^{[l]}$, then $\{v^{[k]}, u^{[l]}\} \in E^{[kl]}$. Since $\mathcal{G}$ is undirected $E^{[kl]}=E^{[lk]}$.

Partitioning the node and edge sets motivates the fact that $\mathcal{G}$ is the collection of $K$ unipartite and $K(K-1)/2$ bipartite subgraphs. Figure \ref{fig:heterogeneous_net_augmentation} highlights the two unipartite subgraphs and one bipartite subgraph that when augmented together yield the heterogeneous network depicted in Figure \ref{fig:heterogeneous_net_examples}\textcolor{blue}{b}. Let $G^{[kl]} = (V^{[k]} \cup V^{[l]}, E^{[kl]})$ denote the subgraph composed of type $k$ and type $l$ nodes and the connections between them. If $k=l$, then $G^{[kk]}$ is a unipartite subgraph; otherwise, $G^{[kl]}$ is a bipartite subgraph. Since edges are assumed bidirectional, $G^{[kl]}$ is equal to $G^{[lk]}$.

\begin{figure}[t]
\centering
   \begin{subfigure}{0.32\linewidth} \centering
     \includegraphics[scale=0.33]{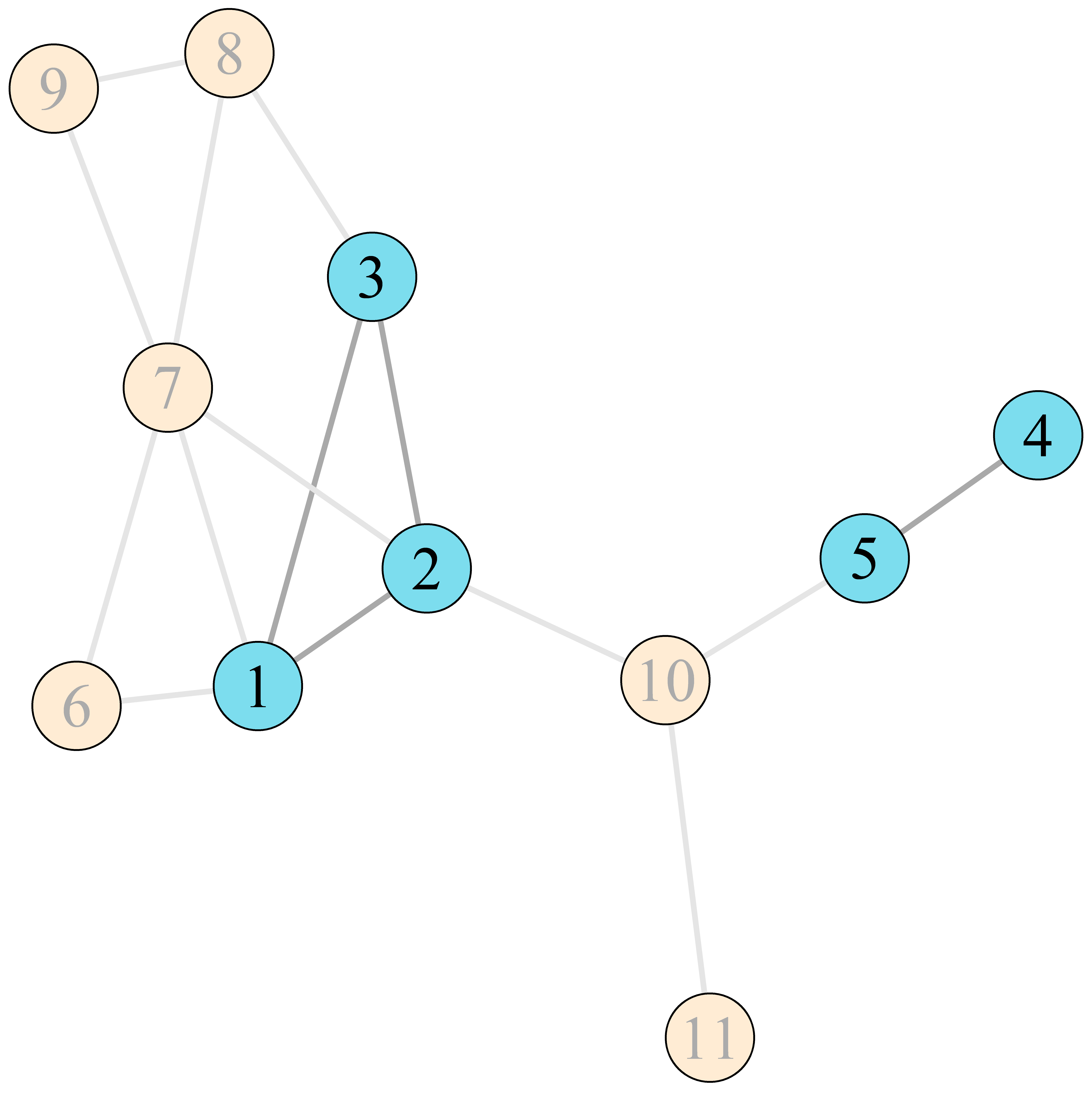}
     \caption{$G^{[11]}=(V^{[1]}, E^{[11]})$}
   \end{subfigure}
   \unskip\ \vrule\
   \begin{subfigure}{0.32\linewidth} \centering
     \includegraphics[scale=0.33]{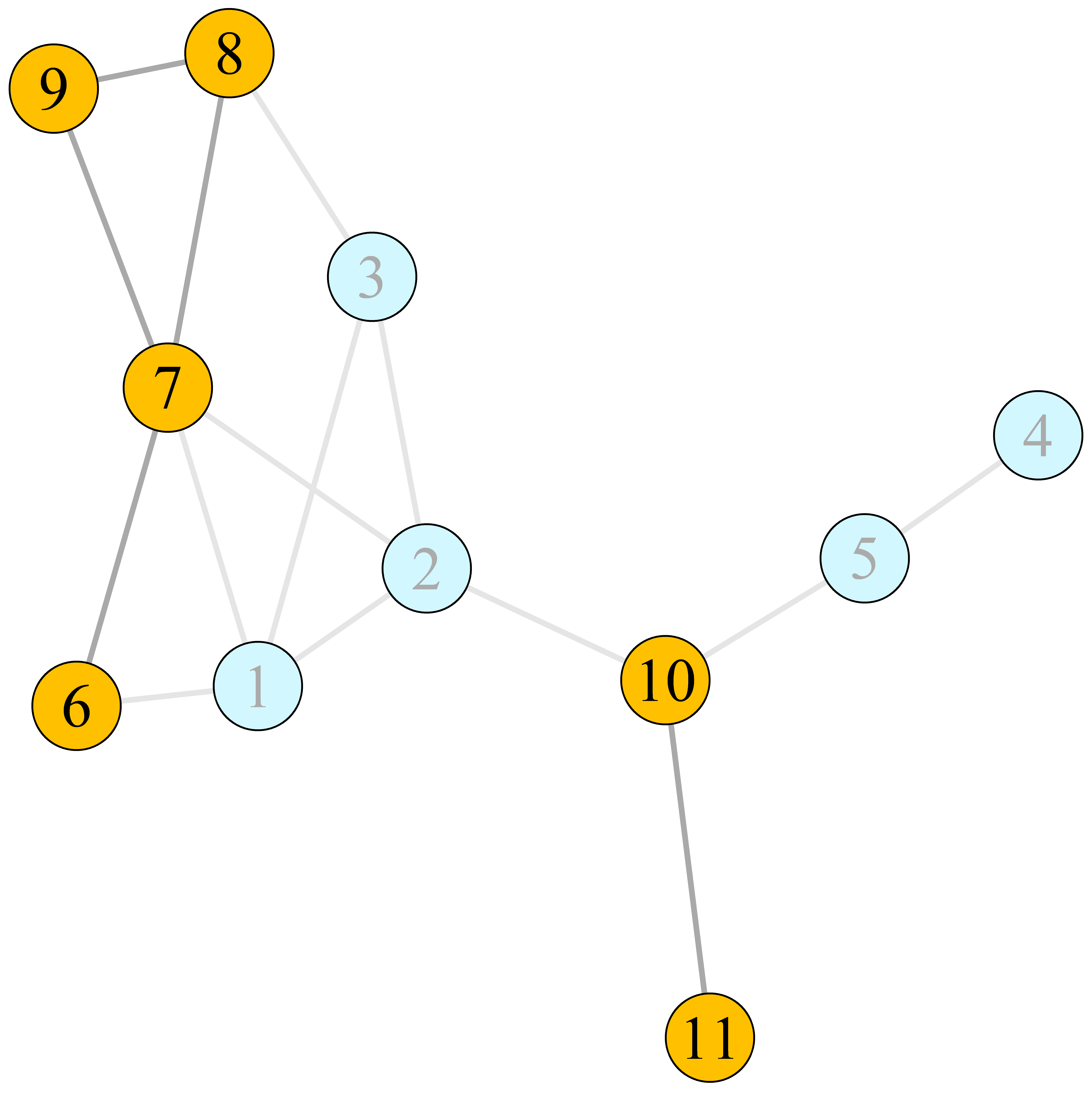}
     \caption{$G^{[22]}=(V^{[2]}, E^{[22]})$}
   \end{subfigure}
   \unskip\ \vrule\
   \begin{subfigure}{0.32\linewidth} \centering
     \includegraphics[scale=0.33]{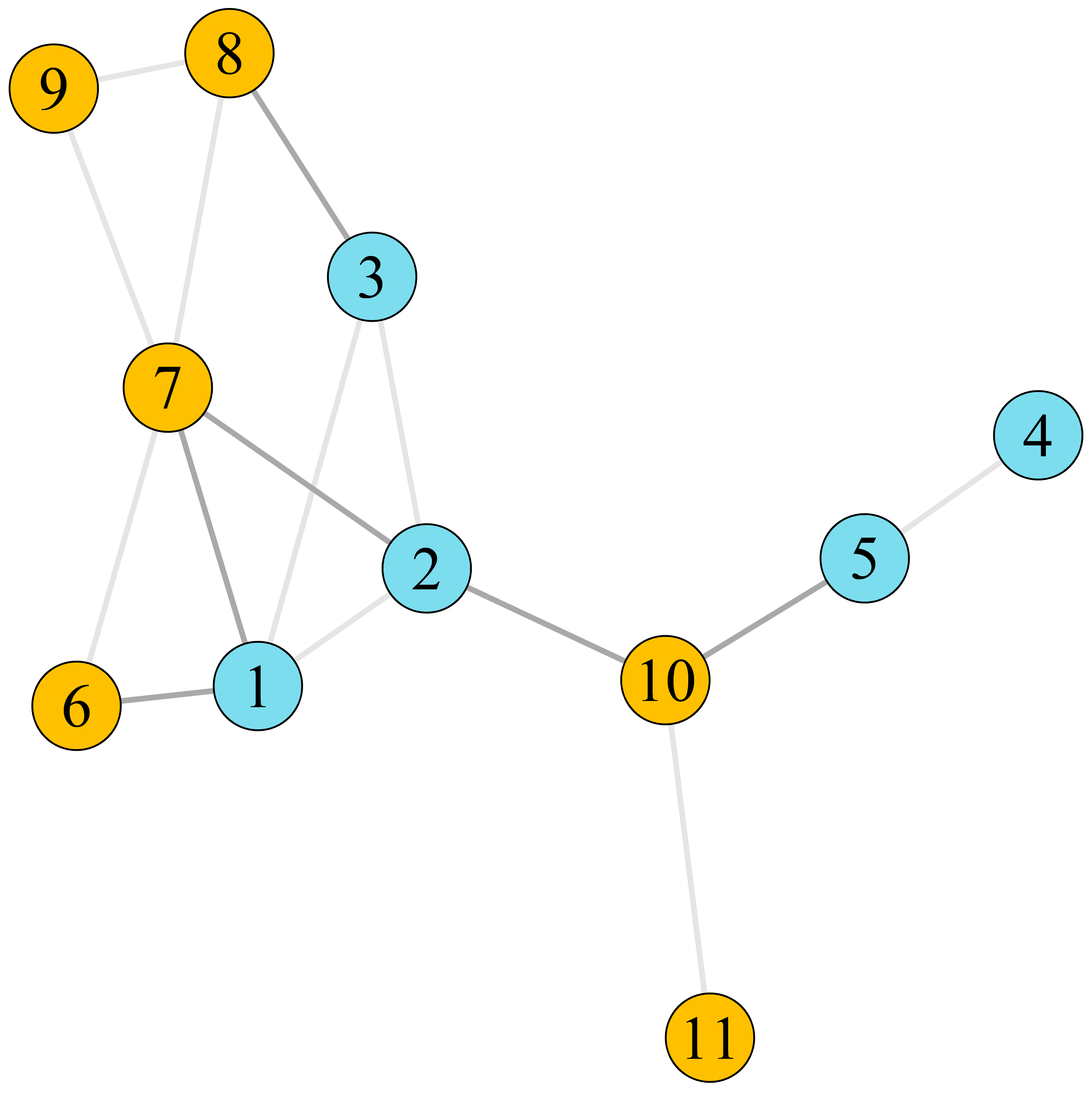}
     \caption{$G^{[12]}=(V^{[1]}\cup V^{[2]}, E^{[12]})$}
   \end{subfigure}
\caption{Decomposition of the toy network example provided in Figure \ref{fig:heterogeneous_net_examples}\textcolor{blue}{b} into two unipartite subgraphs ($G^{[11]}, G^{[22]}$) and a bipartite subgraph ($G^{[12]}$). Each subgraph $G^{[kl]}$ contains nodes of type $k$ and $l$ and the observed links between them highlighted. The degree of $3^{[1]}$, $d(3^{[1]})$, is three. The type 1 degree of $3^{[1]}$, $d^{[1]}(3^{[1]})$, is two, and the heterogeneous degree sequence of node $3^{[1]}$ is $\boldsymbol{d}(3^{[1]})=(2, 1)$.} 
\label{fig:heterogeneous_net_augmentation}
\end{figure}

Important for subsequent development, the \textit{degree} of $u^{[l]}$, denoted $d(u^{[l]})$, is the number of edges incident to $u^{[l]}$. At times, it will be useful to consider the number of edges incident to $u^{[l]}$ also incident to a type $k$ node; we refer to this quantity as the \textit{type $k$ degree} of $u^{[l]}$, denoted as $d^{[k]}(u^{[l]})$\footnote{Allowing for self-loops and multi-edges can complicate the interpretation and discussion surrounding $d^{[k]}(u^{[l]})$. In an undirected, simple setting, there is no distinction between counting the number of edges incident to $u^{[l]}$ also incident to a type $k$ node (i.e., $d^{[k]}(u^{[l]})$) and counting the number of type $k$ nodes adjacent to $u^{[l]}$, so we treat them as the same objective here. When discussing the number of type $k$ nodes adjacent to $u^{[l]}$, we would count, for example, a type $k$ node as twice adjacent to $u^{[l]}$ should there exist two edges between them.}. We let
\begin{equation}
\label{eq:hdegseq}
\boldsymbol{d}(u^{[l]})=\left(d^{[1]}(u^{[l]}), \dots, d^{[K]}(u^{[l]})\right)    
\end{equation}
denote the \textit{heterogeneous degree sequence} of $u^{[l]}$, providing the number of type $k$ nodes adjacent to $u^{[l]}$ for all types $k\in\{1, \dots, K\}$. Note that $d^{[k]}(u^{[l]})$ represents degree of $u^{[l]}$ in $G^{[kl]}$, such that $d(u^{[l]})=\sum_k d^{[k]}(u^{[l]})$ represents the degree of $u^{[l]}$ in $\mathcal{G}$.

\section{Methodology}
\label{sec:methods}

We now outline a method to extract communities from heterogeneous networks which we refer to as ECoHeN. ECoHeN is designed to account for differences in connectivity patterns between nodes of various types. At its core, ECoHeN evaluates the significance of connectivity between a node and a set of nodes using a $p$-value derived from the reference distribution arising from a heterogeneous degree configuration model (HDCM). Using these $p$-values, ECoHeN extracts communities one at a time through a dynamic set of iterative updating rules which are guaranteed to converge. Identified communities are more densely connected than expected given the node types and connectivity of its membership, allowing for the discovery of both homogeneous and heterogeneous community structure. Resulting communities may overlap and exclude nodes altogether which are not preferentially attached to any community.

We start by introducing the HDCM in Section \ref{sec:hdcm}. In Section \ref{sec:sig_of_connection}, we define a $p$-value for the hypothesis that an arbitrary node is well-connected to an arbitrary set of nodes and provide a theorem for the asymptotic distribution of a random variable under the HDCM which serves as the foundation for a reasonable approximation of the $p$-value. In Section \ref{sec:ecohen}, we introduce the ECoHeN algorithm and discuss initialization, extraction, and convergence. Furthermore, we discuss a procedure for paring down the set of discovered communities for practitioners. Finally, in Section \ref{sec:effects_of_choices}, we discuss the algorithm's parameter choices.

\subsection{Heterogeneous Degree Configuration Model}
\label{sec:hdcm}

The degree configuration model (DCM) is a classic random network null model defined as the uniform distribution over the space of networks maintaining a given degree sequence. In the heterogeneous network setting, simply fixing the degree of each node ignores the differences among the nodes and the resulting differences in the types of connections they form. Motivated by \citet{zhang2020modularity}, we define the \textit{heterogeneous degree configuration model} (HDCM) as the uniform distribution over the space of networks with node types $\mathcal{T}$ maintaining a given collection of heterogeneous degree sequences, $\boldsymbol{\mathcal{D}}$, where
\begin{equation}
\boldsymbol{\mathcal{D}} = \left\{\boldsymbol{d}(u):u\in\mathcal{V}\right\} = 
\left\{\left(d^{[1]}(u), \dots, d^{[K]}(u)\right):u\in\mathcal{V}\right\}
.
\end{equation}
The model, denoted $\text{HDCM}(\mathcal{T}, \boldsymbol{\mathcal{D}})$, assumes all edge configurations adhering to the given collection of heterogeneous degree sequences are equally likely.

One can construct a random, heterogeneous network with node types $\mathcal{T}$ and degree collection $\boldsymbol{\mathcal{D}}$ through a generative process informally known as ``stub matching." Figure \ref{fig:hdcm_flow_stub} provides an illustration of this process for the toy network example in Figure \ref{fig:heterogeneous_net_examples}\textcolor{blue}{b}. Initially, each node type is assigned a corresponding color, e.g., node type one (two) is assigned blue (orange). Starting with $n$ isolate nodes, each node is assigned a color according to its respective node type provided in $\mathcal{T}$, e.g., node $3^{[1]}$ is assigned blue. Colored stubs, which act as half-edges, are attached to each node according to each node's heterogeneous degree sequence $\boldsymbol{\mathcal{D}}$, e.g., node $3^{[1]}$ is assigned two blue stubs and one orange stub since $\boldsymbol{d}(3^{[1]}) = (d^{[1]}(3^{[1]}), d^{[2]}(3^{[1]})) = (2, 1)$. Finally, stubs are attached uniformly at random within constraints dictated by the color of the stubs and nodes. At each stage, a stub is selected uniformly at random from the set of available stubs. The color of the stub indicates the color of the node to which it must connect. If, for example, the first stub selected is an orange stub emanating from a blue node, then it is matched randomly with an available blue stub emanating from an orange node. This process is repeated until no stubs remain. See Figure \ref{fig:hdcm_flow_stub} for an illustration of this process. Supplement A provides further discussion on the similarities and differences between the DCM and the HDCM, and an efficient means for generating a heterogeneous network from the model through a constrained permutation of the node labels in the edge multiset $\mathcal{E}$.

\begin{figure}[htb!]
    \centering
    \includegraphics[width=\textwidth]{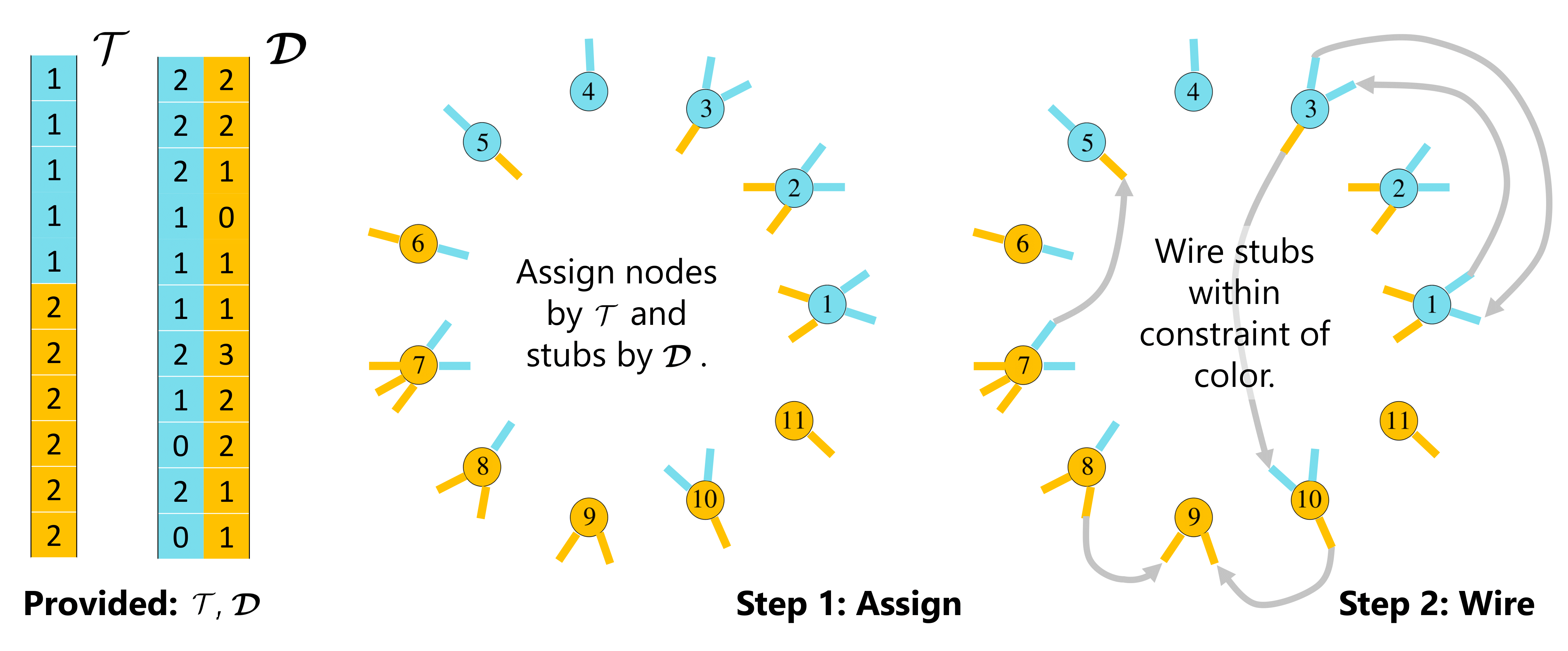}
    \caption{A schematic of the heterogeneous degree configuration model: a model of a random network maintaining the collection of heterogeneous degree sequences, $\boldsymbol{\mathcal{D}}$. A generative form of the model begins with $n$ isolate nodes with node type corresponding to $\mathcal{T}$. Colored stubs, which act as half-edges, are assigned according to each node's heterogeneous degree sequence and the adjacent nodes' type, provided by $\boldsymbol{\mathcal{D}}$. A stub is selected where the color of the stub selected indicates the color of the node to which it must be matched. The partnering stub is selected uniformly at random from the set of available partners, and the process is repeated until no stubs remain.}
    \label{fig:hdcm_flow_stub}
\end{figure}

\subsection{Defining Significance of Connectivity}
\label{sec:sig_of_connection}

We evaluate the significance of the connectivity between an arbitrary node of type $l$, $u^{[l]}\in V^{[l]}$, and an arbitrary subset of the node set, $\mathcal{B}\subseteq\mathcal{V}$, using a $p$-value derived from the reference distribution under the heterogeneous degree configuration model. Since each node maintains exactly one node type,  $\mathcal{B}$ can be partitioned into $K$ subsets: $\mathcal{B}=\cup_{k=1}^{K}{B^{[k]}}$ where $B^{[k]}$ denotes the set of type $k$ nodes in $\mathcal{B}$. To assess the significance of connectivity between $u^{[l]}$ and the set $\mathcal{B}$, we compare the observed number of connections between $u^{[l]}$ and nodes in $B^{[k]}$ for all $k\in\{1, \dots, K\}$ to the distribution under the HDCM. Let $x^{[k]}(u^{[l]} : \mathcal{B})$ denote the observed number of type $k$ nodes in $\mathcal{B}$ (equivalently, in $B^{[k]}$) adjacent to $u^{[l]}$ for $k\in\{1, \dots, K\}$. For comparison, let $X^{[k]}(u^{[l]} : \mathcal{B})$ represent the random variable for the number of type $k$ nodes in $\mathcal{B}$ adjacent to $u^{[l]}$ in a random network contrived from the $\text{HDCM}$ for $k\in\{1,\dots, K\}$. 

Under the HDCM, if the joint probability of attaining at least as many nodes in $B^{[k]}$ adjacent to $u^{[l]}$ as observed for all $k\in\{1, \dots, K\}$ is sufficiently small, then we say $u^{[l]}$ is \textit{well-connected} to the set $\mathcal{B}$ where the joint probability is expressed as
\begin{align}
    p_{\mathcal{B}}(u^{[l]}) &\vcentcolon= 
    \mathbb{P}\left(\bigcap_{k=1}^{K}\left\{X^{[k]}(u^{[l]} : \mathcal{B}) \geq x^{[k]}(u^{[l]} : \mathcal{B}) \right\}\right) \label{eqn:pbu} \\ &=
    \prod_{k=1}^{K} \mathbb{P}\left(X^{[k]}(u^{[l]} : \mathcal{B}) \geq x^{[k]}(u^{[l]} : \mathcal{B})\right) \label{eqn:pbu_ind}.
\end{align}
Small values of $p_\mathcal{B}(u^{[l]})$ are unlikely under the assumption that edges were randomly constructed under the $\text{HDCM}$, providing evidence that $u^{[l]}$ is well-connected to $\mathcal{B}$. In this way, $p_{\mathcal{B}}(u^{[l]})$ is reminiscent of a $p$-value for testing the hypothesis that $u^{[l]}$ is well-connected to nodes in $\mathcal{B}$, and if $p_\mathcal{B}(u^{[l]})$ is less than some prespecified $\alpha\in(0, 1)$, then we conclude that $u^{[l]}$ belongs to $\mathcal{B}$. Since the connections between different pairs of node types are independent in the HDCM, \eqref{eqn:pbu} is the product of $K$ simpler probability statements quantifying the connectivity between $u^{[l]}$ and each node type, as shown in \eqref{eqn:pbu_ind}. 

To compute $p_\mathcal{B}(u^{[l]})$, we must characterize the distribution of $X^{[k]}(u^{[l]} : \mathcal{B})$: the random number nodes in $B^{[k]}$ adjacent to $u^{[l]}$ under the HDCM. The exact distribution for $X^{[k]}(u^{[l]} : \mathcal{B})$ is difficult to attain since enumerating every network in the sample space is infeasible even for networks on the order of 100 nodes. It is possible to approximate the distribution by sampling networks via $\text{HDCM}(\mathcal{T}, \boldsymbol{\mathcal{D}})$, but this process can be memory intensive. Instead, we analytically characterize the random generative process for the $\text{HDCM}(\mathcal{T}, \boldsymbol{\mathcal{D}})$ and provide Theorem \ref{thm:asym_dist}, which describes the asymptotic behavior of $X^{[k]}(u^{[l]} : \mathcal{B})$ and provides the theoretical foundation for a reasonable approximation of $p_\mathcal{B}(u^{[l]})$.

\begin{setup}
Let $\{\mathcal{G}_n\}_{n\geq 1}$ denote a sequence of observed, heterogeneous networks where $\mathcal{G}_n=(\mathcal{V}_n, \mathcal{E}_n)$ and $\mathcal{V}_{n}=\{1, \dots, n\}$ with node types $\mathcal{T}_n$. The collection of heterogeneous degree sequences of $\mathcal{G}_n$ is denoted $\boldsymbol{\mathcal{D}}_n$.  Let $\{\mathcal{H}_n\}_{n\geq 1}$ denote a sequence of random, heterogeneous networks where $\mathcal{H}_n$ is constructed via $\text{HDCM}(\mathcal{T}_n, \boldsymbol{\mathcal{D}}_n)$. Let $\{\mathcal{B}_n\}_{n\geq 1}$ denote a sequence of sets of nodes where $\mathcal{B}_n\subseteq\mathcal{V}_n$, and let $c\geq 1$ be fixed. For each $n\geq 1$, let $V_{n}^{[k]}=\{v\in\mathcal{V}_n : v \text{ is of type } k\}$ and $V^{[k]}\cap V^{[l]}=\emptyset$ for all $k\neq l$ such that $\mathcal{V}_n=\cup_{k=1}^{K} V_{n}^{[k]}$. Furthermore, let $B_n^{[k]}=\{v\in\mathcal{B}_n : v\in V_{n}^{[k]}\}$ such that $\mathcal{B}_n = \cup_{k=1}^{K} B_{n}^{[k]}$. Let $u_{n}^{[l]}\in V^{[l]}_n$ denote a type $l$ node with a type $k$ degree of $c$, i.e. $d^{[k]}(u_n^{[l]})=c$, and let $F^{[k]}_{l,n}$ denote the empirical distribution of type $k$ degrees of type $l$ nodes in $\mathcal{V}_n$, i.e., the empirical distribution of $\{d^{[k]}(v) : v\in V^{[l]}\}$.
\end{setup}

\noindent Here we introduce two necessary assumptions. Assumption \ref{eqn:lttype_prop} states that the limiting proportion of type $k$ nodes is non-zero for $k\in\{1, \dots, K\}$. In particular, no node type vanishes as the network grows. Assumption \ref{eqn:nonzero_ktypedeg_mean} posits the existence of a limiting distribution of the type $k$ degrees of type $l$ nodes, for which there exists a finite, limiting mean type $k$ degree of type $l$ nodes for all $k,l\in\{1, \dots, K\}$. Informally speaking, the general affinity for nodes of type $k$ and $l$ to connect is limiting for all $k,l\in\{1, \dots, K\}$.

\begin{assumption}
\label{eqn:lttype_prop}

$|V_n^{[k]}|/|\mathcal{V}_n|\to\gamma^{[k]}$ as $n\to\infty$ for all $k\in\{1, \dots, K\}$ such that $\gamma^{[k]}\in(0, 1)$ and $\sum_{k=1}^{K}\gamma^{[k]}=1$.
\end{assumption}

\begin{assumption}
\label{eqn:nonzero_ktypedeg_mean}
\label{eqn:aconvdistn}
\label{eqn:aconvmean}

There exists a cumulative distribution function $F_{l}^{[k]}$ on $[0,\infty)$ with $$0 \leq \mu_{l}^{[k]}\vcentcolon=\int_{\mathbb{R}^{+}}{x\,dF_{l}^{[k]}(x)}<\infty,$$ such that $F^{[k]}_{l,n} \overset{d}{\to} F_{l}^{[k]}$ and $\mu_{l,n}^{[k]} \vcentcolon= \int_{\mathbb{R}^{+}}{x\,dF_{l,n}^{[k]}(x)} \to \mu_{l}^{[k]}$ as $n\to\infty$ for all $k,l\in\{1, \dots, K\}$. 
\end{assumption}

\begin{theorem}
\label{thm:asym_dist}
\noindent Under Assumptions \ref{eqn:lttype_prop} and \ref{eqn:nonzero_ktypedeg_mean}, if $\{X_n^{[k]}(u_n^{[l]} : \mathcal{B}_n)\}_{n\geq 1}$ denotes the sequence of random variables of the number of type $k$ nodes in $\mathcal{B}_n$ adjacent to $u_n^{[l]}$ in $\mathcal{H}_n$, then $d_{TV}\left(X_n^{[k]}(u_n^{[l]}, \mathcal{B}_n), Y_{l,n}^{[k]}(\mathcal{B}_n) \right)\to 0$ as $n\to\infty$, where $Y_{l,n}^{[k]}(\mathcal{B}_n)\sim \text{Binom}(c, p_{l,n}^{[k]}(\mathcal{B}_n))$ and
\begin{align}
p_{l,n}^{[k]}(\mathcal{B}_n) =
\frac{\sum_{w\in B_n^{[k]}}{
d^{[l]}(w)}}
{2^{\mathbb{I}(k=l)}|E_n^{[kl]}|}
. \label{eqn:psuccess}
\end{align}
\end{theorem}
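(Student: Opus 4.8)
The plan is to reduce the law of $X_n^{[k]}(u_n^{[l]} : \mathcal{B}_n)$ to a finite-population sampling problem by examining the stub-matching process of Section \ref{sec:hdcm} restricted to the type-$(k,l)$ subgraph, and then to pass from sampling without replacement to the binomial by a coupling as the pool of eligible partner stubs grows. Throughout I write $M_n = 2^{\mathbb{I}(k=l)}|E_n^{[kl]}|$ for the total number of color-$l$ stubs emanating from type-$k$ nodes and $S_n = \sum_{w\in B_n^{[k]}} d^{[l]}(w)$ for those emanating from $B_n^{[k]}$, so that $p_{l,n}^{[k]}(\mathcal{B}_n) = S_n/M_n$.

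First I would pin down the exact conditional law of the partners of $u_n^{[l]}$'s stubs. Node $u_n^{[l]}$ carries exactly $c = d^{[k]}(u_n^{[l]})$ stubs designated for type-$k$ partners (its color-$k$ stubs), each of which must be matched to a color-$l$ stub emanating from a type-$k$ node. Conditioning on the remainder of the matching and revealing the partners of these $c$ stubs one at a time, in the bipartite case $k\neq l$ the partners form a uniform ordered sample without replacement from the $M_n$ eligible stubs, of which $S_n$ are ``successes'' (incident to $B_n^{[k]}$); hence $X_n^{[k]}$ is exactly $\mathrm{Hypergeometric}(M_n, S_n, c)$, the number of successes. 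In the unipartite case $k=l$ the same description holds except that the eligible pool momentarily contains $u$'s own remaining stubs, so a stub of $u$ may pair with another stub of $u$ (a self-loop), or, if $u\in B_n^{[k]}$, the success set is perturbed by $O(1)$; both events have probability $O(c^2/M_n)$, so $X_n^{[k]}$ agrees with a $\mathrm{Hypergeometric}(M_n, S_n, c)$ variable off an event of that probability. (If one instead reads $X_n^{[k]}$ as counting distinct adjacent nodes rather than incident edges, the discrepancy is exactly a multi-edge event, again of probability $O(c^2/M_n)$, so the same bound applies.)

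Second I would bound the distance from the hypergeometric to the target binomial. Coupling sampling without replacement with sampling with replacement from the same urn of $M_n$ stubs ($S_n$ of them successes), the two sequences of $c$ draws coincide unless some already-drawn stub is redrawn; this collision event has probability at most $\binom{c}{2}/M_n$, and the with-replacement success count is exactly $\mathrm{Binom}(c, S_n/M_n)=\mathrm{Binom}(c, p_{l,n}^{[k]}(\mathcal{B}_n))$. Since total variation distance is non-increasing under the map sending a draw sequence to its success count, this yields $d_{TV}\big(\mathrm{Hypergeometric}(M_n,S_n,c),\,\mathrm{Binom}(c,p_{l,n}^{[k]}(\mathcal{B}_n))\big)\le \binom{c}{2}/M_n$.

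Finally I would combine the two bounds through the triangle inequality for $d_{TV}$, leaving a bound of order $c^2/M_n$, and verify $M_n\to\infty$. Here Assumptions \ref{eqn:lttype_prop} and \ref{eqn:nonzero_ktypedeg_mean} enter: writing $M_n = |V_n^{[k]}|\,\mu_{k,n}^{[l]}$ with $\mu_{k,n}^{[l]}$ the empirical mean type-$l$ degree of type-$k$ nodes, Assumption \ref{eqn:lttype_prop} gives $|V_n^{[k]}|\sim\gamma^{[k]}n\to\infty$ and Assumption \ref{eqn:nonzero_ktypedeg_mean} gives $\mu_{k,n}^{[l]}\to\mu_k^{[l]}$, so $M_n\to\infty$ in the non-degenerate regime $\mu_k^{[l]}>0$ (equivalently $\mu_l^{[k]}>0$, by the edge-counting identity $\gamma^{[l]}\mu_l^{[k]}=\gamma^{[k]}\mu_k^{[l]}$); with $c$ fixed, $c^2/M_n\to 0$, giving the claim. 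I expect the main obstacle to be the first step: justifying that, conditional on the rest of the configuration, the partners of $u$'s $c$ stubs really are a uniform sample without replacement, and carefully controlling the unipartite self-loop and multi-edge corrections so that they are genuinely $O(c^2/M_n)$ rather than merely $o(1)$. The hypergeometric-to-binomial coupling and the divergence of $M_n$ are comparatively routine once the assumptions are unpacked.
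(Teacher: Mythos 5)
Your proposal is correct in the regime it covers and follows essentially the same route as the paper's proof, with a cleaner packaging. The paper likewise works through the stub-matching construction restricted to the type-$(k,l)$ pool, revealing the partners of $u_n^{[l]}$'s $c$ type-$k$ stubs one at a time via indicators $A_i$ and bounding the conditional success probabilities $r_{l,i}^{[k]}(\mathcal{B}_n)$ uniformly over all histories within an interval of width $O\left(c/M_n\right)$ around $p_{l,n}^{[k]}(\mathcal{B}_n)$ --- this is precisely the sampling-without-replacement structure you identify, with your self-loop and $u_n^{[l]}\in\mathcal{B}_n$ corrections appearing there as the $\mathbb{I}(k=l)$ and $\mathbb{I}(u_n^{[l]}\in\mathcal{B}_n)$ terms in the numerators and denominators, and with the same final ingredient $2^{\mathbb{I}(k=l)}|E_n^{[kl]}|\sim\mu_l^{[k]}\gamma^{[l]}n\to\infty$ extracted from Assumptions \ref{eqn:lttype_prop} and \ref{eqn:nonzero_ktypedeg_mean}. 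Your version is if anything more quantitative: naming the exact hypergeometric law in the bipartite case (which is exact under the paper's edge-counting convention for multi-edges) and invoking the classical with/without-replacement coupling gives an explicit $\binom{c}{2}/M_n$ rate, whereas the paper leaves the passage from $\sup_{1\leq i\leq c}|r_{l,i}^{[k]}(\mathcal{B}_n)-p_{l,n}^{[k]}(\mathcal{B}_n)|\to 0$ to $d_{TV}\to 0$ implicit.

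The one concrete omission is the degenerate case: Assumption \ref{eqn:nonzero_ktypedeg_mean} permits $\mu_l^{[k]}=0$, and you explicitly restrict to the non-degenerate regime $\mu_l^{[k]}>0$. When $\mu_l^{[k]}=0$ the pool size $M_n$ need not diverge (it can even remain bounded, e.g.\ if nearly all type-$k$--type-$l$ edges are incident to $u_n^{[l]}$ itself), so your coupling bound $c^2/M_n$ genuinely fails there rather than merely becoming loose, and a separate argument is required. The paper disposes of this case with a short Markov-inequality argument showing $X_n^{[k]}(u_n^{[l]},\mathcal{B}_n)\to 0$ in probability and comparing to $\text{Binom}(c,0)$; to match the statement of Theorem \ref{thm:asym_dist}, which is asserted under the full range $0\leq\mu_l^{[k]}<\infty$, you would need to append such a treatment (and, as your own bounded-$M_n$ scenario suggests, this edge case is delicate, since $p_{l,n}^{[k]}(\mathcal{B}_n)=S_n/M_n$ is not forced toward zero there).
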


\noindent This theorem states that the total variation distance, $d_{TV}$, between the distribution of the number of type $k$ nodes in a subset of the node set adjacent to an arbitrary node of type $l$ and the distribution of a binomial random variable tends to zero as the network grows. Given this result, one can reasonably approximate the significance of connectivity between a node, $u^{[l]}$, and a set of nodes, $\mathcal{B}$, as $\prod_{k=1}^{K} \mathbb{P}\left(Y_{l}^{[k]}(\mathcal{B}) \geq x^{[k]}(u^{[l]} : \mathcal{B})\right)$. However, the parameter in \eqref{eqn:psuccess} depends on whether $u^{[l]}$ is an element of $\mathcal{B}$, capturing the possibility of $u^{[l]}$ connecting with itself when $u^{[l]}$ is in $\mathcal{B}$. This is unsatisfactory as it impacts the convergence of ECoHeN.  Therefore, in Corollary \ref{cor:asym_dist}, we provide an alternative specification of \eqref{eqn:psuccess} that disregards the possibility of $u^{[l]}$ connecting to itself. This result mirrors Theorem \ref{thm:asym_dist} since the density of self-loops in the HDCM tends to zero as the network becomes large.

\begin{corollary}
\label{cor:asym_dist}
Under Assumptions \ref{eqn:lttype_prop} and \ref{eqn:nonzero_ktypedeg_mean}, if $Y_{l,n}^{[k]}(u_n^{[l]}, \mathcal{B}_n)\sim \text{Binom}(c, p_{l,n}^{[k]}(u_n^{[l]}, \mathcal{B}_n))$ where
\begin{align}
p_{l,n}^{[k]}(u_n^{[l]}, \mathcal{B}_n) =
\frac{\left[\sum_{w\in B_n^{[k]}}{
d^{[l]}(w)}\right] - \mathbb{I}(k=l)\mathbb{I}(u_n^{[l]}\in\mathcal{B}_n)c}
{2^{\mathbb{I}(k=l)}|E_n^{[kl]}| - \mathbb{I}(k=l)c}, \label{eqn:psuccess_adj}
\end{align}
then $d_{TV}\left(X_n^{[k]}(u_n^{[l]}, \mathcal{B}_n), Y_{l,n}^{[k]}(u_n^{[l]}, \mathcal{B}_n) \right)\to 0$ as $n\to\infty$.
\end{corollary}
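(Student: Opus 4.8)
The plan is to derive Corollary \ref{cor:asym_dist} directly from Theorem \ref{thm:asym_dist} using the triangle inequality for the total variation distance, thereby reducing the problem to comparing two binomial laws that share the same number of trials. Abbreviate the binomial of Theorem \ref{thm:asym_dist} as $Y^{\mathrm{old}} \vcentcolon= Y_{l,n}^{[k]}(\mathcal{B}_n)$ and the binomial of the corollary as $Y^{\mathrm{new}} \vcentcolon= Y_{l,n}^{[k]}(u_n^{[l]}, \mathcal{B}_n)$, suppressing the index $n$ on the $Y$'s for readability. Then
\begin{equation*}
d_{TV}\!\left(X_n^{[k]}(u_n^{[l]}, \mathcal{B}_n), Y^{\mathrm{new}}\right) \leq d_{TV}\!\left(X_n^{[k]}(u_n^{[l]}, \mathcal{B}_n), Y^{\mathrm{old}}\right) + d_{TV}\!\left(Y^{\mathrm{old}}, Y^{\mathrm{new}}\right).
\end{equation*}
The first term on the right vanishes as $n \to \infty$ by Theorem \ref{thm:asym_dist}, so it suffices to show that $d_{TV}(Y^{\mathrm{old}}, Y^{\mathrm{new}}) \to 0$.

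Since $Y^{\mathrm{old}}$ and $Y^{\mathrm{new}}$ are both $\mathrm{Binom}(c, \cdot)$ with $c$ fixed, differing only in their success probabilities, I would invoke the standard coupling estimate $d_{TV}(\mathrm{Binom}(c, p), \mathrm{Binom}(c, q)) \leq c\,|p - q|$. This is obtained by writing each binomial as a sum of $c$ independent Bernoulli variables, coupling the $i$-th pair maximally so that they disagree with probability $d_{TV}(\mathrm{Bern}(p), \mathrm{Bern}(q)) = |p - q|$, and applying a union bound over the $c$ coordinates. Because $c$ is fixed, the task collapses to proving $\big|p_{l,n}^{[k]}(\mathcal{B}_n) - p_{l,n}^{[k]}(u_n^{[l]}, \mathcal{B}_n)\big| \to 0$.

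When $k \neq l$ both indicators $\mathbb{I}(k = l)$ vanish and the two success probabilities are identical, so the difference is exactly zero; only the case $k = l$ carries content. Writing $S \vcentcolon= \sum_{w \in B_n^{[k]}} d^{[l]}(w)$, $M \vcentcolon= 2|E_n^{[kk]}|$, and $a \vcentcolon= \mathbb{I}(u_n^{[l]} \in \mathcal{B}_n) \in \{0, 1\}$, a direct computation gives
\begin{equation*}
p_{l,n}^{[k]}(u_n^{[l]}, \mathcal{B}_n) - p_{l,n}^{[k]}(\mathcal{B}_n) = \frac{S - ac}{M - c} - \frac{S}{M} = \frac{c\,(S - aM)}{M(M - c)}.
\end{equation*}
Since $B_n^{[k]} \subseteq V_n^{[k]}$ forces $0 \leq S \leq M$, and $a \in \{0, 1\}$ forces $|S - aM| \leq M$, the difference is bounded in absolute value by $c/(M - c) = c/(2|E_n^{[kk]}| - c)$.

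It remains to confirm this bound tends to zero, that is, $|E_n^{[kk]}| \to \infty$. Here I would use Assumptions \ref{eqn:lttype_prop} and \ref{eqn:nonzero_ktypedeg_mean}: because $2|E_n^{[kk]}| = \sum_{v \in V_n^{[k]}} d^{[k]}(v) = |V_n^{[k]}|\,\mu_{k,n}^{[k]}$ with $|V_n^{[k]}| \to \infty$ and $\mu_{k,n}^{[k]} \to \mu_k^{[k]}$, the product diverges whenever $\mu_k^{[k]} > 0$, so the self-loop correction is asymptotically negligible and the conclusion follows. The one step I expect to require genuine care is the boundary case $\mu_k^{[k]} = 0$ that Assumption \ref{eqn:nonzero_ktypedeg_mean} explicitly permits, where $|E_n^{[kk]}|$ need not diverge; there one must argue that the type-$k$ subgraph on type-$k$ nodes is asymptotically trivial so that $Y^{\mathrm{old}}$ and $Y^{\mathrm{new}}$ degenerate in tandem and the correction again changes nothing in the limit. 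Everything else --- the triangle inequality, the binomial coupling bound, and the algebraic estimate --- is routine assembly.
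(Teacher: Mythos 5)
Your proof is correct, and it takes a genuinely different route from the paper's. The paper does not pass through a binomial-to-binomial comparison at all: it reruns the sequential stub-revelation argument from the proof of Theorem \ref{thm:asym_dist} directly against the adjusted parameter, bounding the conditional inclusion probabilities uniformly via $\sup_{1\leq i\leq c}\bigl|r_{l,i}^{[k]}(\mathcal{B}_n) - p_{l,n}^{[k]}(u_n^{[l]},\mathcal{B}_n)\bigr| \leq (c+1)/(2|E_n^{[kl]}|-c) \to 0$, and then invoking the same ``uniformly close conditionals imply TV convergence to the binomial'' machinery already used for the theorem. You instead treat Theorem \ref{thm:asym_dist} as a black box and compare only the two reference laws, via the metric triangle inequality for $d_{TV}$ and the standard coupling estimate $d_{TV}\bigl(\text{Binom}(c,p),\text{Binom}(c,q)\bigr) \leq c|p-q|$; your algebra giving $\bigl|p_{l,n}^{[k]}(\mathcal{B}_n) - p_{l,n}^{[k]}(u_n^{[l]},\mathcal{B}_n)\bigr| \leq c/(2|E_n^{[kk]}|-c)$ checks out (and the $k\neq l$ case is indeed vacuous since the indicators kill both corrections), with the denominator diverging because $2|E_n^{[ll]}| \sim \mu_l^{[l]}\gamma^{[l]}n$ under Assumptions \ref{eqn:lttype_prop} and \ref{eqn:nonzero_ktypedeg_mean} --- exactly the asymptotic equivalence the paper also relies on. Your approach is more modular and arguably cleaner: it isolates the self-loop correction as a pure perturbation of the success probability and yields the same $O(1/|E_n^{[kl]}|)$ rate, whereas the paper's redo of the revelation bound buys only that it stays entirely inside the framework already built for the theorem, avoiding the need to state the binomial coupling lemma. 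Regarding the boundary case $\mu_l^{[k]}=0$ that you flag as needing genuine care: the paper's own corollary proof sidesteps it in the same way, explicitly restricting to $k=l$ with $0<\mu_l^{[k]}<\infty$ and implicitly deferring the degenerate case to the Markov-inequality treatment at the end of the proof of Theorem \ref{thm:asym_dist}, so your explicit acknowledgment of that gap leaves you on the same footing as (and more transparent than) the published argument.
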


\noindent Provided Corollary \ref{cor:asym_dist}, one can then reasonably approximate the significance of connectivity between a node, $u^{[l]}$, and a set of nodes, $\mathcal{B}$, as $\hat{p}_\mathcal{B}(u^{[l]})=\prod_{k=1}^{K} \mathbb{P}\left(Y_{l}^{[k]}(u^{[l]}, \mathcal{B}) \geq x^{[k]}(u^{[l]} : \mathcal{B})\right)$, where $\hat{p}_\mathcal{B}(u^{[l]})$ is no longer dependent on whether $u^{[l]}$ is in $\mathcal{B}$, taking on the same value in either case and allowing us to characterize the convergence properties of ECoHeN in Section \ref{sec:extraction}. We provide a proof of Theorem \ref{thm:asym_dist} and Corollary \ref{cor:asym_dist} in Supplement B.

\subsection{ECoHeN Algorithm}
\label{sec:ecohen}

The ECoHeN algorithm comprises three operations: \textbf{Initialization}, \textbf{Extraction}, and \textbf{Refinement}. At the initialization step, a collection of seed sets of nodes is chosen, where $\mathcal{B}_0$ denotes a single seed set. Following, each seed set is iteratively updated according to the set of $\{\hat{p}_{\mathcal{B}_{i}}(u) : u\in\mathcal{V}\}$ until no nodes are included or excluded, where $\mathcal{B}_i$ denotes the set after $i$ updates. The resulting collection of non-empty sets of nodes represent the extracted communities and is denoted $\boldsymbol{\mathcal{C}}_T = \{\mathcal{C}_t\}_{t\in T}$, where $\mathcal{C}_t$ denotes a single extracted community. Lastly, the collection of extracted communities are refined according to a practitioner's preferences, such as desired community sizes and maximal pairwise overlap between communities. The refined communities are denoted $\boldsymbol{\mathcal{C}}_H$, where $H\subseteq T$. The full algorithm is presented in Algorithm \ref{algo:ecohen_algo}, with supplemental extraction routines provided in Supplement D. We detail each component of the algorithm in the following subsections: 

\begin{palgorithm}[htbp!]
    \centering
    \includegraphics[scale=0.98]{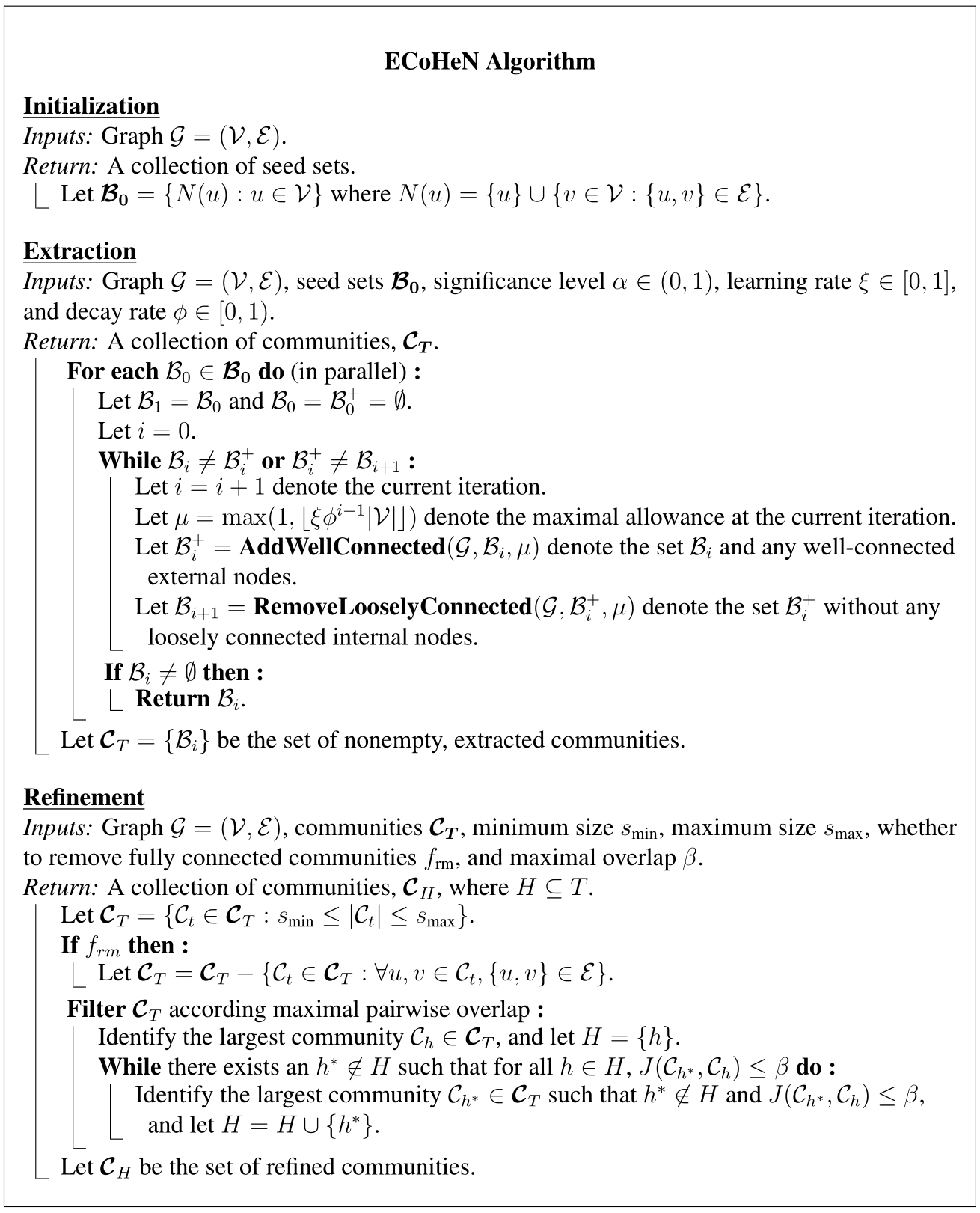}
    \captionof{palgorithm}{Pseudocode for the ECoHeN algorithm. We include pseudocode for the extraction routines AddWellConnected and RemoveLooselyConnected in Supplement D.}
    \label{algo:ecohen_algo}
\end{palgorithm}

\subsubsection{Initialization}

By default, the collection of seed sets, denoted $\boldsymbol{\mathcal{B}_0}$, consists of the neighborhood of each node in the observed network $\mathcal{G}=(\mathcal{V}, \mathcal{E})$. The neighborhood of a node $u\in\mathcal{V}$ is the set including $u$ and all nodes adjacent to $u$, that is, $N(u)=\{u\}\cup\{v\in\mathcal{V} : \{u, v\}\in\mathcal{E}\}$. Hence, the collection of seed sets considered is $\boldsymbol{\mathcal{B}_0}=\{N(u) : u\in \mathcal{V}\}$. There are other ways to define seed sets; for example, locally optimal neighborhoods has been used in other works \citep{gleich2012vertex, whang2013overlapping}.

\subsubsection{Extraction}
\label{sec:extraction}

Provided a prespecified significance level $\alpha\in(0, 1)$ and a seed set $\mathcal{B}_0\in \boldsymbol{\mathcal{B}_0}$, the extraction procedure iteratively updates the set through a two-step, dynamic procedure until no nodes are recommended for inclusion or exclusion. We denote the set of nodes after $i$ updates as $\mathcal{B}_i$ and the complement as $\mathcal{B}_i^c = \mathcal{V}-\mathcal{B}_i$. At each iteration, external nodes, being all $u\not\in\mathcal{B}_i$, which are well-connected to $\mathcal{B}_i$ are added to $\mathcal{B}_i$. Since this involves $|\mathcal{B}_i^{c}|$ tests, a false discovery rate (FDR) correction \citep{benjamini1995controlling} is evoked. The intermediate set containing $\mathcal{B}_i$ and any additions is denoted $\mathcal{B}_i^{+}$. Following, internal nodes, being all $u\in \mathcal{B}_i^{+}$, which are no longer well-connected to $\mathcal{B}_i^{+}$ are removed. Again, since this involves $|\mathcal{B}_i^{+}|$ tests, an FDR correction is evoked. The set following any removals is denoted $\mathcal{B}_{i+1}$ as this completes one iteration of the update procedure. The procedure continues in this way until no nodes are added or removed, that is, until $\mathcal{B}_{i}=\mathcal{B}_{i}^{+}=\mathcal{B}_{i+1}$.

\paragraph{Maximal Allowance} In ECoHeN's predecessor ESSC, all nodes are simultaneously considered for inclusion and exclusion at each iteration, yet this does not have to be the case. In fact, the choice of how many nodes' memberships (either in $\mathcal{B}_i$ or in $\mathcal{B}_i^{c}$) are updated can have major implications on whether the update procedure converges necessarily and the number of iterations required until convergence. We define the \textit{maximal allowance} at iteration $i$, denoted $\mu_i$, to be the maximum number of nodes permitted into $\mathcal{B}_i$ and the maximum number permitted out of $\mathcal{B}_i^{+}$ at iteration $i$. If $\mu_i=|\mathcal{V}|$ for all $i$, the update routine is relatively fast but can result in \textit{cycles}: the infinite alternation between two sets of nodes (problematic in \citet{wilson2014testing} and \citet{bodwin2018testing}). If $\mu_i=1$ for all $i$, the update routine converges but may take unnecessarily many iterations to converge. As such, we propose initializing the maximum allowance to a large value and then progressively decreasing it to guarantee convergence while reducing the number of iterations. 
We propose defining the maximal allowance using an exponential decay function parameterized by an initial learning rate, $\xi\in[0, 1]$, and a decay rate, $\phi\in[0, 1]$:
\begin{equation}
\label{eqn:max_allowance}
    \mu_i=\max(1, \lfloor \xi\phi^{i-1} |\mathcal{V}| \rfloor).
\end{equation}
The learning rate controls the maximal allowance on the first iteration, and the decay rate controls the maximal allowance thereafter. Together, $\xi$ and $\phi$ control the scale of updates where smaller values of $\xi$ and $\phi$ necessitate small, micro-level changes to $\mathcal{B}_i$ at each iteration, and larger values of $\xi$ and $\phi$ allow for larger, macro-level changes to $\mathcal{B}_i$ at early iterations. The outer maximum operation in \eqref{eqn:max_allowance} ensures that if the number of nodes dictated by the floor function reaches zero before convergence, then one node is permitted to transition into and out of $\mathcal{B}_i$ until convergence. Setting $\xi=\phi=1$ ($\xi=\phi=0$) is equivalent to setting $\mu_i=|\mathcal{V}|$ ($\mu_i=1$) for all $i$; each are problematic as previously discussed. 

We first characterize the convergence properties of the extraction procedure in Theorem \ref{thm:extraction_conv} before discussing the practical implications.

\begin{setup}
Let $\mathcal{G}=(\mathcal{V}, \mathcal{E})$ denote an observed, heterogeneous network. In the extraction procedure, let $\hat{p}_\mathcal{B}(u^{[l]})=\prod_{k=1}^{K} \mathbb{P}\left(Y_{l}^{[k]}(u^{[l]}, \mathcal{B}) \geq x^{[k]}(u^{[l]} : \mathcal{B})\right)$ for any set $\mathcal{B}$  by Corollary \ref{cor:asym_dist} where $u^{[l]}$ is an arbitrary node of type $l$ and $Y_{l}^{[k]}(u^{[l]}, \mathcal{B})\sim \text{Binom}(d^{[k]}(u^{[l]}), p_{l}^{[k]}(u^{[l]}, \mathcal{B}))$.
\end{setup}

\begin{theorem}
\label{thm:extraction_conv}
Let $\alpha\in(0, 1)$, $\xi\in[0, 1]$, and $\phi\in[0, 1)$ be fixed constants. Suppose $\mathcal{B}_0\subseteq\mathcal{V}$ denotes a seed set for the extraction procedure. If there exists a $j$ such that $|\mathcal{B}_i| < |\mathcal{V}|/2$ for all $i\geq j$, the extraction procedure will not cycle.
\end{theorem}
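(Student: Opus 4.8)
The plan is to argue by contradiction, exploiting two features of the procedure: that $\phi<1$ eventually forces at most one addition and one removal per iteration, and that by Corollary~\ref{cor:asym_dist} the value $\hat{p}_{\mathcal{B}}(u)$ does not depend on whether $u\in\mathcal{B}$. First I would pass to a tail of the iteration. Since $\phi\in[0,1)$ we have $\xi\phi^{i-1}|\mathcal{V}|\to 0$, so there is an index $i^{\ast}$ with $\mu_i=\max(1,\lfloor \xi\phi^{i-1}|\mathcal{V}|\rfloor)=1$ for all $i\ge i^{\ast}$. Setting $N=\max(i^{\ast},j)$, every step with $i\ge N$ inserts at most one node when forming $\mathcal{B}_i^{+}$ from $\mathcal{B}_i$ and deletes at most one node when forming $\mathcal{B}_{i+1}$ from $\mathcal{B}_i^{+}$, while $|\mathcal{B}_i|<|\mathcal{V}|/2$. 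A cycle is an infinite alternation between two sets, so if the procedure cycles it alternates between two distinct sets for all large $i$, and I may assume this alternation lies entirely in the tail.

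Next I would pin down the structure of such an alternation under $\mu_i=1$. Each elementary step (an add-step $\mathcal{B}_i\to\mathcal{B}_i^{+}$, which can only enlarge the set, or a remove-step $\mathcal{B}_i^{+}\to\mathcal{B}_{i+1}$, which can only shrink it) changes the current set by at most one node, so the two alternating sets differ by exactly one node; write them as $R$ and $R\cup\{x\}$, where the pre-add set $R$ is an iterate $\mathcal{B}_i$ and hence $|R|<|\mathcal{V}|/2$. Because an add-step cannot shrink and a remove-step cannot grow, the transition $R\to R\cup\{x\}$ is an add-step inserting $x$ (evaluated against the reference set $R$), and $R\cup\{x\}\to R$ is a remove-step deleting $x$ (evaluated against $R\cup\{x\}$); this single description also covers the degenerate case in which $x$ is added and removed within one iteration. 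By Corollary~\ref{cor:asym_dist} the two reference sets differ only by $x$ itself, so $\hat{p}_{R}(x)=\hat{p}_{R\cup\{x\}}(x)$, a common value I denote $p_x$.

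It then remains to contradict the coexistence of the two surviving requirements: that $p_x$ is declared significant by the add-step's FDR correction, run on the $|R^{c}|$ external tests $\{\hat{p}_{R}(w):w\in R^{c}\}$ at level $\alpha$, yet is declared insignificant by the remove-step's FDR correction, run on the $|R|+1$ internal tests $\{\hat{p}_{R\cup\{x\}}(w):w\in R\cup\{x\}\}$ at level $\alpha$. Here the hypothesis $|\mathcal{B}_i|<|\mathcal{V}|/2$ enters decisively: it gives $|R|\le(|\mathcal{V}|-1)/2$, hence $|R^{c}|=|\mathcal{V}|-|R|\ge |R|+1$, so the add-step performs at least as many tests as the remove-step. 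With more hypotheses, the add-step's Benjamini--Hochberg rejection threshold is the more stringent one, so I would show that clearing the add threshold forces $p_x$ to clear the remove-step's (more lenient) keep threshold as well; then $x$ cannot have been removed, contradicting the alternation. This would rule out any two-set cycle in the tail, and therefore any cycle at all.

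The main obstacle is this last comparison: deciding Benjamini--Hochberg significance of the same value $p_x$ across two different collections of tests. The size hypothesis controls only the counts $|R^{c}|$ and $|R|+1$, whereas the two steps compare $p_x$ against distinct empirical distributions of p-values, namely the external p-values relative to $R$ versus the internal p-values relative to $R\cup\{x\}$. The crux is to show the induced thresholds are correctly ordered, which I expect to require a monotonicity property of the BH threshold in the number of tests together with a careful accounting of how the rank of $p_x$ in the add-collection bounds its rank in the keep-collection; this is where the bulk of the technical work, and the real use of $|\mathcal{B}_i|<|\mathcal{V}|/2$, should reside.
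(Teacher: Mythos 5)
Your plan reproduces the paper's argument almost step for step: pass to the tail where $\mu_i=1$, exploit the membership-invariance $\hat{p}_{R}(x)=\hat{p}_{R\cup\{x\}}(x)$ from Corollary \ref{cor:asym_dist}, and derive a contradiction from the two FDR corrections using the count inequality $q_{\text{int}}<q_{\text{ext}}$, which is exactly where the paper spends the hypothesis $|\mathcal{B}_i|<|\mathcal{V}|/2$. The paper frames the reduction slightly differently (it argues that a node added at iteration $j$ cannot be removed within the same iteration, rather than analyzing a two-set alternation), but the engine is identical. The crucial difference is at the step you defer: the paper resolves the cross-collection Benjamini--Hochberg comparison \emph{by assertion}, claiming that because the raw $p$-values agree and there are fewer internal than external tests, the adjusted $p$-value satisfies $\tilde{p}_{\mathcal{B}_j^{+}}(v)<\tilde{p}_{\mathcal{B}_j}(v)$, whence $\alpha<\tilde{p}_{\mathcal{B}_j^{+}}(v)<\tilde{p}_{\mathcal{B}_j}(v)\leq\alpha$. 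None of the rank accounting you anticipate appears in the paper.

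Your caution at that step is the genuine gap in your proposal --- you never carry the comparison out --- and it is warranted, because the monotonicity you hope for (``more hypotheses implies a more stringent BH threshold'') is false without further structure: BH-adjusted $p$-values depend on the entire empirical distribution of the collection, not merely its size. Concretely, take $\alpha=0.1$; let the external collection be ten values, nine equal to $0.0999$ and $p_x=0.1$, so at rank $k=10$ we have $p_{(10)}=0.1\leq\alpha\cdot 10/10$ and $x$ is declared significant (added); let the internal collection be the five values $0.1,0.5,0.6,0.7,0.8$, so the adjusted value of $p_x$ is $\min_{k}\,5\,p_{(k)}/k=0.5>\alpha$ and $x$ is removed, despite the internal collection being half the size. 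Whether such configurations are realizable by the algorithm's binomial tests is unclear, but neither your sketch nor the paper supplies the structural constraint needed to exclude them --- so the ``bulk of the technical work'' you correctly locate is in fact absent from the published proof as well. One smaller, repairable oversight in your structural analysis: under $\mu_i=1$ a two-set alternation of the iterates $\{\mathcal{B}_i\}$ need not differ by a single node; a swap cycle (add $x$, remove $y\neq x$; then add $y$, remove $x$) alternates between sets with symmetric difference $\{x,y\}$. The same FDR contradiction applied to $x$ across consecutive iterations still covers it, since the remove step at iteration $j+1$ is again evaluated against $S\cup\{x\}$, but your claim that the alternating sets differ by exactly one node, as stated, misses this case.
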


\noindent Based on Theorem \ref{thm:extraction_conv}, setting $\phi<1$ is enough to guarantee ECoHeN will never cycle between two sets, no matter the choice of $\xi$. While we cannot prove theoretically that ECoHeN will not alternate between more than two sets (e.g., a three-set cycle), this phenomena has not been observed empirically. When $\xi=1$ and $\phi<1-\epsilon$ for small $\epsilon>0$, the maximal allowance is relatively large for many early iterations, often prompting early convergence compared to $\xi=\phi=0$, while simultaneously guaranteeing convergence. The impact of $\xi$ and $\phi$ on the quantity and quality of communities found via simulation is documented in Section \ref{sec:effects_of_choices}. We provide a proof of Theorem \ref{thm:extraction_conv} in Supplement B.

\subsubsection{Refinement}
\label{sec:refinement}

The extraction procedure identifies a collection $\boldsymbol{\mathcal{C}}_T = \{\mathcal{C}_t\}_{t\in T}$ of communities based on the collection of seed sets $\boldsymbol{\mathcal{B}_0}$. In many applications, only communities with certain properties are considered desirable. For example, the DREAM challenge, described in \citet{choobdar2018open}, required all communities submitted for competition be non-overlapping and contain $3$-$100$ members. In other settings, community discovery is used as an aid for link prediction \citep{soundarajan2012using}, in which case fully connected communities are not of interest. The refinement process reduces the collection of discovered communities to a subset $\boldsymbol{\mathcal{C}}_H$, $H\subseteq T$, of communities satisfying the requested constraints.

While the constraints on size and connectedness are easily implemented, refining a collection of communities based on overlap requires more care. Let $\beta\in[0, 1]$ denote a user specified maximum pairwise overlap between communities based on the Jaccard similarity measure, denoted $J$. The refinement process begins by identifying the largest community $\mathcal{C}_{h}$, $h\in T$, and letting $H=\{h\}$. Following, we identify the largest community $\mathcal{C}_{h^*}$, such that $h^*\not\in H$ and $J(\mathcal{C}_{h^*}, \mathcal{C}_{h})\leq\beta$ for all $h\in H$, and set $H=H\cup\{h^{*}\}$. The process continues until no additional communities maintain the overlap constraint dictated by $\beta$, i.e., until for all $h^*\not\in H$ there exists an $h\in H$ such that $J(\mathcal{C}_{h^*}, \mathcal{C}_{h}) > \beta$. The refinement procedure is similar to that of \citet{wilson2017community}.

\subsection{Parameter Choices}
\label{sec:effects_of_choices}

ECoHeN is implemented as an \texttt{R} package \texttt{ECoHeN} \citep{gibbs2022ecohengit} available on \href{https://github.com/ConGibbs10/ECoHeN}{GitHub} with the full algorithm provided in pseudocode in Supplement D. The extraction procedure is implemented in \texttt{C++} for efficiency with parallelized extractions across initial seed sets. ECoHeN allows the user to specify the significance level $\alpha$, learning rate $\xi$, and decay rate $\phi$. A brief description of each parameter, the parameter ranges, and default recommendations are provided in Table \ref{tbl:parameter_settings}. The significance threshold, $\alpha$, controls how conservative one would like to be in defining a community; smaller values of $\alpha$ result in fewer, smaller communities which tend to be incredibly densely connected compared to the rest of the network. Larger values of $\alpha$ result in more, larger communities which are less dense.

\begin{table}[htb!]
\centering
\begin{tabular}{p{0.19\textwidth}p{0.075\textwidth}p{0.655\textwidth}}
\toprule
Parameter & Default & Description \\
\midrule
\begin{tabular}[t]{l} \hspace{-0.75em} Significance level \\ 
\hspace{0.25em} \textbullet \hspace{0.25em} $\alpha\in(0, 1)$
\end{tabular}
& $0.10$
& Determines whether $\hat{p}_\mathcal{B}(u)$ is small (large) enough to justify the inclusion (exclusion) of node $u$ to the set $\mathcal{B}$ for each $u$ at each iteration of the extraction procedure. \\
\begin{tabular}[t]{l} \hspace{-0.75em} Learning rate \\ 
\hspace{0.25em} \textbullet \hspace{0.25em} $\xi\in[0, 1]$
\end{tabular} 
& $1$
& Controls the maximal allowance on the first iteration of the extraction procedure; $\xi=1$ allows for large changes to $\mathcal{B}$ on the first iteration. \\
\begin{tabular}[t]{l} \hspace{-0.75em} Decay rate \\ 
\hspace{0.25em} \textbullet \hspace{0.25em} $\phi\in[0, 1]$
\end{tabular}
& 0.99 
& Controls the maximal allowance after the first iteration of the extraction procedure; $\phi=1-\epsilon$ for small $\epsilon$ allows for large changes to $\mathcal{B}$ for many iterations, often reducing the number of iterations until guaranteed convergence. \\
\bottomrule
\end{tabular}
\caption{Descriptions and notation for each parameter along with default suggestions. Defaults are set to speed up the extraction procedure with guaranteed convergence and minimal loss in the power to discover communities.}
\label{tbl:parameter_settings}
\end{table}

The learning and decay rates control the maximum allowance at each iteration of the extraction procedure and should be set together. If $\xi=\phi=0$, then the extraction procedure provides the greatest resolution in identifying a single community from background noise (see Supplement C); however, there are a number of downsides associated with this setting, including the extraction of many communities with substantial pairwise overlap, many required iterations before convergence, and most notably, the identification of communities in truly random networks (see Supplement C). 

When $\xi=\phi=1$, the extraction procedure often requires fewer iterations until convergence and results in fewer communities with less pairwise overlap; however, convergence of the algorithm is not guaranteed. Notably, when $\xi=\phi=1$, densely connected communities of interest may be unidentified as the extraction procedure may cycle between densely connected sets of nodes until a maximum number of iterations is reached and the extraction procedure is terminated. 

Setting $\xi=1$ and $\phi<1$ guarantees that ECoHeN will converge and resolves issues with identifying communities in random networks, allowing ECoHeN to escape the low conductance seed sets; however, communities may still have substantial pairwise overlap. The problem is abated for $\phi$ closer to one which is also shown in Supplement C, whereby the algorithm is best able to identify a single community from background noise when $\xi=1$.  Based on these findings, the default choice for the learning and decay rates are $\xi=1$ and $\phi=0.99$, respectively. The refinement procedure further reduces the degree of pairwise overlap in the final set of identified communities when $\beta<1$.

\section{Simulation Study}
\label{sec:simulation}

In this section, we investigate the conditions for which ECoHeN identifies simulated community structure. We leverage an extension of the classic stochastic block model, referred to here as the heterogeneous stochastic block model (HSBM), to generate the heterogeneous networks under study. Described in \citet{zhang2020modularity} for two node types and implemented in \citet{gibbs2022ecohengit} for $K$ node types, the HSBM is a flexible framework for generating networks with numerous node types and is detailed in Supplement C. In this study, we consider networks with 500 red (i.e., type one) and 500 blue (i.e., type two) nodes split between a background block and a high connectivity block (HCB). Nodes are assigned to blocks according to the parameter $p$ where connections between nodes are determined stochastically according to parameters $\{b, r_{11}, r_{22}, r_{12}\}$. We detail these parameters in the following paragraph using Figure \ref{fig:motivate_hsbm} to motivate their definition.

\begin{figure}[h!]
    \centering
    \includegraphics[scale=0.80]{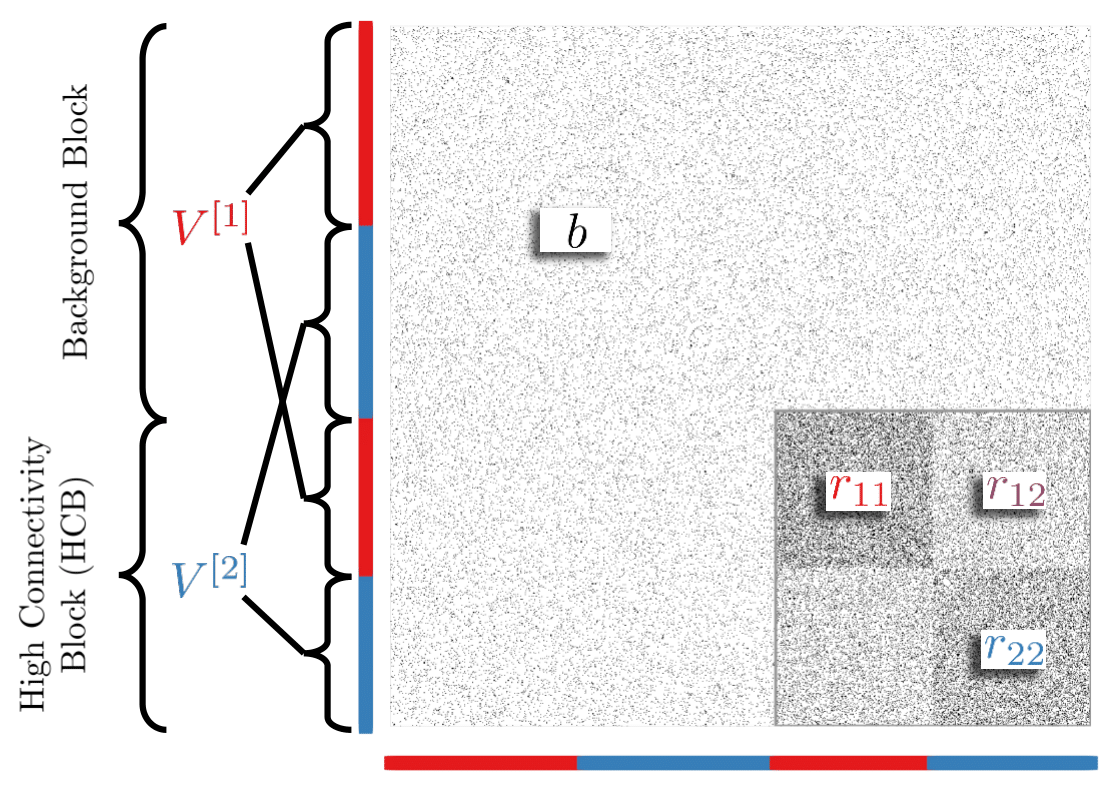}
    \caption{The adjacency matrix of a network generated under the HSBM with parameters $p=0.45$, $b=0.05, r_{11}=0.25, r_{22}=0.20$, and $r_{12}=0.10$. Node type is illustrated by the colored bars adorning the axes, and links are illustrated by black dots. Nodes are assigned to blocks according to $p$ where edges are sampled independently according to Bernoulli probabilities. Two nodes connect with probability $b$ if each are in the background block or do not share a block. The parameter $r_{ij}$, $1\leq i \leq j\leq 2$, provides the additive increase in the rate of connection between nodes of type $i$ and $j$ if each node is in the HCB. Connections between nodes in the HCB are bordered in a black square. When $r_{ij}>0$ for all $1\leq i\leq j\leq 2$, the network embodies heterogeneous community structure with one community composed of red and blue nodes. When $r_{ii}>0$ for $i\in\{1, 2\}$ and $r_{12}=0$, the network embodies homogeneous community structure with two communities: one composed of red nodes, and one composed of blue nodes. When $r_{ij}=0$ for all $1\leq i\leq j\leq 2$, the network is an ER network with no underlying community structure.}
    \label{fig:motivate_hsbm}
\end{figure}

Each network under study is generated from the HSBM with parameters $\{b, r_{11}, r_{22}, r_{12}\}$. The parameter $p$ is the proportion of nodes of each type assigned to the HCB. A larger $p$ corresponds to a larger community size. Connections between nodes are sampled according to Bernoulli random variables with expectation dependent on the block assignment and node type of each node. The probability of connection between two nodes is $b$, interpreted as the \textit{background rate}, if each node be a member of the background block or members of different blocks (i.e., one in the background and one in the HCB). The probability of connection between two nodes in the HCB is $b+r_{ij}$ where $r_{ij}$ provides the additive increase in the probability of connection between nodes of type $i$ and $j$. Figure \ref{fig:motivate_hsbm} is constructed with parameters $p=0.45$, $b=0.05$, $r_{11}=0.25$, $r_{22}=0.20$, and $r_{12}=0.10$. 

We wish to characterize ECoHeN's ability to identify two classes of community structure: heterogeneous and homogeneous community structure. The network presented in Figure \ref{fig:motivate_hsbm} embodies heterogeneous community structure provided $r_{ij}>0$ for all $1\leq i\leq j\leq 2$. In particular, when $r_{12}>0$, nodes of different type in the HCB have a higher propensity to connect compared to the background, implying the existence of one heterogeneous community composed of both red and blue nodes. When $r_{ii}>0$ for $i\in\{1, 2\}$ and $r_{12}=0$, nodes of different type in the HCB connect at the same rate as the background, implying the existence of two homogeneous communities: one composed of red nodes and one composed of blue nodes. When $r_{ij}=0$ for all $1\leq i\leq j\leq 2$, the network is a heterogeneous analog of an Erdős-Rényi-Gilbert (ER) network \citep{gilbert1959random, erdHos1960evolution} with probability of connection $b$. ECoHeN's ability to assign all nodes to the background in ER networks is explored in Supplement C. 

For each class of community structure, we consider a range of HCB sizes and network connectivity patterns. Three community discovery methods designed for homogeneous networks (i.e., ESSC \citep{wilson2014testing}, Infomap \citep{rosvall2009map}, and Walktrap \citep{pons2006computing}) and two community discovery methods designed for heterogeneous networks (i.e., ECoHeN and ZCmod \citep{zhang2020modularity}) are applied. While no community discovery method is universally optimal, Infomap and Walktrap tend to partition a network into smaller, more dense modules than other canonical community detection methods \citep{smith2020guide} and outperform many competing methods for a variety of benchmarks \citep{javed2018community}. ESSC is a special case of ECoHeN when the number of node types is one, i.e. the network is homogeneous, justifying its inclusion. To our knowledge, ZCmod is the only existing community detection designed to identify heterogeneous community structure. We ignore node type when applying ESSC, Infomap, and Walktrap on the simulated networks.

One-hundred networks are generated for a given simulated condition and are referred to as replicates. For each replicate, we apply the aforementioned community discovery methods and record the respective set of discovered communities. For each method and each replicate, we compute the maximum Jaccard similarity measure between the underlying community structure and the set of discovered communities. A value near one (zero) indicates the community was near perfectly identified (not identified) by the method. ECoHeN, ESSC, and ZCmod are implemented in the \texttt{ECoHeN} package \citep{gibbs2022ecohengit}, whereas Infomap and Walktrap are implemented in the \texttt{igraph} package \citep{csardi2006igraph} of the \texttt{R} programming language \citep{rcore2022programming}.

\subsection{Heterogeneous Community Structure}

To generate heterogeneous networks with heterogeneous community structure, we fix $b=0.05$, and let $r_{ii}\in(0.15, 0.20, 0.25, 0.30)$ for $i\in\{1, 2\}$ and $r_{12}\in(0.025, 0.05, 0.075)$. We allow the proportion of nodes assigned to the HCB to vary according to $p\in (0.05, 0.10, 0.15, 0.20)$. Hence, we consider a total of 192 simulated conditions. Random networks under these settings have heterogeneous community structure, particularly one community composed of red nodes and blue nodes.

\begin{figure}[htbp!]
    \centering
    \includegraphics[width=\textwidth]{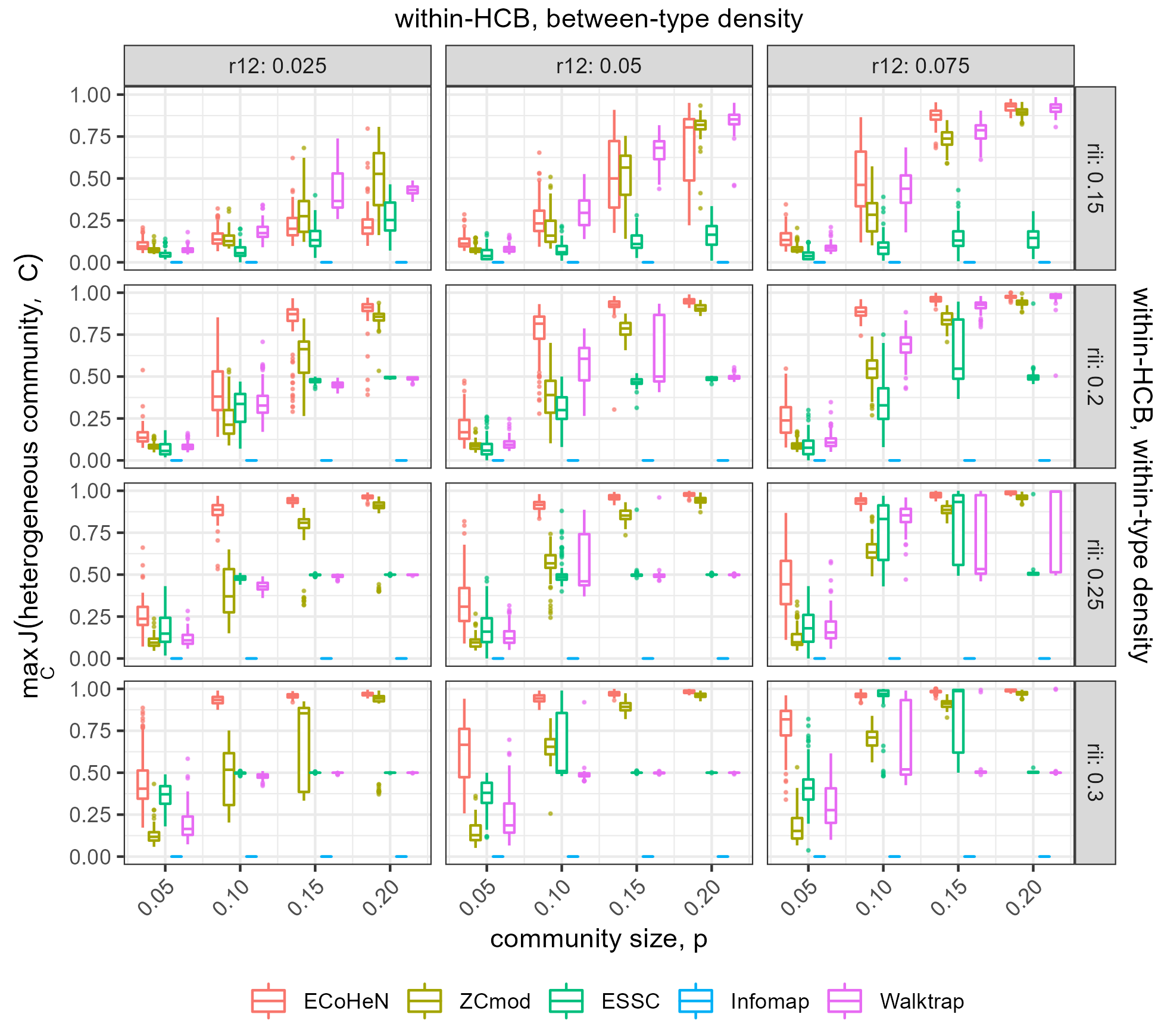}
    \caption{Each method's ability to identify the heterogeneous community is shown as a function of the size of the heterogeneous community. As the density of within-type and between-type links increases, ECoHeN identifies the heterogeneous community with increasingly better precision and marked improvements for small, heterogeneous communities. In nearly all simulated conditions, ECoHeN performs as well or better than ZCmod. The performance of ESSC and Walktrap is highly varied compared to ECoHeN and ZCmod, often varying between 0.5 and 1 (e.g., $r_{12}=0.075$ and $r_{ii}=0.25$). Unlike ECoHeN and ZCmod, these methods fail to account for node type, ergo fail to recognize  the heterogeneous community for some replicates and instead identify the two homogeneous subsets of the HCB.}
    \label{fig:best_jaccard_multicolor_ecohen}
\end{figure}

Figure \ref{fig:best_jaccard_multicolor_ecohen} shows the results for identifying the heterogeneous community when $r_{11}=r_{22}$. ECoHeN and ZCmod's ability to identify the heterogeneous community notably improves as the within-HCB, within-type density (i.e., $b + r_{ii}$) increases \emph{and} the within-HCB, between-type density (i.e., $b + r_{ij}$) increases. On the other hand, ESSC and Walktrap's ability to identify the heterogeneous community improves for increasing $r_{ii}$ and $r_{12}$ but with more variability and less precision compared to ECoHeN and ZCmod. In general, Infomap fails to find any communities, consistently returning a trivial partition of 1000 communities each containing one node. ECoHeN performs as well or better than ZCmod at each simulated condition, consistently outperforming ZCmod at identifying small, heterogeneous communities (i.e., when $p$ is small). Since ZCmod is a modularity optimization method, it appears to suffer from a known resolution limit characteristic of modularity optimization methods \citep{fortunato2007resolution}, struggling to recover small communities with a reasonable degree of accuracy. For larger values of $p$, ECoHeN performs similarly to ZCmod in its ability to recover the heterogeneous community. A broad range of simulated conditions are considered and presented in Supplement C.

For many settings (e.g., $r_{12}=0.075$ and $r_{ii}=0.25$), the maximum Jaccard for ESSC and Walktrap will vary from 0.5 to one with the median often equal to 0.5 or one. In these settings, ESSC and Walktrap identify the heterogeneous community for some replicates (resulting in a Jaccard of one) and the homogeneous communities for other replicates (resulting in a Jaccard of 0.5). On the other hand, ECoHeN and ZCmod consistently identify the heterogeneous community in these settings, recognizing that the within-HCB, between-type density is larger than the background rate (e.g., 2.5 times larger when $r_{12}=0.075$), albeit much less than the within-HCB, within-type density. By comparison, ESSC and Walktrap will only identify the heterogeneous community when the overall density of the HCB is sufficiently high because the between-type density is sufficiently high.

\subsection{Homogeneous Community Structure}

To generate heterogeneous networks with homogeneous community structure, we fix $b=0.05$, and let $r_{ii}\in(0.15, 0.20, 0.25, 0.30)$ for $i\in\{1, 2\}$ and $r_{12}=0$. Again, we allow the proportion of nodes assigned to the HCB to vary according to $p\in (0.05, 0.10, 0.15, 0.20)$. Random networks under these settings have homogeneous community structure, particularly two communities: one composed of red nodes, i.e. the red community, and one composed of blue nodes, i.e. the blue community. We start by investigating each method's ability to identify the red community when $r_{11}\in(0.20, 0.25, 0.30)$ and $r_{22}=0.25$ for a total of 36 simulated conditions.

\begin{figure}[htb!]
    \centering
    \includegraphics[width=\textwidth]{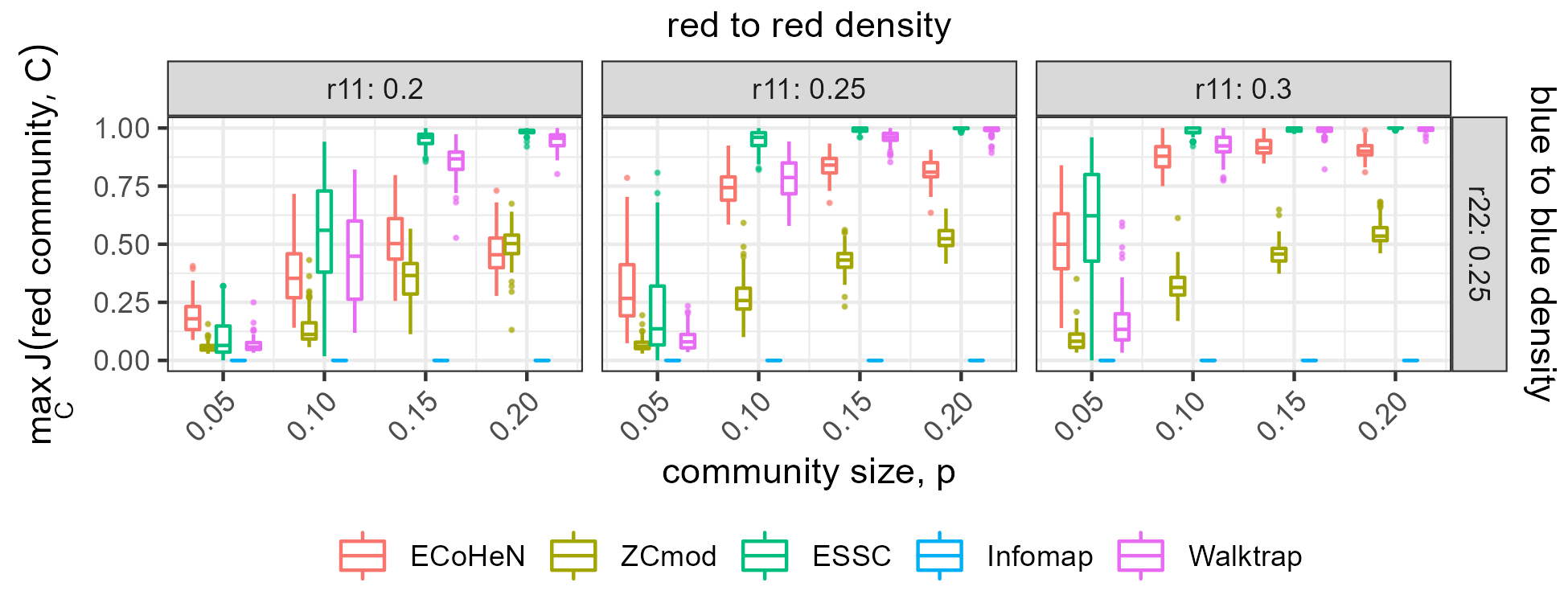}
    \caption{Each method's ability to identify the red, homogeneous community is shown for varying red community sizes $p$. As the density of red-to-red links increases, ECoHeN identifies the red community with increasingly better precision and is notably better when the  red community is small. ECoHeN can recover both homogeneous and heterogeneous community structure; however, the tradeoff for this functionality is a reduction in power for identifying homogeneous communities. Nevertheless, ECoHeN's ability to recover the homogeneous community far surpasses ZCmod, which requires each community maintain at least one node of each node type.}
    \label{fig:best_jaccard_blue_ecohen}
\end{figure}

Figure \ref{fig:best_jaccard_blue_ecohen} shows the results for identifying the red community when the density of blue-to-blue links in the HCB is fixed according to $r_{22}=0.25$. As the density of red-to-red links in the HCB, controlled by $r_{11}$, increases, ECoHeN can identify the red community with increasingly better precision and with marked improvements for small, homogeneous communities (i.e., when $p$ is small). There are no such improvements from ZCmod which requires each discovered community maintain at least one node of each type. At the same time, ESSC and Walktrap consistently outperform ECoHeN and ZCmod at identifying homogeneous community structure. This is not surprising considering these methods identify communities irrespective of node type. While ECoHeN is designed to identify both homogeneous and heterogeneous community structure, the tradeoff for this functionality is a reduction in power for identifying homogeneous communities.  This is unsurprising as ECoHeN considers both same-type and between-type edges simultaneously, rather than just same-type edges, looking for evidence of excess connectivity. 

Comparing methods designed for homogeneous networks, ESSC performs better or similarly to Walktrap at identifying dense, homogeneous communities. Infomap continues to return a trivial partition of 1000 communities containing one node each, incapable of identifying a community amongst background noise. The results over all simulated conditions, provided in Supplement C, demonstrate that the density of blue-to-blue links in the HCB, controlled by $r_{22}$, does not impact any method's ability to identify the red community. Furthermore, the results for identifying the blue community are provided in Supplement C along with a broad range of simulated conditions.

\section{Empirical Study}
\label{sec:results}

To illustrate the utility of ECoHeN in practice, we extract communities from the political blogs network of \citet{adamic2005political}. This iconic network consists of political blogs (represented as nodes) and the hyperlinks between them (represented as undirected edges). Collected shortly after the 2004 U.S. presidential election, the largest connected component of the political blogs network consists of 1222 blogs and 16,714 links. As seen in Figure \ref{fig:heterogeneous_net_examples}\textcolor{blue}{a}, blogs were classified according to their political ideology based on a text analysis of their content, where the 636 red nodes represent conservative leaning blogs and the 586 blue nodes represent liberal leaning blogs. There are drastically more connections between blogs of the same political ideology (precisely 15,139) than connections between blogs of differing political ideology (precisely 1,575). This translates to a propensity of connection between liberal (conservative) blogs of 0.043 (0.039), whereas the propensity of connection between liberal and conservative blogs is 0.004.

The political blogs network has been studied time and time again within the community discovery literature \citep{newman2006modularity, karrer2011stochastic, newman2013spectral, jin2015fast}, and in this vast body of work, political ideology is conflated with community structure. Authors deem their community discovery methods successful after dividing nodes into groups which largely align with the observed political ideology. However, this is arguably a rather trivial partition of the nodes which provides little insight about the connections between liberals and conservatives. \citet{peel2017ground} further warn against treating node metadata, like political ideology, as ground truth for community structure, recognizing that 1) community discovery is largely task dependent for which no method is universally optimal, and 2) the political blogs network has substantial substructure that is often overlooked in favor of the traditional narrative. By conditioning on political ideology, ECoHeN identifies communities of blogs which are densely connected considering the political ideology of each community's members.

When ECoHeN is applied to the political blogs network with the political ideology labels, 81 communities are found. For reasons discussed in Section \ref{sec:refinement}, the number of communities is likely overstated due to a significant amount of overlap among the discovered communities. As such, these 81 communities are refined such that each community has at least four nodes with a $\beta = 0.10$ for maximum Jaccard overlap, which results in a set of 15 communities that overlap yet are largely distinct. The largest partisan (homogeneous) and bipartisan (heterogeneous) communities identified by ECoHeN are presented in Figure \ref{fig:largest_communities_ecohen}.

\begin{figure}[htb!]
\centering
\includegraphics[width=\textwidth]{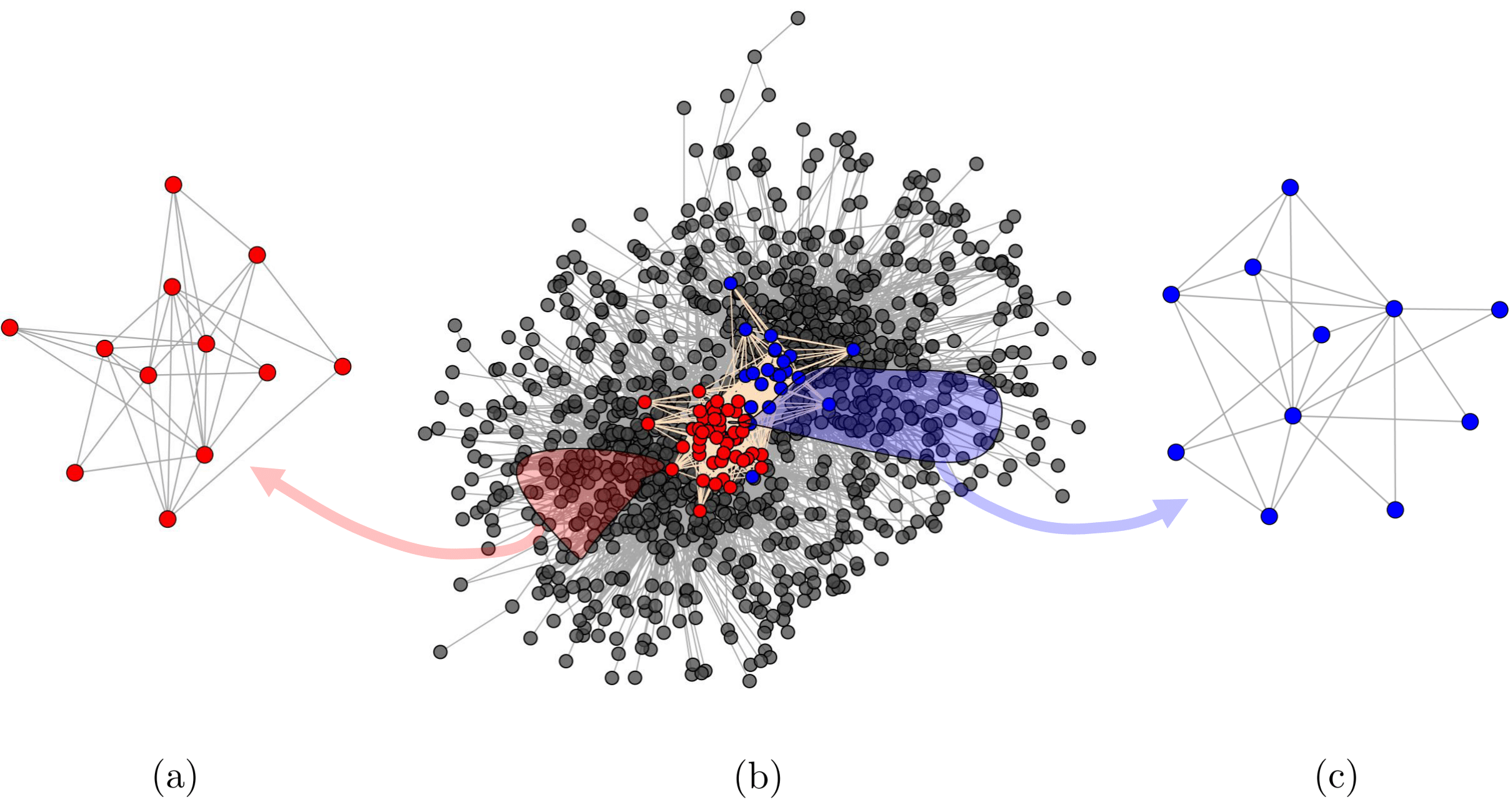}
\caption{The largest partisan and bipartisan communities identified by ECoHeN. Panel (a) depicts the largest conservative community which consists of 12 blogs with a ratio of densities of 23.8. Panel (c) depicts the largest liberal community which consists of 11 blogs with a ratio of densities of 20.8. The colored polygons in panel (b) provide the origin of each partisan community in the political blogs network. Furthermore, the largest community found by ECoHeN is a bipartisan community provided in panel (b) which consists of 73 blogs with a ratio of densities of 8.5.} 
\label{fig:largest_communities_ecohen}
\end{figure}

To gauge the quality of the 15 communities extracted by ECoHeN, we compute the \emph{ratio of densities} (RatD) for each community; that is, we compute the density of links among community members divided by the density of links between community members and the rest of the network. Figure \ref{fig:ratio_vs_size} provides the RatD for all ECoHeN communities. The RatD for all liberal (conservative) blogs is a natural baseline when assessing the assortativity of communities composed near entirely of liberals (conservatives). In general, a RatD of one implies that the density of links within a set of nodes is equivalent to the density of links to the rest of the network and is a natural baseline when assessing the assortativity of an identified community regardless of community members' political affiliation.

For comparison, we apply ZCmod and Walktrap to the political blogs network, attain respectively 11 and 5 communities with at least four members, and compute the RatD for each identified community. To assess political composition of each community, we compute the proportion of liberals (equivalently, conservatives) in each community. Figure \ref{fig:ratio_vs_size} provides the RatD for each identified community along with the size and political composition of the community. When the network is partitioned irrespective of political affiliation with Walktrap, at least one community is largely comprised of liberals and one community is largely comprised of conservatives, a rather trivial partition. When the network is partitioned via heterogeneous modularity maximization using ZCmod, the resulting communities must maintain at least one node of each node type, a rather stringent assumption. In comparison, ECoHeN is able to identify small, bipartisan and partisan communities, each with a connection density higher than competing methods.

\begin{figure}[htb!]
    \centering
    \includegraphics[width=\textwidth]{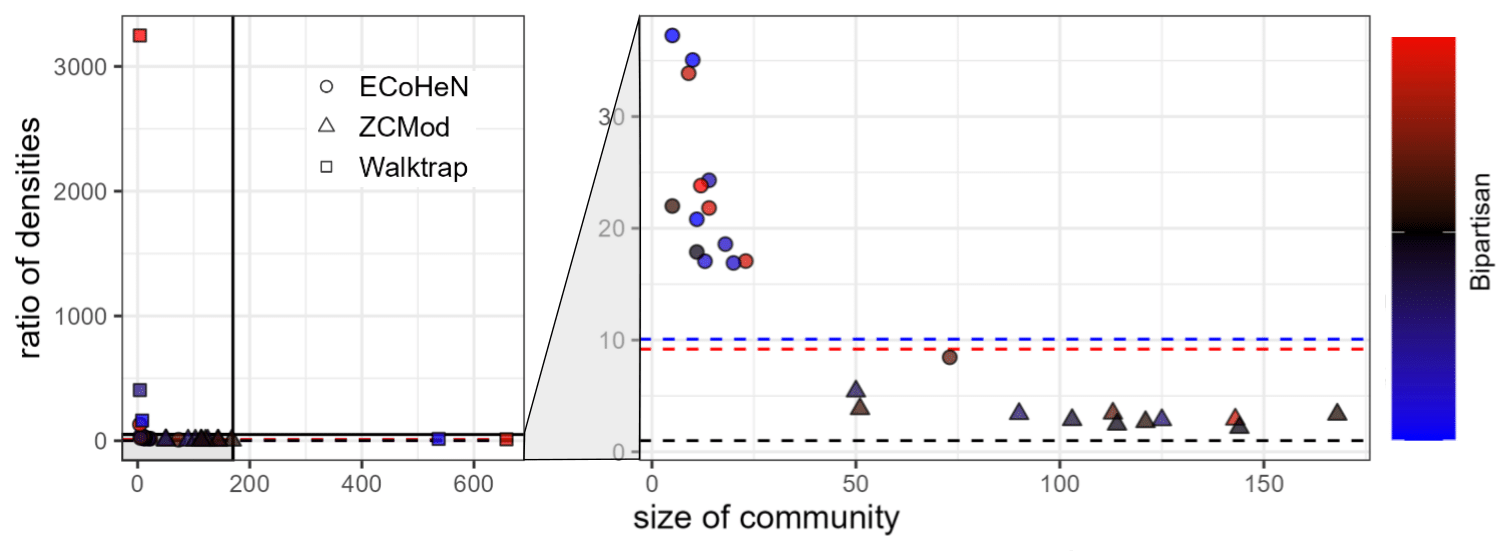}
    \caption{The ratio of density (RatD) for each of the 15 communities extracted from ECoHeN (circles), 11 communities detected from ZCmod (triangles), and five communities detected from Walktrap (squares) along with the size of the communities. Each community is colored according to the proportion of liberals (equivalently, conservatives) in the community where bluer (redder) points are largely composed of liberals (conservatives). Blacker points illustrate bipartisan communities. The black horizontal line illustrates the expected RatD in a random network. The blue (red) horizontal line illustrates the RatD provided all liberals (conservatives). As opposed to ZCmod, ECoHeN can identify both partisan and bipartisan communities. The largest bipartisan community found by ECoHeN features a RatD nearly 1.6 times larger than the largest ratio of densities attained via ZCmod. Walktrap partitions the network into at least two large communities which align with political ideology, a trivial partition which provides little insight about the connections between liberals and
conservatives.}
    \label{fig:ratio_vs_size}
\end{figure}

Both ECoHeN and ZCmod result in communities whose links are relatively more dense internally than to the rest of the network; however, the ratio of densities observed from ECoHeN communities are much higher than those communities from ZCmod. The larger RatD is partly because 1) the ECoHeN communities tend to be smaller than the ZCmod communities and 2) the fact that ZCmod is a partitioning method and must assign each node to a community, potentially diluting the density of otherwise well-connected collections of nodes. Nevertheless, the largest ECoHeN community (shown in Figure \ref{fig:largest_communities_ecohen}\textcolor{blue}{b}) has a RatD that is about 1.6 times larger than the ZCmod community with the largest RatD, both of which are largely bipartisan communities. 

This application highlights the importance of conditioning on node type when performing community discovery. Since partisan links are particularly common compared to bipartisan links, the partisan communities identified by ECoHeN are also particularly dense---denser than the natural baseline induced from taking all liberals or all conservatives, respectively. On the other hand, the RatD required by ECoHeN to consider a set of bipartisan nodes a community is naturally lower, a testament to the strengths of ECoHeN as it leverages differences in the connectivity between node types. If a less connected political party, say independents, were included in the network, ECoHeN would be uniquely positioned to identify both partisan and bipartisan communities including independents since ECoHeN identifies communities considering the density with respect to type and does not place constraints or assumptions on the political composition of each community.

\section{Discussion}
\label{sec:discussion}

ECoHeN is a generalization of an existing community extraction method called the extraction of statistically significant communities (ESSC). ECoHeN iteratively updates a candidate community by assessing the significance of connections between each node and the candidate community through a reference distribution derived under the heterogeneous degree configuration model. Like its predecessor ESSC, ECoHeN can identify background nodes and overlapping communities, two common properties of realistic networks. Compared to ECoHeN, many community discovery methods assign background nodes to otherwise tightly connected communities, reducing the overall density of the community, and assume that communities are disjoint, unrealistically positing that no node may be tightly connected to more than one collection of nodes. Unlike ESSC, ECoHeN takes advantage of differences between nodes' types and any resulting differences in the density of connections between them to identify communities. A key advantage of ECoHeN is its ability to discover communities that are topologically dense with respect to the node types of each community's members without making assumptions or imposing constraints on the resulting type composition of each community. ECoHeN is the first extraction method capable of identifying both homogeneous and heterogeneous community structure. Furthermore, ECoHeN can be parameterized such that it is guaranteed to converge, resolving issues with cycles present in ESSC's implementation.

Generalizations of ECoHeN are possible and an area for future work. One particular generalization of interest is the extension to directed, heterogeneous multigraphs which would be possible by generalizing the heterogeneous degree sequence of each node to include both in- and out-degrees. Other avenues of work include the derivation of a finite sampling distribution for the measure of connectivity $p_\mathcal{B}(u)$ as defined in \eqref{eqn:pbu_ind} and a temporal network extension. Furthermore, while ECoHeN extracts communities parallelized across the initial seed sets and is partially implemented in \texttt{C++} for efficiency, scalability to large networks is still a concern. One way to improve scalability is to reconsider how the method is initialized (e.g., define locally optimal seed sets for heterogeneous networks), and conduct tests for inclusion and exclusion locally (e.g., test only direct neighbors for inclusion in the extraction procedure). The implications of such changes on the multiple testing correction and convergence properties of the algorithm is also an area of future work.

% ACKNOWLEDGEMENTS
\bigskip
\begin{center}
{\large\bf ACKNOWLEDGEMENTS}
\end{center}

This work was inspired by an ongoing collaboration with Dr. Ryan Layer and Michael Bradshaw. We would like to thank Dr. Jingfei Zhang and Dr. Yuguo Chen for providing code which made comparisons to ZCmod possible. This work was partially supported by grant IOS-1856229 from the National Science Foundation. This work utilized the Summit supercomputer, which is supported by the National Science Foundation (awards ACI-1532235 and ACI-1532236), the University of Colorado Boulder, and Colorado State University.

% SUPPLEMENT
\bigskip
\begin{center}
{\large\bf SUPPLEMENTARY MATERIAL}
\end{center}

\begin{description}

\item[Supplement A - Heterogeneous Degree Configuration Model (HDCM): ] Detailed description and schematics of the heterogeneous degree configuration model introduced in Section \ref{sec:hdcm}. (.pdf file)

\item[Supplement B - Simulation Study: ] More details and investigations from Section \ref{sec:simulation}, including empirical results on the effects of the learning rate and decay rate on the number and quality of the communities identified by ECoHeN and ESSC. (.pdf file)

\item[Supplement C - Proofs: ] Statement and proofs of Theorem \ref{thm:asym_dist}, Corollary \ref{cor:asym_dist}, and Theorem \ref{thm:extraction_conv}. (.pdf file)

\item[Supplement D - ECoHeN Algorithm Pseudocode and Code: ] Pseudocode and code for the ECoHeN algorithm, including all extraction routines. We provide the GitHub link to the \texttt{R} package \texttt{ECoHeN}. (.pdf file)

\end{description}

% BIBLIOGRAPHY
\bibliography{references.bib}

\newpage

\appendix
\renewcommand\thefigure{\thesection.\arabic{figure}}
\setcounter{figure}{0}

\newpage

\section{Heterogeneous Degree Configuration Model (HDCM)}

Assume $\mathcal{G}=(\mathcal{V}, \mathcal{E})$ denotes an observed, heterogeneous network with the collection of heterogeneous degree sequences $\boldsymbol{\mathcal{D}}$. The ``Heterogeneous Degree Configuration Model'' (HDCM) section of the manuscript provides a generalization of a degree configuration model (DCM) capable of preserving not only the degree of each node but the heterogeneous degree sequence of each node. Using the corresponding notation and assumptions provided in the ``Heterogeneous Networks'' and ``Heterogeneous Degree Configuration Model'' sections of the manuscript, one can efficiently conduct the wiring process of the HDCM through a through a constrained permutation of the node labels in the edge multiset $\mathcal{E}$.

\paragraph{Efficient Generative Process for the HDCM}

An efficient, generative process for HDCM$(\mathcal{T}, \boldsymbol{\mathcal{D}})$ is depicted in Figure \ref{fig:hdcm_flow} and takes advantage of the existing edge set $\mathcal{E}$. The process begins by partitioning the edge multiset into $K + K(K-1)/2$ subsets according to the observed adjacent nodes' type: $\mathcal{E}=\bigcup_{1\leq k\leq l\leq K}E^{[kl]}$ where $$E^{[kl]}=\left\{\{u, v\}\in\mathcal{E} : u\text{ is of type } k, v \text{ is of type } l \right\}.$$
That is, the edge multiset $E^{[kl]}\subseteq \mathcal{E}$ is the set of all $m^{[kl]}=|E^{[kl]}|$ undirected links between nodes of type $k$ and $l$. To construct a network with the same collection of heterogeneous degree sequences, we permute the node labels of $E^{[kl]}$ to construct a new edge multiset, denoted $\widetilde{E}^{[kl]}$, for each $1\leq k\leq l\leq K$. The process for attaining $\widetilde{E}^{[kl]}$ is dependent on whether $k=l$ or $k\neq l$, so we consider these cases.

\begin{figure}[!htbp]
    \centering
    \includegraphics[scale=0.25]{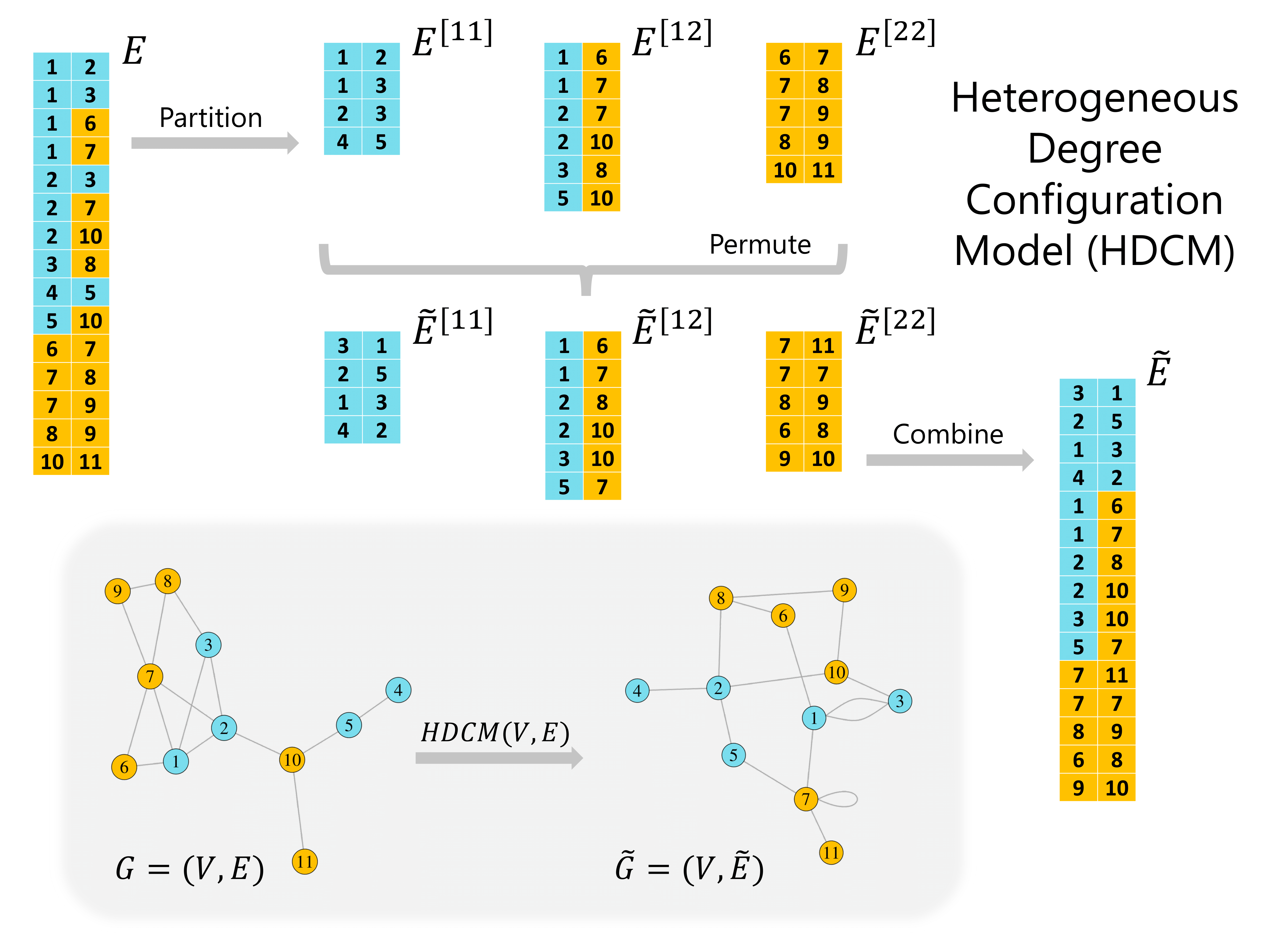}
    \caption{A schematic of the heterogeneous degree configuration model: a model of a random network maintaining the same collection of heterogeneous degree sequences, $\boldsymbol{\mathcal{D}}$, as an observed network, $\mathcal{G}$. The process begins by partitioning the edge multiset, $\mathcal{E}$, according to adjacent nodes' type. Upon completion, node labels are permuted (or rearranged) within the elements of the partition to preserve the respective type $k$ degree of each node. The new edge multisets are combined into one edge multiset, $\widetilde{\mathcal{E}}$, where a random network $\widetilde{\mathcal{G}}=(\mathcal{V}, \widetilde{\mathcal{E}})$ has the same degree collection $\boldsymbol{\mathcal{D}}$.}
    \label{fig:hdcm_flow}
\end{figure}

\textbf{Case 1: Attaining} $\boldsymbol{\widetilde{E}^{[kl]}}$\textbf{ for }$\boldsymbol{k=l}.$ For same type connections, the idea is to permute all node labels and create adjacent pairs by grouping the reordered node labels two by two. This process will preserve the number of same type connections for each node. To detail formally, let $E^{[kk]}$ be structured as an $m^{[kk]}$ by $2$ matrix where each row vector provides an observed edge between two type $k$ nodes. Note, the order of the rows and columns of $E^{[kk]}$ carry no specific meaning but is used for bookkeeping. Furthermore, if $\mathcal{G}$ is a multigraph then there exist duplicate rows. Let $\boldsymbol{w} = \left[w_1, \dots, w_{2m^{[kk]}} \right]$ denote a row vector (with $2m^{[kk]}$ entries) attained by concatenating the transpose of the column vectors of $E^{[kk]}$. Let $\tilde{\boldsymbol{w}}$ denote a permutation (or rearrangement) of the entries of $\boldsymbol{w}$. The random configuration of $k$ to $k$ connections is represented similarly as an $m^{[kk]}$ by 2 matrix, denoted $\widetilde{E}^{[kl]}$, where the $i$th row vector of $\widetilde{E}^{[kl]}$ is given by $\left[\tilde{w}_i, \tilde{w}_{i+1}\right]$ for $i\in\{1, \dots, m^{[kk]}\}$.

\textbf{Case 2: Attaining} $\boldsymbol{\widetilde{E}^{[kl]}}$\textbf{ for }$\boldsymbol{k< l}.$ For between type connections, the idea is to isolate and permute node labels of one type, say $l$, before creating adjacent pairs by joining the permuted node labels with those node labels of type $k$. To detail formally, let $E^{[kl]}$ be structured as an $m^{[kl]}$ by $2$ matrix where each row vector provides an edge between a type $k$ node and a type $l$ node, respectively. That is, the first column vector of $E^{[kl]}$, denoted $\boldsymbol{z}$, contains only type $k$ nodes and the second column vector of $E^{[kl]}$, denoted $\boldsymbol{w}$, contains only type $l$ nodes. Note, the order of the rows of $E^{[kl]}$, however, carry no specific meaning and is used for bookkeeping. Furthermore, if $\mathcal{G}$ is a multigraph then there exist duplicate rows. Let $\tilde{\boldsymbol{w}}$ denote a permutation (or rearrangement) of the entries of $\boldsymbol{w}$. The random configuration of $k$ to $l$ connections is represented similarly as an $m^{[kl]}$ by 2 matrix, denoted $\widetilde{E}^{[kl]}$, where the first and second columns of $\widetilde{E}^{[kl]}$ are $\boldsymbol{z}$ and $\tilde{\boldsymbol{w}}$, respectively.

The newly attained edge multisets $\widetilde{E}^{[kl]}$ (represented as $m^{[kl]}$ by 2 matrices) for each $1\leq k\leq l \leq K$ are combined rowwise (in any order) to attain an $m$ by 2 matrix, denoted $\widetilde{\mathcal{E}}$, where $m\vcentcolon=|\widetilde{\mathcal{E}}|=|\mathcal{E}|$. The random network $\widetilde{G}=(\mathcal{V}, \widetilde{\mathcal{E}})$ has the same collection of heterogeneous degree sequences as the observed network $\mathcal{G}$. That is, each node in $\widetilde{\mathcal{G}}$ has the same type $k$ degree for all $k\in\{1, \dots, K\}$ as in $\mathcal{G}$. The general process for attaining the random edge multiset, $\widetilde{\mathcal{E}}$, from the observed edge multiset, $\mathcal{E}$, is presented in Figure \ref{fig:hdcm_flow} where the permutation of node labels is depicted in Figure \ref{fig:perm_flow}. Notice, the process for permuting node labels differs for same type connections (e.g., $E^{[11]}$ and $E^{[22]}$) compared to between type connections (e.g., $E^{[12]}$) as described in cases one and two.

\begin{figure}[!htbp]
    \centering
    \includegraphics[scale=0.25]{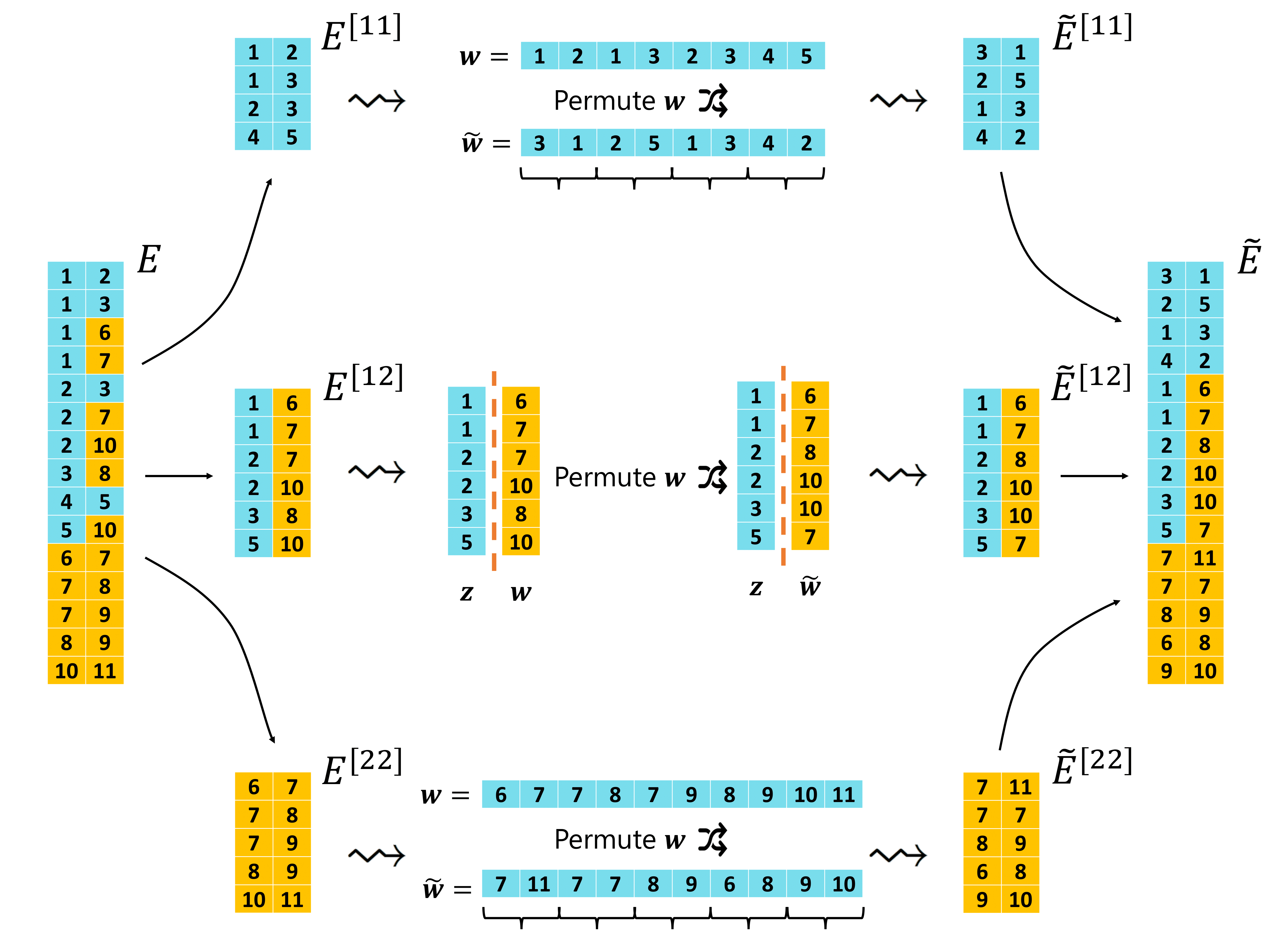}
    \caption{A schematic describing the random permutation of node labels for each element of the partitioned edge multiset. The process for permuting node labels is different for same type connections compared to between type connections. For same type connections (i.e., those described by $E^{[11]}$ and $E^{[22]}$), all node labels as they appear in $E^{[kk]}$ are permuted and paired two by two to create a new edge multiset $\widetilde{E}^{[kk]}$. For between type connections (i.e., those described by $E^{[12]}$), only node labels of type $l$ as they appear in $E^{[kl]}$ are permuted before being subsequently paired to the type $k$ node labels to create the new edge multiset $\widetilde{E}^{[kl]}$. The resulting edge multisets (represented as matrices) are combined rowwise.}
    \label{fig:perm_flow}
\end{figure}

\paragraph{Generalization of a Degree Configuration Model}

The heterogeneous degree configuration model is a generalization of a degree configuration model. A formal definition of a degree configuration model (DCM) for a homogeneous network is provided in \citet[Chapter~13.2]{newman2018networks}; however, \citet{fosdick2018configuring} provides expansive coverage on the topic. To clarify the similarities and differences between $\text{HDCM}(\mathcal{T}, \boldsymbol{\mathcal{D}})$ and $\text{DCM}(\mathcal{D})$, we will consider a random graph, $\tilde{\mathcal{G}}$, generated from the $\text{HDCM}(\mathcal{T}, \boldsymbol{\mathcal{D}})$ for $K=1$ and $K>1$. For a general $K$, the heterogeneous degree sequence (introduced in the Heterogeneous Network section of the manuscript) of node $u$ in $\tilde{\mathcal{G}}$ is equal to the observed heterogeneous degree sequence of node $u$ in $\mathcal{G}$ for all $u\in\mathcal{V}$. That is, node $u$ is connected to the same number of type $k$ nodes in $\tilde{\mathcal{G}}$ as in $\mathcal{G}$ for all $k\in\{1, \dots, K\}$. As a result, node $u$ is connected to the same number of nodes in $\tilde{\mathcal{G}}$ as in $\mathcal{G}$. Thus, the sample space of the $\text{HDCM}(\mathcal{T}, \boldsymbol{\mathcal{D}})$ is a subspace of the sample space of the $\text{DCM}(\mathcal{D})$. The two sample spaces are equal when $K=1$ (i.e., the network is homogeneous) since the heterogeneous degree sequence of a node simplifies to a singleton set with the degree of a node. When $K>1$, it is easy to construct an example where the degree of each node is the same as in the observed graph, but the heterogeneous degree sequences are different. When $K>1$ the sample space of the $\text{HDCM}(\mathcal{T}, \boldsymbol{\mathcal{D}})$ is a strict subset of the $\text{DCM}(\mathcal{D})$. This illustrates how the heterogeneous degree configuration model is simply a generalization of the degree configuration model used in \citet{wilson2014testing}.

\section{Proof of Theorems}

The statement of Theorem 3.1, Corollary 3.1, and Theorem 3.2, along with corresponding assumptions and notation, are provided in the manuscript.

\paragraph{Asymptotic Distribution of $X_n^{[k]}(u_n^{[l]}:\mathcal{B}_n)$}

We address the asymptotic behavior of a random quantity of interest.

\begin{proof}[Proof of Theorem \ref{thm:asym_dist}]
From Assumption \ref{eqn:aconvmean}, we have
\begin{align}
\label{eq:datamean_conv}
\mu_{l,n}^{[k]} = \int_{\mathbb{R}^{+}}{x\,dF_{l,n}^{[k]}(x)} = 
\sum_{t=0}^{\infty}{t\frac{N_{l,n}^{[k]}(t)}{|V_n^{[l]}|}} =
\frac{2^{\mathbb{I}(k=l)}|E_{n}^{[kl]}|}{|V_n^{[l]}|} \to
\mu_{l}^{[k]}
\end{align}
as $n\to\infty$ where
$N_{l,n}^{[k]}(t)$ is the number of type $l$ nodes with a type $k$ degree of $t$, and $E_{n}^{[kl]}$ is the subset multiset of the edge multiset which contains links between type $k$ nodes and type $l$ nodes. Equation \eqref{eq:datamean_conv} implies
\begin{align*}
1 = \lim_{n\to\infty}\frac{2^{\mathbb{I}(k=l)}|E_n^{[kl]}|}{\mu_{l}^{[k]}|V_n^{[l]}|} \quad (\text{i.e., } 2^{\mathbb{I}(k=l)}|E_n^{[kl]}| \sim \mu_{l}^{[k]}|V_n^{[l]}|)
\end{align*}
where $\mu_{l}^{[k]}< \infty$ by Assumption \ref{eqn:nonzero_ktypedeg_mean}.
By Assumption \ref{eqn:lttype_prop}, we have
\begin{align*}
\frac{V_n^{[l]}}{\mathcal{V}_n} = \frac{V_n^{[l]}}{n} \to \gamma^{[l]}\in(0, 1) \text{ as } n\to\infty \implies 1 = \lim_{n\to\infty}\frac{|V_n^{[l]}|}{\gamma^{[l]}n} \quad (\text{i.e., } |V_n^{[l]}|\sim \gamma^{[l]}n).
\end{align*}
Hence, we have $2^{\mathbb{I}(k=l)}|E_n^{[kl]}|\sim \mu_l^{[k]}\gamma^{[l]}n$. Note, when $\mu_{l}^{[k]}=0$ then $X_n^{[k]}(u_n^{[l]}, \mathcal{B}_n)=0,Y_n\sim\text{Binom}(c, 0)$ and $d_{TV}\left(X_n^{[k]}(u_n^{[l]}, \mathcal{B}_n), Y_n \right)\to 0$ as $n\to\infty$, trivially. We proceed assuming $\mu_{l}^{[k]}>0$ before addressing this edge case formally.

We assume (WLOG) that $k\leq l$. We wish to understand the distribution of $X_{n}^{[k]}(u_n^{[l]} : \mathcal{B}_n)$: the random number of type $k$ nodes in $\mathcal{B}_n$ adjacent to $u_n^{[l]}$ in $\mathcal{H}_n$ constructed via the heterogeneous degree configuration model. We make use of the procedure for sampling a network from the heterogeneous degree configuration model, denoted $\text{HDCM}(\mathcal{V}_n, \mathcal{E}_n)$, and discussed in Supplement A of the manuscript. Under the $\text{HDCM}(\mathcal{V}_n, \mathcal{E}_n)$, the edge multiset $\mathcal{E}_n=\cup_{1\leq i\leq j \leq K} E_n^{[ij]}$ is partitioned according to the adjacent nodes' types (where $E_n^{[ij]}$ contains the links between nodes of type $i$ and $j$), the edges in $E_n^{[ij]}$ are randomly rearranged to preserve degree (resulting in $\tilde{E}^{[ij]}_n$ for all $1\leq i \leq j \leq K$), and the resulting $\tilde{E}^{[ij]}_n$ are combined into one edge set (denoted $\tilde{\mathcal{E}}_n$). Each node in a random network $\mathcal{H}_n = (\mathcal{V}_n, \tilde{\mathcal{E}}_n)$ maintains the same heterogeneous degree sequence as observed in $\mathcal{G}_n = (\mathcal{V}_n, \mathcal{E}_n)$, implying there are precisely $c$ edges incident to $u_n^{[l]}$ also incident a type $k$ node in $\mathcal{H}_n$. Let $\tilde{v}_{ni}^{[k]}$ denote the $i^{\text{th}}$ type $k$ node where $\{\tilde{v}_{ni}^{[k]}, u_n^{[l]}\}\in \tilde{E}^{[kl]}$ for $i=1,\dots, c$. Note that the ordering carries no specific meaning but is used for bookkeeping. Self-loops and multi-edges complicate the interpretation and discussion surrounding $\tilde{v}_{ni}^{[k]}$. We refer to $\tilde{v}_{ni}^{[k]}$ simply as the $i$th type $k$ node adjacent to $u_n^{[l]}$; however, we recognize the elements of $\{\tilde{v}_{ni}^{[k]}\}_{i=1}^{c}$ are not necessarily unique and count, for example, a node as twice adjacent to $u_n^{[l]}$ should there exist two edges between them. Furthermore, if $\tilde{v}_{ni}^{[k]}=u_n^{[l]}$ then $\{\tilde{v}_{ni}^{[k]}, u_n^{[l]}\}$ would constitute a self-loop; a self-loop increases the degree of a node by two, so we would say then that $u_n^{[l]}$ is twice adjacent to itself. In a simple setting (no self-loops and multi-edges), there is no distinction between counting the nodes adjacent to a node $w$ and counting the nodes incident to an edge which is incident to a node $w$, so we treat them as the same objective here.

To understand the distribution of $X_{n}^{[k]}(u_n^{[l]} : \mathcal{B}_n)$, we focus on the adjacent type $k$ nodes, $\{\tilde{v}_{ni}^{[k]}\}_{i=1}^{c}$, and reveal whether each $\tilde{v}_{ni}^{[k]}$ is a member of $\mathcal{B}_n$ (equivalently, a member of $B_n^{[k]}$). For $i\in\{1, \dots, c\}$, let $A_i$ denote a binary random variable indicating whether $\tilde{v}^{[k]}_{ni}$ is a member of $\mathcal{B}_n$. That is, $A_i=1$ if the $i$th type $k$ node adjacent to $u_n^{[l]}$ in $\mathcal{H}_n$ is also in $\mathcal{B}_n$ and 0 otherwise. Note that $X_{n}^{[k]}(u_n^{[l]} : \mathcal{B}_n)=\sum_{i=1}^{c} A_i$. Let $r_{l,i}^{[k]}(\mathcal{B}_n)$ denote the conditional probability of $A_i=1$ conditional on the previous $i-1$ revelations: $\{A_1, \dots, A_{i-1}\}$. 

Consider $r_{l,1}^{[k]}(\mathcal{B}_n)$. In this case, no type $k$ nodes adjacent to $u_n^{[l]}$ have been revealed, so $r_{l,1}^{[k]}(\mathcal{B}_n)$ is the ratio of the number of type $k$ nodes in $\mathcal{B}_n$ adjacent to type $l$ nodes to the total number of type $k$ nodes adjacent to type $l$ nodes:
\begin{align*}
    r_{l,1}^{[k]}(\mathcal{B}_n) = 
    \frac{\left(\sum_{w\in B^{[k]}_n} d^{[l]}(w)\right) - \mathbb{I}(k=l)\mathbb{I}(u_n^{[l]}\in\mathcal{B}_n)}
    {\left(\sum_{z\in V^{[k]}_n} d^{[l]}(z)\right) -\mathbb{I}(k=l)} =
    \frac{\left(\sum_{w\in B^{[k]}_n} d^{[l]}(w)\right) - \mathbb{I}(k=l)\mathbb{I}(u_n^{[l]}\in\mathcal{B}_n)}
    {2^{\mathbb{I}(k=l)}|E_n^{[kl]}| -\mathbb{I}(k=l)}.
\end{align*}
The indicators in the numerator ensure we do not count $u_n^{[l]}$ itself when counting the number of type $k$ nodes in $\mathcal{B}_n$ which could be adjacent to $u_n^{[l]}$ (only a concern when $k=l$ and $u_n^{[l]}\in\mathcal{B}_n$). Similarly, the indicator in the denominator ensures we do not count $u_n^{[l]}$ itself when counting the number of type $k$ nodes in $\mathcal{V}_n$ which could be adjacent to $u_n^{[l]}$ (only a concern when $k=l$). Now, consider $r_{l,2}^{[k]}(\mathcal{B}_n)$. One type $k$ node adjacent to $u_n^{[l]}$ has been revealed. Hence, there is one fewer type $k$ node which can possibly serve as $\tilde{v}^{[k]}_{n2}$. If $\tilde{v}^{[k]}_{n1}$ was revealed to be a member of $\mathcal{B}_n$ (i.e., $A_1=1$), then there is also one fewer type $k$ node in $\mathcal{B}_n$ which can possibly serve as $\tilde{v}^{[k]}_{n2}$. Otherwise, if $A_1=0$, then the number of type $k$ nodes in $\mathcal{B}_n$ which can possibly serve as $\tilde{v}^{[k]}_{n2}$ is unaffected; only the overall count is affected. Thus,
\begin{align*}
    r_{l,2}^{[k]}(\mathcal{B}_n) \in
    \left[
    \frac{\left(\sum_{w\in B^{[k]}_n} d^{[l]}(w)\right) - \mathbb{I}(k=l)\mathbb{I}(u_n^{[l]}\in\mathcal{B}_n)-1}
    {2^{\mathbb{I}(k=l)}|E_n^{[kl]}| -\mathbb{I}(k=l)-1},
    \frac{\left(\sum_{w\in B^{[k]}_n} d^{[l]}(w)\right) - \mathbb{I}(k=l)\mathbb{I}(u_n^{[l]}\in\mathcal{B}_n)}
    {2^{\mathbb{I}(k=l)}|E_n^{[kl]}| -\mathbb{I}(k=l)-1}
    \right]
\end{align*}
where the lowerbound arises if $A_1=1$ and the upperbound arises if $A_1=0$. Arguing analogously for $1\leq i \leq c$, we have that $r_{l,i}^{[k]}(\mathcal{B}_n)$ is bounded uniformly on all prior $i-1$ revelations by
\begin{align*}
    r_{l,i}^{[k]}(\mathcal{B}_n) \in
    \left[
    \frac{\left(\sum_{w\in B^{[k]}_n} d^{[l]}(w)\right) - \mathbb{I}(k=l)\mathbb{I}(u_n^{[l]}\in\mathcal{B}_n)-(i-1)}
    {2^{\mathbb{I}(k=l)}|E_n^{[kl]}| -\mathbb{I}(k=l)-(i-1)},
    \frac{\left(\sum_{w\in B^{[k]}_n} d^{[l]}(w)\right) - \mathbb{I}(k=l)\mathbb{I}(u_n^{[l]}\in\mathcal{B}_n)}
    {2^{\mathbb{I}(k=l)}|E_n^{[kl]}| -\mathbb{I}(k=l)-(i-1)}
    \right]
\end{align*}
where the lowerbound arises if $A_j=1$ for all $j\in\{1, \dots, i-1\}$, and the upperbound arises if $A_j=0$ for all $j\in\{1, \dots, i-1\}$. Recall from Equation \eqref{eqn:psuccess}
that
\begin{align*}
    p_{l,n}^{[k]}(\mathcal{B}_n)=\frac{\sum_{w\in B_n^{[k]}}d^{[l]}(w)}{\sum_{z\in V_n^{[k]}}d^{[l]}(z)}.
\end{align*}
We will show that $\sup_{1\leq i \leq c} |r_{l,i}^{[k]}(\mathcal{B}_n) - p_{l,n}^{[k]}(\mathcal{B}_n)|\to 0$ as $n\to\infty$ by considering the case when $k=l$ and $k\neq l$.

%%%%%%%%%%%%%%%%%%%%%%%%%%%%%%%%%%%%%%%%%%%%%%%%%%%%%%

\item{\underline{\textit{Case 1}: $k = l$}} 

\noindent If $k=l$, then
\begin{align*}
    r_{l,i}^{[l]}(\mathcal{B}_n) \in
    \left[
    \frac{\left(\sum_{w\in B^{[l]}_n} d^{[l]}(w)\right) - \mathbb{I}(u_n^{[l]}\in\mathcal{B}_n)-(i-1)}
    {2|E_n^{[ll]}| - i},
    \frac{\left(\sum_{w\in B^{[l]}_n} d^{[l]}(w)\right) - \mathbb{I}(u_n^{[l]}\in\mathcal{B}_n)}
    {2|E_n^{[ll]}| - i}
    \right]
\end{align*}
and
\begin{align*}
    p_{l,n}^{[l]}(\mathcal{B}_n)=
    \frac{\sum_{w\in B_n^{[l]}}d^{[l]}(w)}
    {\sum_{z\in V_n^{[l]}}d^{[l]}(z)} =
    \frac{\sum_{w\in B_n^{[l]}}d^{[l]}(w)}
    {2|E_n^{[ll]}|}.
\end{align*}
Hence,
\begin{align*}
    r_{l,i}^{[l]}(\mathcal{B}_n) - p_{l,n}^{[l]}(\mathcal{B}_n) &\geq
    \frac{\left(\sum_{w\in B^{[l]}_n} d^{[l]}(w)\right) - \mathbb{I}(u_n^{[l]}\in\mathcal{B}_n)-(i-1)}
    {2|E_n^{[ll]}| - i} - 
        \frac{\sum_{w\in B_n^{[l]}}d^{[l]}(w)}
    {2|E_n^{[ll]}|} \\ &=
    \frac{i\sum_{w\in B^{[l]}_n} d^{[l]}(w) - 2i|E_n^{[ll]}|-2\mathbb{I}(u_n^{[l]}\in\mathcal{B}_n)|E_n^{[ll]}| + 2|E_n^{[ll]}|}{2|E_n^{[ll]}|(2|E_n^{[ll]}| - i)} \\ &\geq
    \frac{i\sum_{w\in B^{[l]}_n} d^{[l]}(w) - 2i|E_n^{[ll]}|}{2|E_n^{[ll]}|(2|E_n^{[ll]}| - i)} \text{ taking } \mathbb{I}(u^{[l]}_n\in\mathcal{B}_n) = 1 \\ &\geq
    \frac{-i}{2|E_n^{[ll]}| - i} \text{ since } \sum_{w\in B^{[l]}_n} d^{[l]}(w) \geq 0,
\end{align*}
and
\begin{align*}
    r_{l,i}^{[l]}(\mathcal{B}_n) - p_{l,n}^{[l]}(\mathcal{B}_n) &\leq
    \frac{\left(\sum_{w\in B^{[l]}_n} d^{[l]}(w)\right) - \mathbb{I}(u_n^{[l]}\in\mathcal{B}_n)}
    {2|E_n^{[ll]}| - i} - 
        \frac{\sum_{w\in B_n^{[l]}}d^{[l]}(w)}
    {2|E_n^{[ll]}|} \\ &=
    \frac{i\sum_{w\in B^{[l]}_n} d^{[l]}(w) - 2\mathbb{I}(u_n^{[l]}\in\mathcal{B}_n)|E_n^{[ll]}|}{2|E_n^{[ll]}|(2|E_n^{[ll]}| - i)} \\ &\leq
    \frac{i\sum_{w\in B^{[l]}_n} d^{[l]}(w)}{2|E_n^{[ll]}|(2|E_n^{[ll]}| - i)} \text{ taking } \mathbb{I}(u^{[l]}_n\in\mathcal{B}_n) = 0 \\ &\leq
    \frac{i}{2|E_n^{[ll]}|-i}  \text{ since } \sum_{w\in B^{[l]}_n} d^{[l]}(w) \leq 2|E_n^{[ll]}|.
\end{align*}
Thus,
\begin{align*}
     |r_{l,i}^{[l]}(\mathcal{B}_n) - p_{l,n}^{[l]}(\mathcal{B}_n)| \leq \frac{i}{2|E_n^{[ll]}|-i} \leq
     \frac{c}{2|E_n^{[ll]}|-i} \leq
     \frac{c}{2|E_n^{[ll]}|-c},
\end{align*}
implying
\begin{align}
\label{eq:sup0_keql}
    \sup_{1\leq i \leq c} |r_{l,i}^{[l]}(\mathcal{B}_n) - p_{l,n}^{[l]}(\mathcal{B}_n)| \leq \frac{c}{2|E_n^{[ll]}|-c} \to 0
\end{align}
as $n\to\infty$ since $2|E_n^{[ll]}|\sim \mu_l^{[l]}\gamma^{[l]}n$ and $c$ is fixed.

%%%%%%%%%%%%%%%%%%%%%%%%%%%%%%%%%%%%%%%%%%%%%%%%%%%%%%

\item{\underline{\textit{Case 2}: $k\neq l$}}

\noindent If $k\neq l$, then
\begin{align*}
    r_{l,i}^{[k]}(\mathcal{B}_n) \in
    \left[
    \frac{\left(\sum_{w\in B^{[k]}_n} d^{[l]}(w)\right) -(i-1)}
    {|E_n^{[kl]}| - (i-1)},
    \frac{\left(\sum_{w\in B^{[k]}_n} d^{[l]}(w)\right)}
    {|E_n^{[kl]}| - (i-1)}
    \right]
\end{align*}
and
\begin{align*}
    p_{l,n}^{[k]}(\mathcal{B}_n)=
    \frac{\sum_{w\in B_n^{[k]}}d^{[l]}(w)}
    {\sum_{z\in V_n^{[k]}}d^{[l]}(z)} =
    \frac{\sum_{w\in B_n^{[k]}}d^{[l]}(w)}
    {|E_n^{[kl]}|}.
\end{align*}
Hence,
\begin{align*}
    r_{l,i}^{[k]}(\mathcal{B}_n) - p_{l,n}^{[k]}(\mathcal{B}_n) &\geq
    \frac{\left(\sum_{w\in B^{[k]}_n} d^{[l]}(w)\right)-(i-1)}
    {|E_n^{[kl]}| - (i-1)} - 
        \frac{\sum_{w\in B_n^{[k]}}d^{[l]}(w)}
    {|E_n^{[kl]}|} \\ &=
    \frac{(i-1)\left[\sum_{w\in B^{[k]}_n} d^{[l]}(w) - |E_n^{[kl]}|\right]}{|E_n^{[kl]}|(|E_n^{[kl]}| - (i-1))} \\ &\geq
    \frac{-(i-1)|E_n^{[kl]}|}{|E_n^{[kl]}|(|E_n^{[kl]}| - (i-1))} \text{ since } \sum_{w\in B^{[k]}_n} d^{[l]}(w) \geq 0  \\ &=
    \frac{-(i-1)}{|E_n^{[kl]}| - (i-1)},
\end{align*}
and
\begin{align*}
    r_{l,i}^{[k]}(\mathcal{B}_n) - p_{l,n}^{[k]}(\mathcal{B}_n) &\leq
    \frac{\sum_{w\in B^{[k]}_n} d^{[l]}(w)}
    {|E_n^{[kl]}| - (i-1)} - 
        \frac{\sum_{w\in B_n^{[k]}}d^{[l]}(w)}
    {|E_n^{[kl]}|} \\ &=
    \frac{(i-1)\sum_{w\in B^{[k]}_n} d^{[l]}(w)}{|E_n^{[kl]}|(|E_n^{[kl]}| - (i-1))} \\ &\leq
    \frac{(i-1)|E_n^{[kl]}|}{|E_n^{[kl]}|(|E_n^{[kl]}| - (i-1))} \text{ since } \sum_{w\in B^{[k]}_n} d^{[l]}(w) \leq |E^{[kl]}|  \\ &=
    \frac{i-1}{|E_n^{[kl]}|-(i-1)}.
\end{align*}
Thus,
\begin{align*}
     |r_{l,i}^{[k]}(\mathcal{B}_n) - p_{l,n}^{[k]}(\mathcal{B}_n)| \leq \frac{i-1}{|E_n^{[kl]}|-(i-1)} \leq
     \frac{c-1}{|E_n^{[kl]}|-(i-1)} \leq
     \frac{c-1}{|E_n^{[kl]}|-(c-1)},
\end{align*}
implying
\begin{align}
\label{eq:sup0_keql}
    \sup_{1\leq i \leq c} |r_{l,i}^{[k]}(\mathcal{B}_n) - p_{l,n}^{[k]}(\mathcal{B}_n)| \leq \frac{c-1}{|E_n^{[kl]}|-(c-1)} \to 0
\end{align}
as $n\to\infty$ since $|E_n^{[kl]}|\sim \mu_l^{[k]}\gamma^{[l]}n$ and $c$ is fixed.

%%%%%%%%%%%%%%%%%%%%%%%%%%%%%%%%%%%%%%%%%%%%%%%%%%%%%%

Thus, $\sup_{1\leq i \leq c} |r_{l,i}^{[k]}(\mathcal{B}_n) - p_{l,n}^{[k]}(\mathcal{B}_n)|\to 0$ as $n\to\infty$, implying $d_{TV}\left(X_n^{[k]}(u_n^{[l]}, \mathcal{B}_n), Y_n \right)\to 0$ as $n\to\infty$ where $Y_n\sim \text{Binom}(c, p_{l,n}^{[k]}(\mathcal{B}_n))$.

%%%%%%%%%%%%%%%%%%%%%%%%%%%%%%%%%%%%%%%%%%%%%%%%%%%%%%

We now address the trivial cases when $\mu_{l}^{[k]}=0$. Let $Y_n\sim\text{Binom}(c, 0)$, implying $P(Y_n = 0) = 1$. Suppose $\mu_{l}^{[k]}=0$, and note that $X_n^{[k]}(u_n^{[l]}, \mathcal{B}_n)$ is a non-negative random variable. Then, for all $\epsilon>0$
\begin{align*}
    \mathbb{P}(X_n^{[k]}(u_n^{[l]}, \mathcal{B}_n) > \epsilon) \leq 
    \frac{\mathbb{E}(X_n^{[k]}(u_n^{[l]}, \mathcal{B}_n))}{\epsilon} =
    \frac{\mu_{l,n}^{[k]}}{\epsilon} \to 
    \frac{\mu_{l}^{[k]}}{\epsilon} = 0
\end{align*} 
as $n\to\infty$ by Markov's inequality. Hence, $d_{TV}\left(X_n^{[k]}(u_n^{[l]}, \mathcal{B}_n), Y_n \right)\to 0$ as $n\to\infty$.
\end{proof}

\begin{proof}[Proof of Corollary \ref{cor:asym_dist}]
To demonstrate that $d_{TV}\left(X_n^{[k]}(u_n^{[l]}, \mathcal{B}_n), Y_{l,n}^{[k]}(u_n^{[l]}, \mathcal{B}_n) \right)\to 0$ as $n\to\infty$, it suffices to show that $\sup_{1\leq i \leq c} |r_{l,i}^{[k]}(\mathcal{B}_n) - p_{l,n}^{[k]}(u_n^{[l]}, \mathcal{B}_n)|\to 0$ as $n\to\infty$ when $k=l$ and $0<\mu_{l}^{[k]}<\infty$. Using a similar proof strategy, we have
\begin{align*}
    r_{l,i}^{[k]}(\mathcal{B}_n) - p_{l,n}^{[k]}(u_n^{[l]}, \mathcal{B}_n) &\geq 
    -\frac{2|E_n^{[kl]}|(c-1)-c(i-1)}{(2|E_n^{[kl]}| - i)(2|E_n^{[kl]}| - c)},
\end{align*}
and
\begin{align*}
    r_{l,i}^{[k]}(\mathcal{B}_n) - p_{l,n}^{[k]}(u_n^{[l]}, \mathcal{B}_n) &\leq 
    \frac{2|E_n^{[kl]}|(c-1)-c(i-1)}{(2|E_n^{[kl]}| - i)(2|E_n^{[kl]}| - c)}.
\end{align*}
Thus,
\begin{align}
     |r_{l,i}^{[k]}(\mathcal{B}_n) - p_{l,n}^{[k]}(u_n^{[l]}, \mathcal{B}_n)| &\leq \nonumber
     \frac{2|E_n^{[kl]}|(c-1)-c(i-1)}{(2|E_n^{[kl]}| - i)(2|E_n^{[kl]}| - c)} \\ &\leq \label{eqn:sup_ineq}
     \frac{(2|E_n^{[kl]}| - i)(c+1)}{(2|E_n^{[kl]}| - i)(2|E_n^{[kl]}| - c)} \\ &= \nonumber
     \frac{c+1}{2|E_n^{[kl]}|-c}
\end{align}
where \eqref{eqn:sup_ineq} holds since $c\leq|E_n^{[kl]}|$, implying 
\begin{align}
    \label{eq:conv_total_var_thm2}
    \sup_{1\leq i \leq c} |r_{l,i}^{[k]}(\mathcal{B}_n) - p_{l,n}^{[k]}(u_n^{[l]}, \mathcal{B}_n)| \leq \frac{c+1}{2|E_n^{[kl]}|-c} \to 0
\end{align}
as $n\to\infty$ since $|E_n^{[kl]}|\sim \mu_l^{[k]}\gamma^{[l]}n$ and $c$ is fixed. Equation \eqref{eq:conv_total_var_thm2} implies convergence in total variation, a result mirroring Theorem \ref{thm:asym_dist}.
\end{proof}

\paragraph{Convergence of the ECoHeN Algorithm}

We address the convergence properties of the ECoHeN algorithm.

\begin{proof}[Proof of Theorem \ref{thm:extraction_conv}]
Since $\xi\in[0, 1]$ and $\phi\in[0, 1)$, there exists an iteration $i^{'}$ such that $\mu_{i'}=1$ and $\mu_i=1$ for all $i\geq i^{'}$. Furthermore, there exists a $j^{'}$ such that $|\mathcal{B}_i| < |\mathcal{V}|/2$ for all $i\geq j^{'}$. Let $j=\max(i^{'}, j^{'})$. For all $i\geq j$, the ECoHeN extraction procedure iteratively adds (at most) a one external node, $v\in\mathcal{B}_j^c$, with the smallest FDR adjusted $p$-value less than (or equal to) $\alpha$ before removing (at most) one internal node, $v\in\mathcal{B}_j^+$, with the largest FDR adjusted $p$-value greater than $\alpha$. To illustrate convergence, it suffices to show that a node, $v$, added to the set $\mathcal{B}_j$ at iteration $j$ is not subsequently removed from the set $\mathcal{B}_j^{+}=\mathcal{B}_j\cup\{v\}$ at iteration $j$. We proceed by contradiction. 

Importantly, note that the quantity $\hat{p}_{\mathcal{B}_{j}}(u)$ for an arbitrary node $u$ of arbitrary type $l$ is the same regardless of whether $u\in\mathcal{B}_j$ or $u\not\in\mathcal{B}_j$, implying $\hat{p}_{\mathcal{B}_{j}}(v)=\hat{p}_{\mathcal{B}^{+}_{j}}(v)$. Let $q_{\text{ext}}\vcentcolon=q_{\text{ext}}(j)=|\mathcal{B}_j^{c}|$ denote the number of external nodes, and $q_{\text{int}}\vcentcolon=q_{\text{int}}(j)=|\mathcal{B}_j^{+}|$ denote the number of internal nodes at iteration $j$. Since $q_{\text{int}} < q_{\text{ext}}$ and $\hat{p}_{\mathcal{B}_{j}}(v)=\hat{p}_{\mathcal{B}^{+}_{j}}(v)$, we have $\tilde{\hat{p}}_{\mathcal{B}_{j}^+}(v) < \tilde{\hat{p}}_{\mathcal{B}_{j}}(v)$ where $\tilde{\hat{p}}$ the FDR adjusted $p$-value. Intuitively, since the $p$-value for $v$ is the same regardless whether $v$ is a member of $\mathcal{B}_j$, then the FDR adjusted $p$-value for $v$ will be smaller when $v$ is a member of $\mathcal{B}_j$ (i.e. $\mathcal{B}_j^{+}$) than when $v$ is not a member of $\mathcal{B}_j$ (i.e. $\mathcal{B}_j$) since there are more external nodes than internal nodes at iteration $j$. At the same time, since $v$ was added to $\mathcal{B}_j$ then $p_{\mathcal{B}_{j}}(v) \leq \alpha$. Since $v$ was subsequently removed from $\mathcal{B}_j^+$, then $p_{\mathcal{B}^{+}_{j}}(v) > \alpha$. Hence we have $\alpha < \tilde{p}_{\mathcal{B}_{j}^+}(v) < \tilde{p}_{\mathcal{B}_{j}}(v) \leq \alpha$, a clear contradiction: $\alpha < \alpha$. As such, $v$ will never be added and subsequently removed in the extraction procedure, implying the ECoHeN algorithm will not cycle.
\end{proof}

\section{Simulation Study}

We provide a detailed account of the heterogeneous stochastic block model: a model for generating heterogeneous networks with block structure. Furthermore, we provide a more expansive set of simulated conditions as well as discussion of the evaluation metrics used in the manuscript.

\subsection{Heterogeneous Stochastic Block Model}

The \textit{heterogeneous stochastic block model} (HSBM) is a flexible framework for generating heterogeneous networks with block structure and is implemented in the ECoHeN package provided in Supplement D. To describe the size and connectivity of sampled networks, consider a heterogeneous network with $K$ node types with $C$ blocks. The block sizes are described by the $K$ by $C$ matrix $N=[n_{kc}]_{\overset{1\leq k\leq K}{1\leq c \leq C}}$ where $n_{kc}$ provides the number of type $k$ nodes assigned to the $c^{\text{th}}$ block. The connectivity of a sampled network is then summarized by the symmetric matrices $P$ and $R$, each of size $K$ by $K$, which provide the probability of connections between nodes of type $k$ and $l$ for all $k,l\in\{1, \dots, K\}$. 

To detail, suppose $v^{[k]}$ and $u^{[l]}$ represent two arbitrary nodes of type $k$ and $l$, respectively. In particular, the matrix $P=[p_{kl}]_{1\leq k\leq l \leq K}$ provides the probability of connection between two nodes should these nodes exist in separate blocks. That is, if $v^{[k]}$ and $u^{[l]}$ do not share a block, then an edge is placed between them according to a Bernoulli random variable with rate $p_{kl}$. The matrix $R=[r_{kl}]_{1\leq k\leq l\leq K}$ provides the additive increase in the rate of connection between two nodes if they share a block; that is, if $v^{[k]}$ and $u^{[l]}$ share a block, then an edge is placed between them according to a Bernoulli random variable with rate $p_{kl} + r_{kl}$ where $0\leq p_{kl}, r_{kl}\leq 1$ and $0 \leq p_{kl} + r_{kl} \leq 1$ for all $1\leq k\leq l\leq K$.

Networks presented in the manuscript are generated by parameters $b$, $p$, $r_{11}$, $r_{22}$, and $r_{12}$ according to the following matrices of the HSBM:
\begin{align}
    N(p)&= \begin{bmatrix} 500(1-p) & 500p \\
    500(1-p) & 500p \end{bmatrix}, \label{eqn:Nmatrix} \\
    P(b)&=\begin{bmatrix} b & b \\b & b \end{bmatrix}, \label{eqn:Pmatrix} \text{ and} \\
    R(r_{11}, r_{22}, r_{12})&=\begin{bmatrix} r_{11} & r_{12} \\ r_{12} & r_{22} \end{bmatrix}. \label{eqn:Rmatrix}
\end{align}
Networks of this form maintain two blocks: a background block and a high connectivity block (HCB) where the parameter $p$ in \eqref{eqn:Nmatrix} dictates the size of the HCB, ergo, the size of the community. The parameter $b$ in \eqref{eqn:Pmatrix} dictates the background rate. The parameters $r_{ij}$ in \eqref{eqn:Rmatrix} dictate the type and degree of community structure. Figure \ref{fig:hsbm_examples} provides three examples of a heterogeneous network from this space, including one a) with heterogeneous community structure, b) with homogeneous community structure, and c) without community structure, each visualized as an adjacency matrix with node type illustrated by the colored bars adorning the axes. In particular, Figure \ref{fig:hsbm_examples}\textcolor{blue}{a} is said to have heterogeneous community structure as the within-block nodes are highly connected within node type and (to a lesser degree) between node type, implying the existence of one heterogeneous community. On the other hand, Figure \ref{fig:hsbm_examples}\textcolor{blue}{b} is said to have homogeneous community structure as the within-block nodes are highly connected within node type yet sparsely connected between node type (equivalent to the background rate, $b$), implying the existence of two homogeneous communities: one composed of type one (red) nodes and one composed of type two (blue) nodes. Figure \ref{fig:hsbm_examples}\textcolor{blue}{c} is a heterogeneous analog of an Erdős-Rényi network; hence, there is no underlying community structure.
\begin{figure}[htbp!]
\centering
   \begin{subfigure}{0.32\textwidth} \centering
     \includegraphics[scale=0.55]{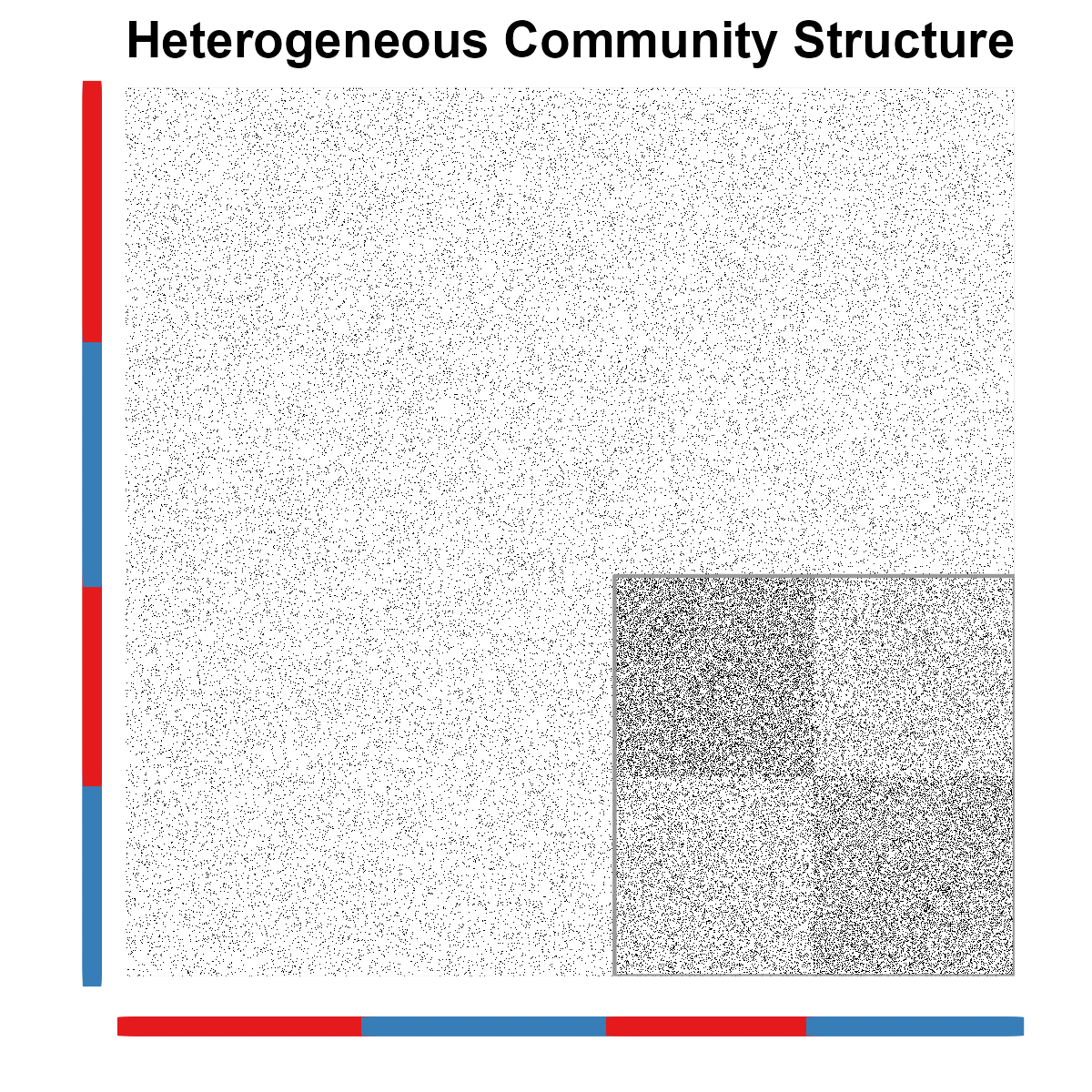}
     \caption{$R(0.25, 0.20, 0.10)$}
   \end{subfigure}
   \begin{subfigure}{0.32\textwidth} \centering
     \includegraphics[scale=0.55]{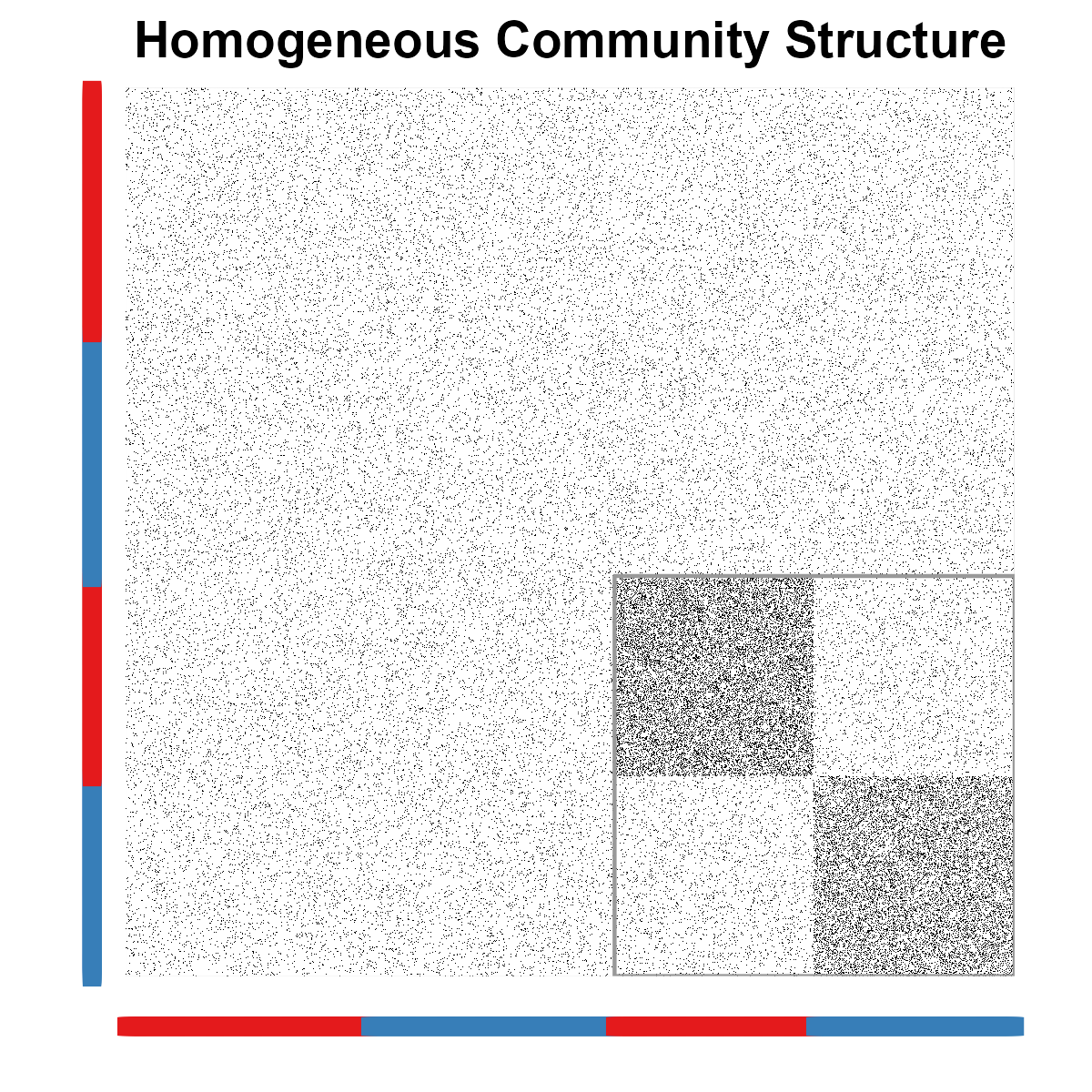}
     \caption{$R(0.25, 0.20, 0)$}
   \end{subfigure}
   \begin{subfigure}{0.32\textwidth} \centering
     \includegraphics[scale=0.55]{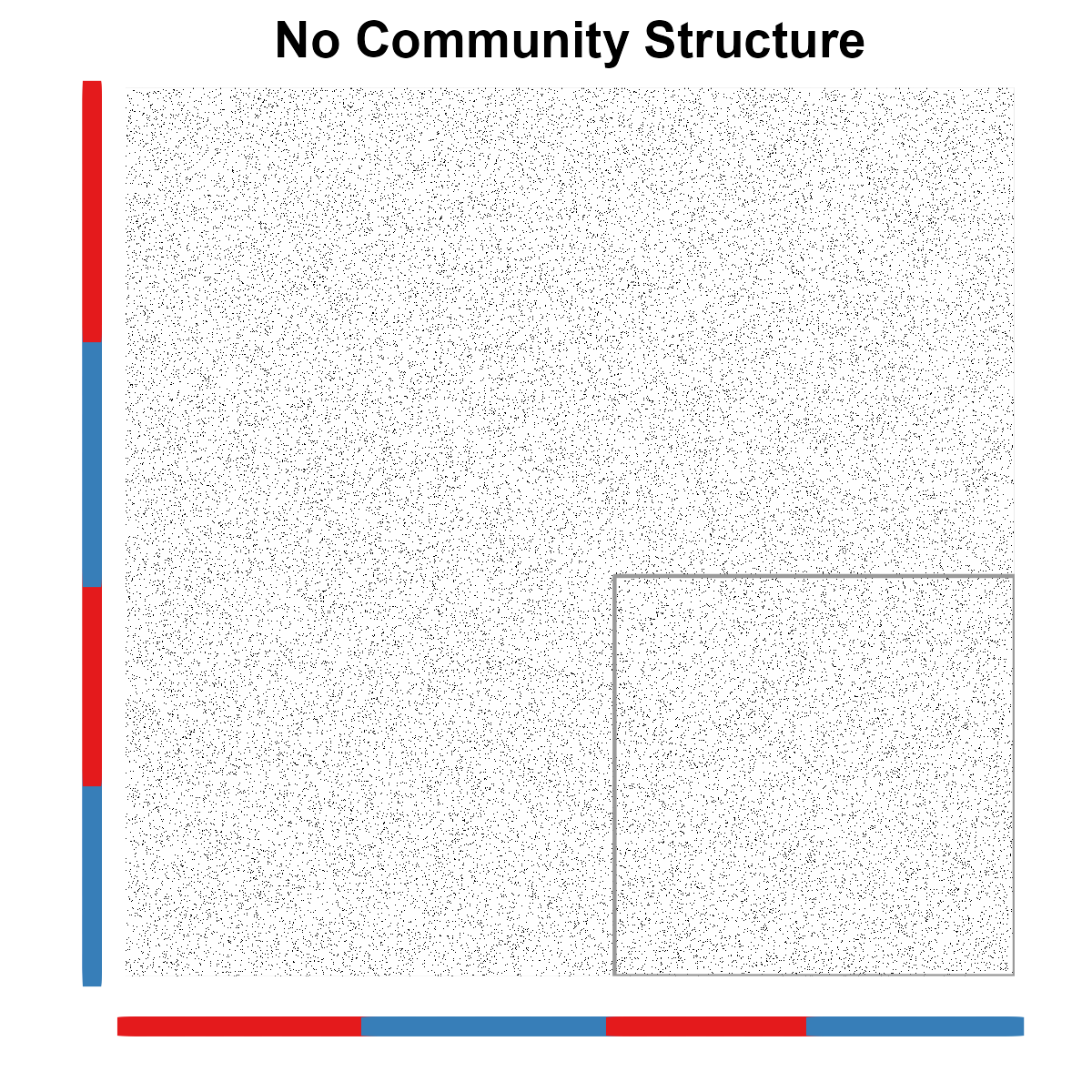}
     \caption{$R(0, 0, 0)$}
   \end{subfigure}
\caption{Three example heterogeneous networks generated from the heterogeneous stochastic block model with fundamentally different community structure. Each network is composed of 500 type one and 500 type two nodes (illustrated by the colored bars adorning the axes) and the subsequent connections among them. The HCB (outlined in black) contains the connections dictated by $r_{ij}$ for $1 \leq i\leq j\leq 2$. The other connections are dictated by the background rate $b$. Edges are sampled according to the Bernoulli rate matrices $P(b)$ and $R(r_{11}, r_{22}, r_{12})$ of \eqref{eqn:Pmatrix} and \eqref{eqn:Rmatrix} with background rate specified by $b=0.05$.}
\label{fig:hsbm_examples}
\end{figure}

\subsection{Evaluation Metrics}

The maximum Jaccard similarity measure is used in simulation to capture each community discovery method's ability to identify the simulated community structure. For each method, let $D$ represent the set of nodes to discover which will vary depending on the underlying community structure. Let $\boldsymbol{C}_m$ represent a collection of the discovered communities by method $m$. To gauge each method's ability to identify $D$, we consider the \textit{maximum Jaccard similarity measure}, denoted $J^{*}(D, \boldsymbol{C}_m)$, between the set of nodes to discover, $D$, and the collection of communities, $\boldsymbol{C}_m$: 
\begin{equation}
\label{eq:max_jaccard}
    J^{*}(D, \boldsymbol{C}_m) = \max_{C\in\boldsymbol{C}_m} \frac{|D\cap C|}{|D\cup C|} \vcentcolon= \max_{C\in\boldsymbol{C}_m} J(D, C).
\end{equation}
A value of one indicates that the set $D$ was perfectly identified by the community discovery method; a value of zero indicates the set $D$ was not identified in any capacity by the community discovery method. Hence, the maximum Jaccard similarity measure illustrates each method's ability to recover set $D$ where larger values indicate more overlap.

In real networks, there is no established ground truth. To assess the assortativity of a discovered community, $C$, we compute the ratio of densities. The \text{ratio of densities} of $C$, denoted $\text{RatD}(C)$, is the ratio of the internal edge density, $p_i(C)$, to the between edge density, $p_b(C)$: $\text{RatD}(C)=p_i(C)/p_b(C)$ where 
\begin{align*}
    p_i(C) = \frac{m_i}{|C|(|C|-1)/2}, \text{ and } p_b(C) = \frac{m_b}{|C||C^{c}|}.
\end{align*}
We use $m_i$ to denote the number of edges between nodes in $C$, and $m_b$ to denote the number of edges between a node in $C$ and a node in $C^{c}=\mathcal{V}-C$. A RatD of one implies that the density of edges within the set is the same as to the density to the rest of the network, indicating poor assortativity. Any set with a RatD sufficiently greater than one can arguably be called a community. The formulation assumes a simple network which holds for every network presented in the manuscript.

\subsection{Simulation Study}

The simulation study of the manuscript features a subset of the results depicting each method's ability to identify heterogeneous and homogeneous community structure. This section presents all other results as well as the effects of $\xi$ and $\phi$ on the quantity and quality of the communities found. We also provide an investigation into each method's ability to assign nodes to background in random networks.

\subsubsection{Heterogeneous Community Structure}

In the manuscript, we present each method's ability to identify the heterogeneous community when $r_{ii}\in(0.15, 0.20, 0.25, 0.30)$ for $i\in\{1, 2\}$ and $r_{12}\in(0.025, 0.05, 0.075)$; however, a broader range of simulated conditions are constructed and explored. We present results when $r_{11}=r_{22}$ and $r_{11}\neq r_{22}$ before exploring the effects of $\xi$ and $\phi$. As in the manuscript, one hundred networks are generated at each simulated condition, referred to as replicates. The median maximum Jaccard similarity measure is plotted at each simulated condition with uncertainty reflected by the range of first and third quartile. 

\paragraph{When $r_{11}=r_{22}$}

Figure \ref{fig:heterogeneous_r11eqr22} compares each method's ability to identify the heterogeneous community when $r_{11}=r_{22}$ at a broader range of simulated conditions. Each point represents the median maximum Jaccard at each simulated condition. The vertical range represents the middle 50\% of observed maximum Jaccard measures. ECoHeN and ZCmod's ability to recover the heterogeneous community notably improves as the within-block, within-type density (i.e., $b + r_{ii}$) increases \emph{and} the within-block between-type density (i.e., $b + r_{ij}$) increases. Each method poorly recovers the heterogeneous community when $r_{ii}<0.15$. When $r_{ii}<0.15$, Walktrap generally seems preferable, having a larger maximum Jaccard. However, ECoHeN and ZCmod outperform Walktrap if $r_{12}$ is relatively large, and if $r_{12}$ is sufficiently large, ECoHeN outperforms each of the competing methods. When $r_{ii}\geq 0.15$, ECoHeN performs relatively better than each competing methods at recovering small heterogeneous communities (i.e., when $p$ is small), especially for relatively small $r_{12}$. For larger $p$, ECoHeN and ZCmod perform similarly well.

\begin{figure}[ht]
  \begin{adjustbox}{addcode={\begin{minipage}{\width}}{\caption{
      Each method's ability to identify the heterogeneous community along with the size of the heterogeneous community. The vertical interval around the median represents the middle 50\%. ECoHeN and ZCmod's ability to recover the heterogeneous community notably improves as the within-block, within-type density (i.e., $b + r_{ii}$) increases \emph{and} the within-block between-type density (i.e., $b + r_{ij}$) increases. Each method poorly recovers the heterogeneous community $r_{ii}<0.15$. When $r_{ii}<0.15$, Walktrap generally seems preferable, while ECoHeN and ZCmod perform similarly in these conditions. When $r_{ii}\geq 0.15$, ECoHeN performs relatively better than all methods at recovering small heterogeneous communities (i.e., when $p$ is small), especially for relatively small $r_{12}$.
      }
      \label{fig:heterogeneous_r11eqr22}
      \end{minipage}},rotate=90,center}
      \includegraphics[scale=1]{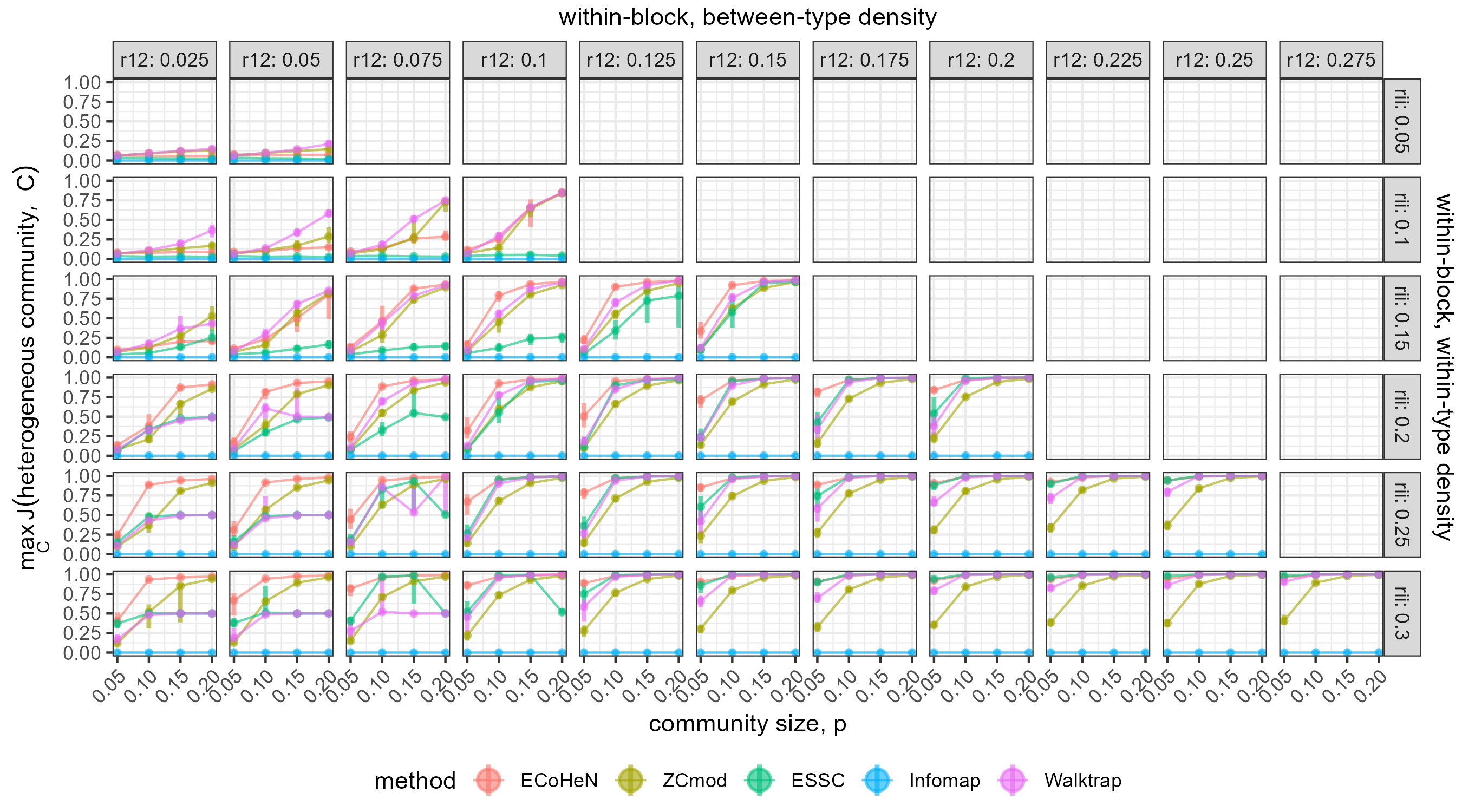}
  \end{adjustbox}
\end{figure}

\paragraph{When $r_{11}\neq r_{22}$}

We now consider simulated conditions when $r_{11}\neq r_{22}$, presented in Figure \ref{fig:heterogeneous_best_r11neqr22}. The within-block, within-type densities are provided in the facets. For each community size, we show the maximum Jaccard for increasing values of $r_{12}$ which range from 0.025 to $r_{22}$. Each method's ability to recover the heterogeneous community improves as 1) the density of red block nodes' connections increases (down the y-axis facets), 2) the density of the blue block nodes' connections increases (right across the x-axis facets), and 3) the density of the within-block, between-type connections increases (provided by the line type). Characteristically, however, ECoHeN's ability to recover the heterogeneous community drastically improves for small increases to $r_{12}$, outperforming each competing method at recovering small, heterogeneous communities. By comparison, ZCmod struggles to identify small, heterogeneous communities, performing well as the community size increases. When $r_{ii}>0.20$ for $i\in\{1, 2\}$ (not shown), all methods perform similarly for large $r_{12}$. For small $r_{12}$, ESSC and Walktrap perform relatively worse than ECoHeN and ZCmod. ECoHeN outperforms ZCmod for small $r_{12}$, and the two methods perform more similarly as $r_{12}$ increases.

\begin{figure}[ht]
  \begin{adjustbox}{addcode={\begin{minipage}{\width}}{\caption{
      The within-block, within-type (between-type) densities are provided in the facets (line type). For each community size, we show the maximum Jaccard for increasing values of $r_{12}$ which range from 0.025 to $r_{22}$. ECoHeN's ability to identify the heterogeneous community drastically improves for small increases to $r_{12}$, outperforming each competing method at recovering small, heterogeneous communities. By comparison, ZCmod struggles to identify small, heterogeneous communities, performing well as the community size increases.
      }
      \label{fig:heterogeneous_best_r11neqr22}
      \end{minipage}},rotate=90,center}
      \includegraphics[scale=0.8]{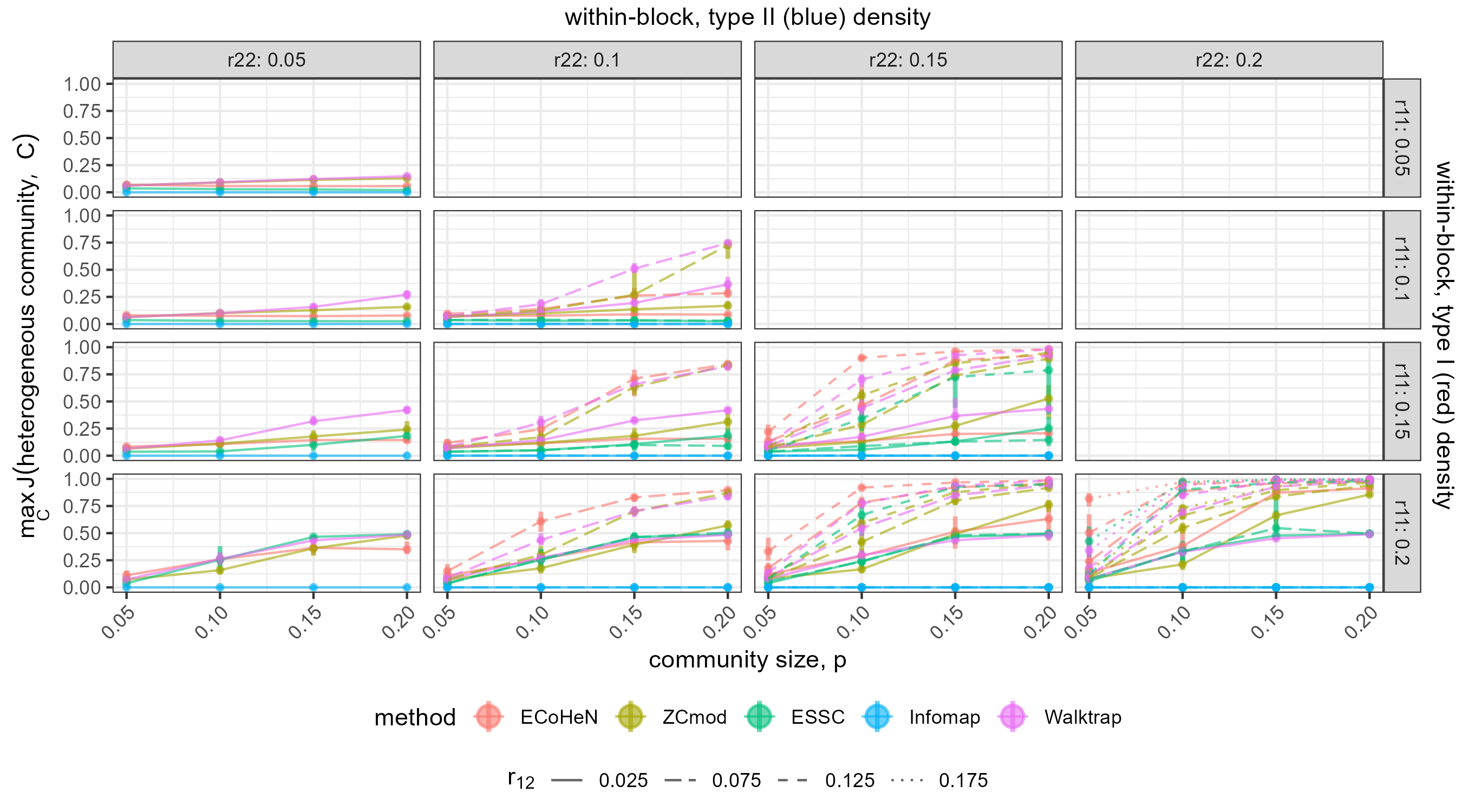}
  \end{adjustbox}
\end{figure}

\paragraph{Effects of $\xi$ and $\phi$}

We consider setting $(\xi, \phi)$ to $(0, 0)$, and $(1, \phi)$ for $\phi = 0, 0.33, 0.66$, and $0.99$ and running ECoHeN and ESSC to see the effect the parameter settings have on the ability for the methods to recover the heterogeneous community. The results are provided in Figure \ref{fig:heterogeneous_best_r11eqr22_effects}. In general, it appears that setting the maximal allowance to 1 for each iteration of the extraction with $\xi=0$ and $\phi=0$ provides the best resolution for uncovering the heterogeneous community at a wide range of simulated conditions. When $\xi=1$, a larger $\phi$ provides the best resolution, suggesting that if we are to speed up the algorithm by increasing the maximal allowance for early iterations, it is best to allow for a larger maximal allowance for many early iterations. However, the effect is relatively minute for most simulated conditions, suggesting that the choice of $\phi$ when $\xi=1$ will have minimal impact on the methods' ability to recover communities from background. We do not consider setting $(\xi, \phi)=(1, 1)$ as this setting is not guaranteed to converge. Should the user wish to set $(\xi,\phi)=(1, 1)$, we suggest considering a maximum number of iterations fewer than the number of nodes in the network.

\begin{figure}[ht]
  \begin{adjustbox}{addcode={\begin{minipage}{\width}}{\caption{
      Setting a maximal allowance to one for each iteration with $\xi=0$ and $\phi=0$ provides the best resolution for uncovering the heterogeneous community at a wide range of simulated conditions. When $\xi=1$, a larger $\phi$ provides the best resolution, suggesting that if we are to speed up the algorithm by allowing a larger maximal allowance for early iterations, it is best to allow a larger maximal allowance for many early iterations. However, the effect is relatively minute for most simulated conditions, suggesting that the choice of $\phi$ when $\xi=1$ will have minimal impact on the methods' ability to recover communities from background.
      }
      \label{fig:heterogeneous_best_r11eqr22_effects}
      \end{minipage}},rotate=90,center}
      \includegraphics[scale=1]{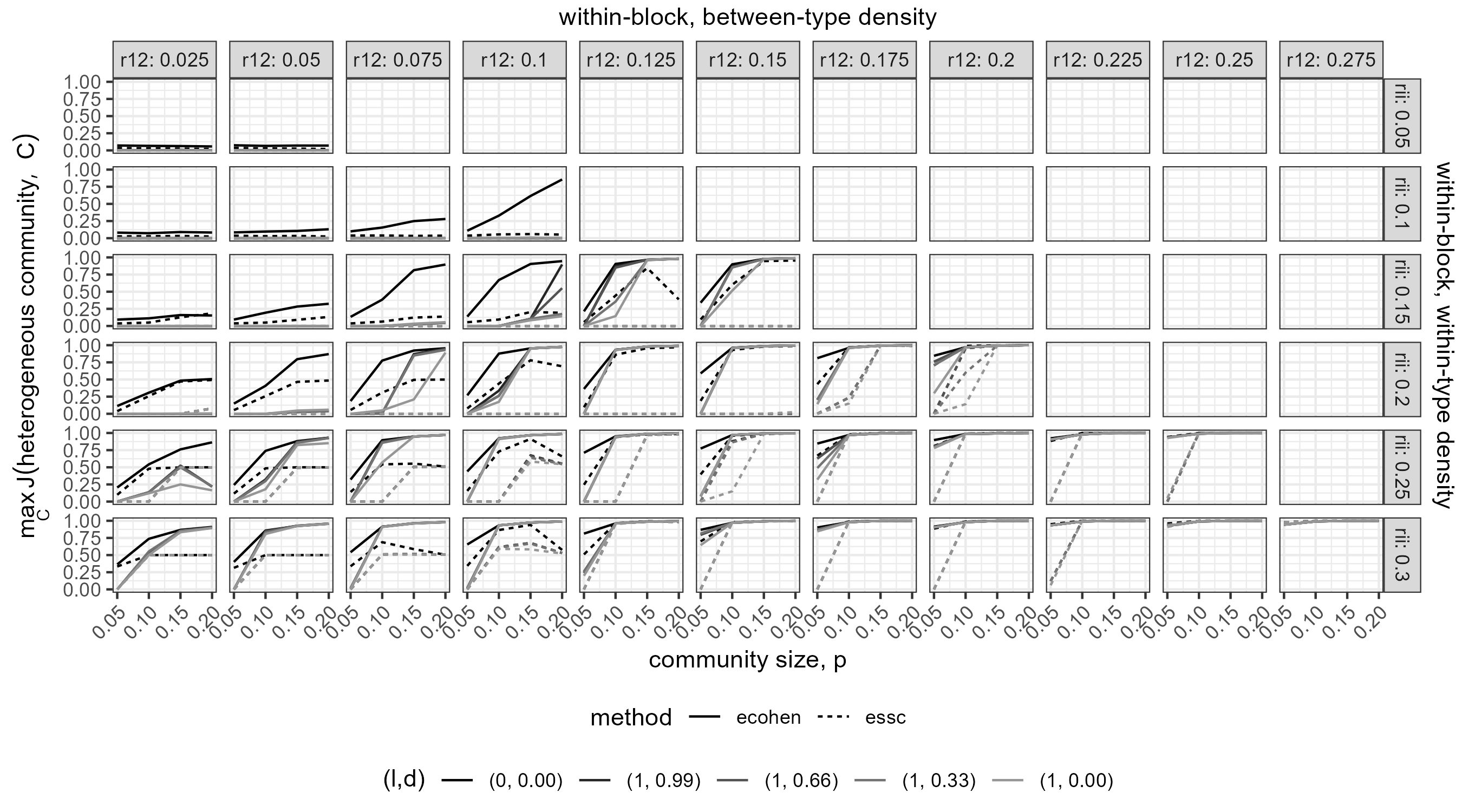}
  \end{adjustbox}
\end{figure}

\subsubsection{Homogeneous Community Structure}

In the manuscript, we present each method's ability to identify the red community when $r_{11}\in(0.20, 0.25, 0.30)$, $r_{22}=0.25$, and $r_{12}=0$; however, a broader range of simulated conditions are constructed and explored. We present results for identifying both the red and blue community for general $r_{11}\in(0.15, 0.20, 0.25, 0.30)$ where $i\in\{1, 2\}$ before exploring the effects of $\xi$ and $\phi$. As in the manuscript, one hundred networks are generated at each simulated condition, referred to as replicates. The median maximum Jaccard similarity measure is plotted at each simulated condition with uncertainty reflected by the range of first and third quartile.

Figure \ref{fig:homogeneous_maxjaccard_red} compares each method's ability to identify the red community at a broader range of simulated conditions. Each point represents the median maximum Jaccard at each simulated condition. The vertical range represents the middle 50\% of observed maximum Jaccard measures. As the red to red density increases within the HCB (along the y-axis facets), ECoHeN can identify ECoHeN can identify the homogeneous community composed of red nodes with increasingly better precision, featuring marked improvements for small, homogeneous communities. There are no such improvements from ZCmod which partitions a network into modules each of which must maintain at least one node of each node type. Notably, the blue to blue density does not have an impact on any method's ability to identify the homogeneous community composed of red nodes.

While ECoHeN consistently outperforms ZCmod, ESSC and Walktrap consistently outperform ECoHeN and ZCmod at uncovering homogeneous community structure. This is not surprising considering these methods identify communities irrespective of node type. By construction, the within-block, between-type density is no different from the background density for networks under study. While ECoHeN is designed to identify both homogeneous and heterogeneous community structure, the tradeoff for this functionality is a reduction in power for uncovering homogeneous communities since there is no information gained through a comparison of the within-block, between-type density to the background. Notably, ESSC performs better or similarly to Walktrap at uncovering dense, homogeneous communities, and Infomap continues to place each node into it's own community, incapable of identifying a community amongst background noise.

Lastly, Figure \ref{fig:homogeneous_maxjaccard_blue} compares each method's ability to identify the blue community at the same range of simulated conditions. The resulting conclusions are the same.

\begin{figure}[p!]
    \centering
    \includegraphics[width=\textwidth]{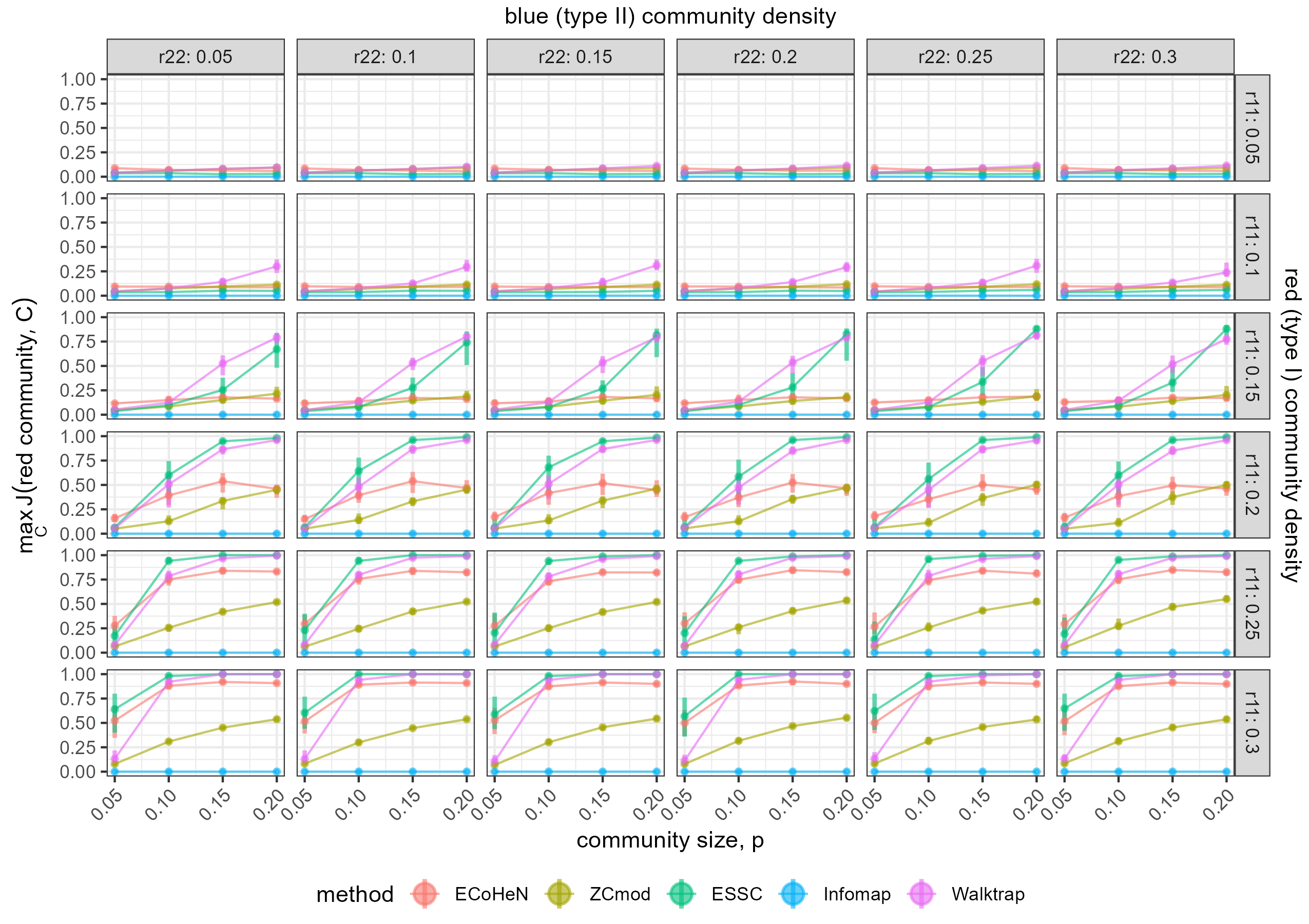}
    \caption{The blue to blue (red to red) community density is provided on the x-axis (y-axis) facets. Each method's ability to identify the red community is provided alongside the red community size. As the red to red density increases, ECoHeN can identify the red community with increasingly better precision, marked improvements for small, homogeneous communities. There are no such improvements for ZCmod. ESSC and Walktrap consistently outperform ECoHeN and ZCmod since the between-type density is no larger than the background. In general, ECoHeN has less power in identifying homogeneous community structure than ESSC and Walktrap: a tradeoff for the flexibility of finding both homogeneous and heterogeneous community structure. Notably, the density of blue to blue connections does not impact any method's ability to identify the red community.}
    \label{fig:homogeneous_maxjaccard_red}
\end{figure}

\begin{figure}[p!]
    \centering
    \includegraphics[width=\textwidth]{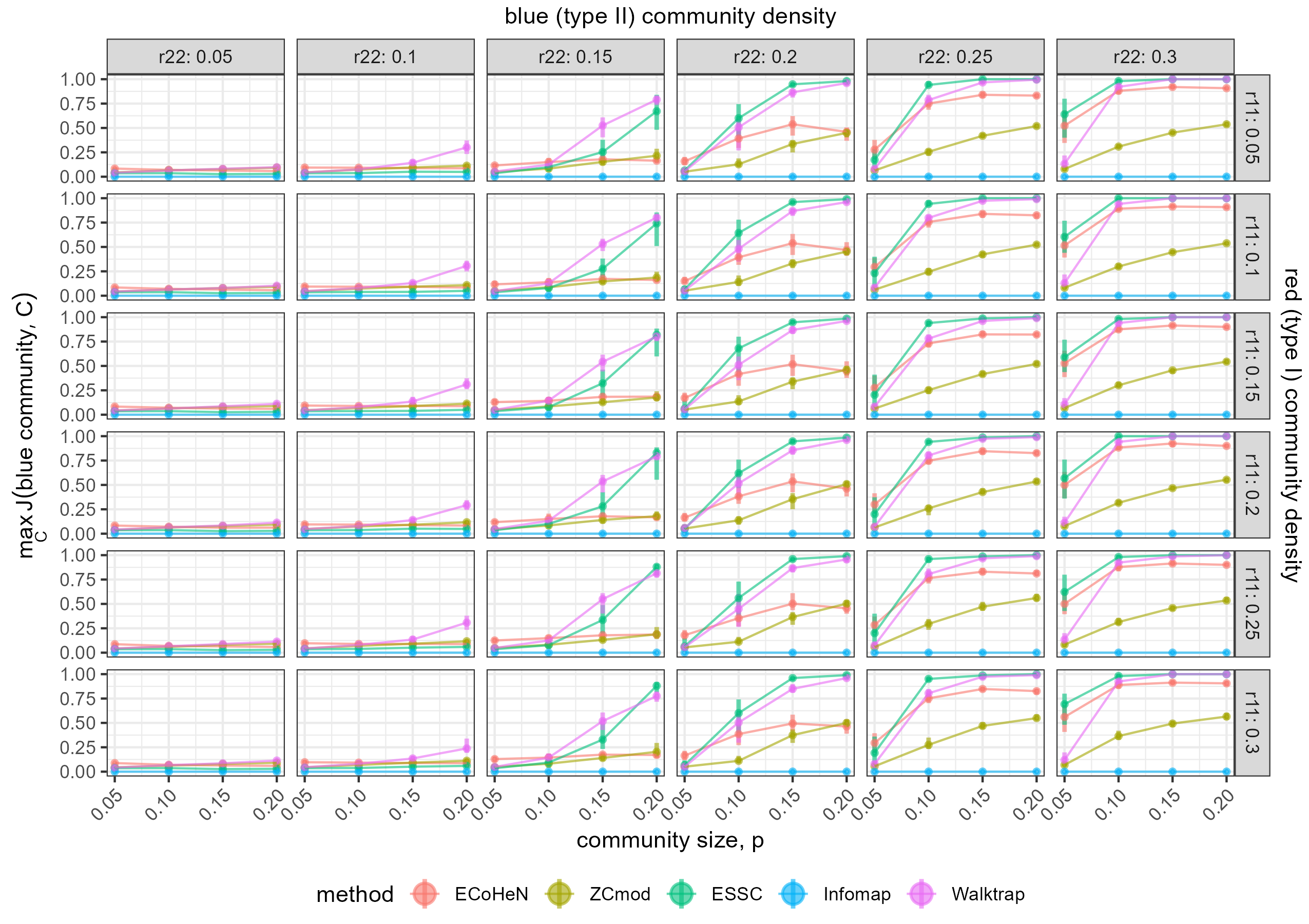}
    \caption{The blue to blue (red to red) community density is provided on the x-axis (y-axis) facets. Each method's ability to identify the blue community is provided alongside the blue community size. As the blue to blue density increases, ECoHeN can identify the blue community with increasingly better precision, marked improvements for small, homogeneous communities. There are no such improvements for ZCmod. ESSC and Walktrap consistently outperform ECoHeN and ZCmod since the between-type density is no larger than the background. In general, ECoHeN has less power in identifying homogeneous community structure than ESSC and Walktrap: a tradeoff for the flexibility of finding both homogeneous and heterogeneous community structure. Notably, the density of red to red connections does not impact any method's ability to identify the blue community.}
    \label{fig:homogeneous_maxjaccard_blue}
\end{figure}

\paragraph{Effects of $\xi$ and $\phi$}

We consider setting $(\xi, \phi)$ to $(0, 0)$, and $(1, \phi)$ for $\phi = 0, 0.33, 0.66$, and $0.99$ and running ECoHeN and ESSC to see the effect the parameter settings have on the ability for the methods to recover the homogeneous communities. The results are provided in Figure \ref{fig:homogeneous_maxjaccard_effects_red} and \ref{fig:homogeneous_maxjaccard_effects_blue}. As previously founded, it appears setting the maximal allowance to one for each iteration with $\xi=0$ and $\phi=0$ provides the best resolution for uncovering the homogeneous communities. When $\xi=1$, a larger $\phi$ provides the best resolution, suggesting that if we are to speed up the algorithm by allowing a larger maximal allowance for early iterations, it is best to allow for a larger maximal allowance for many early iterations. We again do not consider setting $(\xi, \phi)=(1, 1)$ as this setting is not guaranteed to converge to a solution. Should the user wish to set $(\xi, \phi)=(1, 1)$, we suggest considering a maximum number of iterations fewer than the number of nodes in the network.

\begin{figure}[p!]
    \centering
    \includegraphics[width=\textwidth]{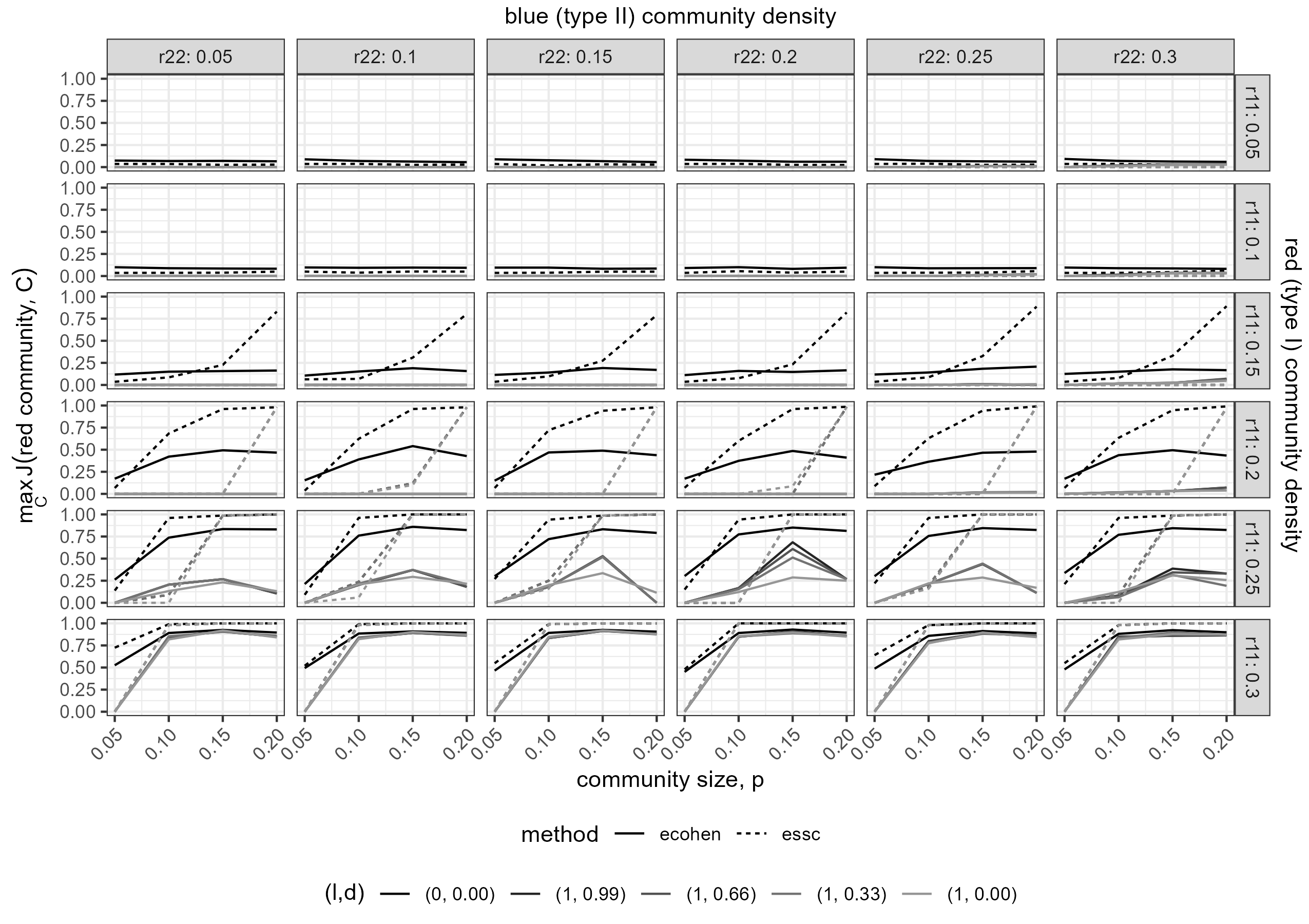}
    \caption{The parameter setting $(0, 0)$ provides the best resolution for uncovering the homogeneous communities. When $\xi=1$, a larger $\phi$ provides the best resolution, suggesting that if we are to speed up the algorithm by allowing a larger proportion of the node set to transition into and out of the candidate set, it is best to allow a large proportion to transition for more iterations.}
    \label{fig:homogeneous_maxjaccard_effects_red}
\end{figure}

\begin{figure}[p!]
    \centering
    \includegraphics[width=\textwidth]{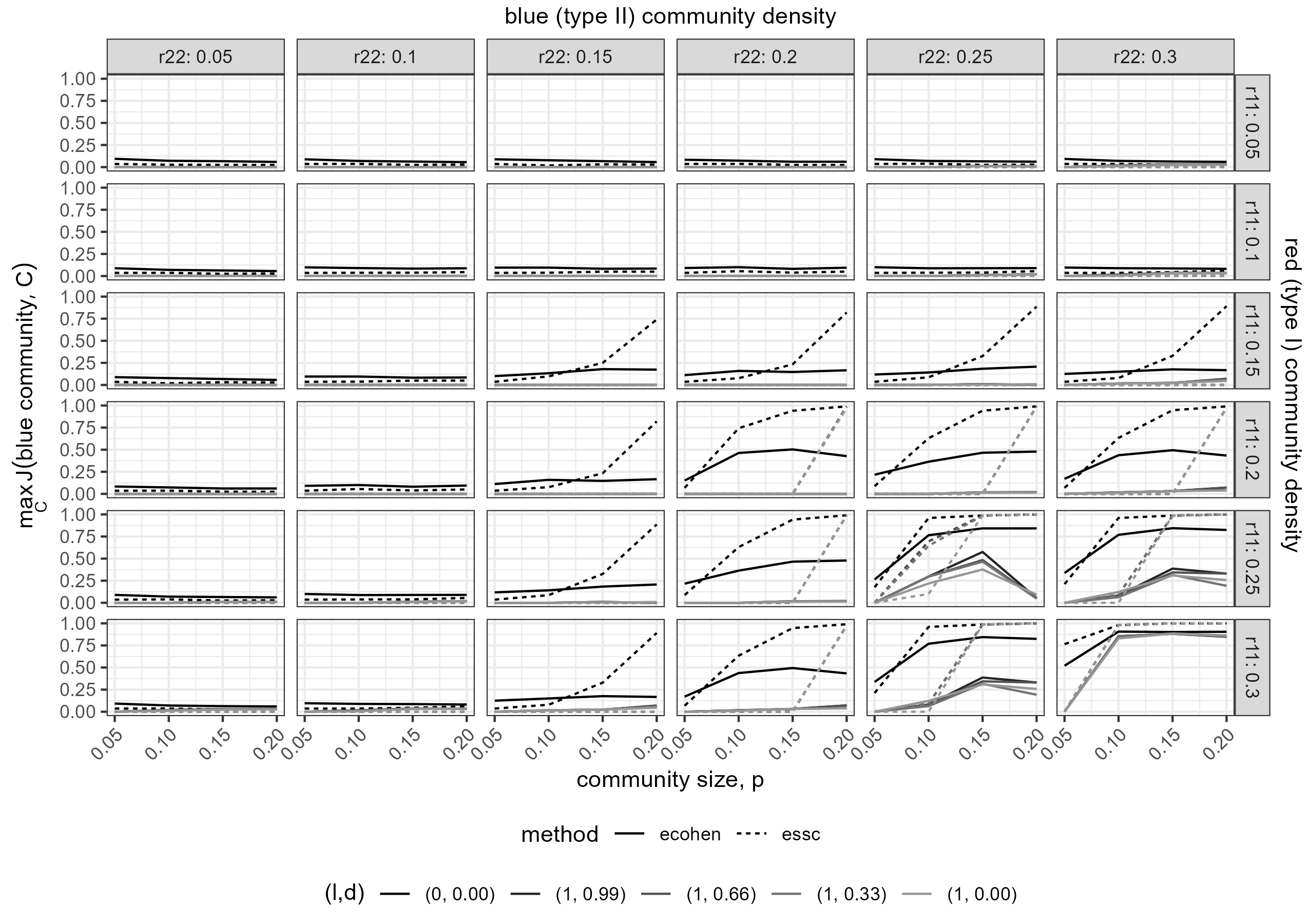}
    \caption{The fundamental conclusions are the same as founded and discussed in Figure \ref{fig:homogeneous_maxjaccard_effects_red} as they pertain to block type II (blue) nodes.}
    \label{fig:homogeneous_maxjaccard_effects_blue}
\end{figure}

\subsubsection{No Community Structure}

In this section, we investigate each method's ability to identify background nodes: nodes which are not preferentially attached to any well defined community. We generate networks with no community structure by fixing $R = R(0, 0, 0)$ and letting $P=P(b)$ where $b\in(0.05, 0.10, \dots, 0.35)$. The proportion of nodes assigned to the HCB, $p$, is meaningless provided the choice of $R$, so we fix $p=0.45$ as in Figure \ref{fig:hsbm_examples}\textcolor{blue}{c}. The random networks under this setting are Erdős-Rényi networks with rate $b$, so no community structure exists. Equivalently, the set $D=\emptyset$. For each $b$, we produce twenty replicates.

We compare each method's ability to identify background. Both ESSC and ECoHeN are capable of assigning nodes to background; however, since ZCmod, Infomap, and Walktrap are partitioning methods, they are at an innate disadvantage to identify background nodes. For a fair comparison, \textbf{we consider the largest identified community to reflect the background of each partitioning method}. If a trivial partition is assigned (i.e., all nodes are assigned to one community or each node is assigned to its own community) then the partition is assumed to identify each node as a background node. For each replicate, we compute the number of communities found by each method and the proportion of nodes identified as background nodes. The results are provided in Figure \ref{fig:best_jaccard_no_structure}.

\begin{figure}[p!]
\centering
   \begin{subfigure}{0.475\textwidth} \centering
     \includegraphics[width=\textwidth]{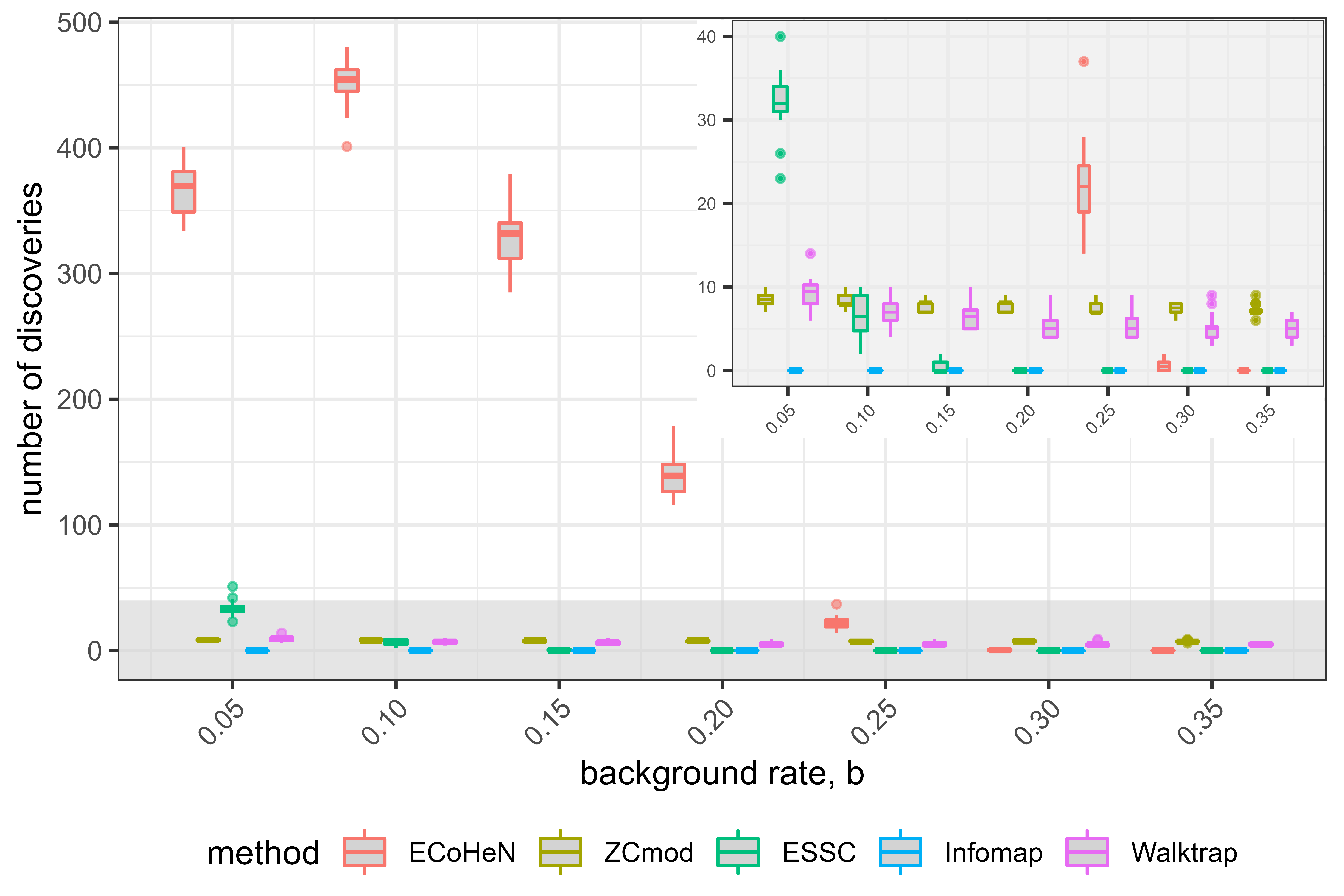}
     \caption{$(\xi,\phi)=(0, 0)$}
   \end{subfigure}
   \begin{subfigure}{0.475\textwidth} \centering
     \includegraphics[width=\textwidth]{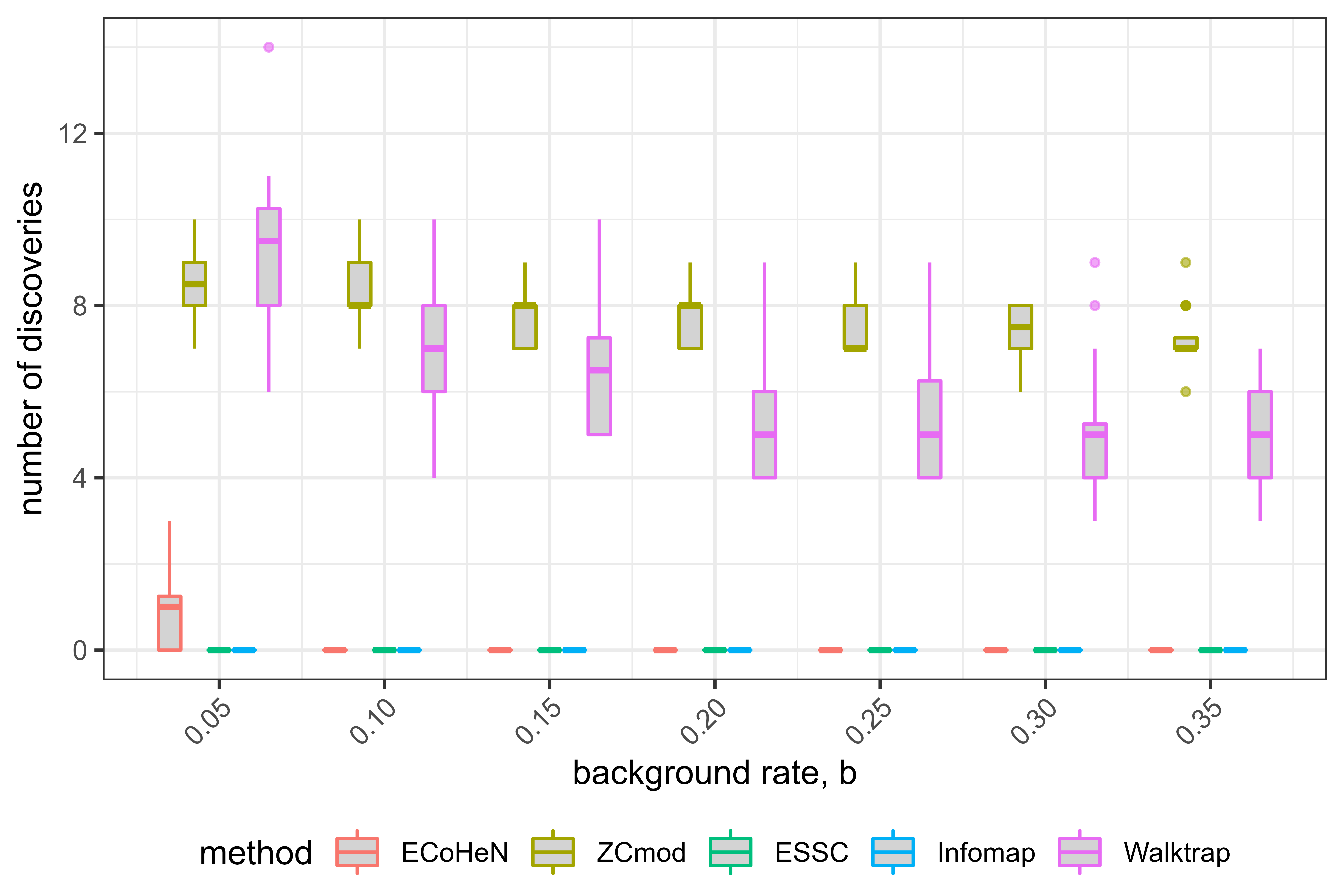}
     \caption{$(\xi,\phi)=(1, 0.99)$}
   \end{subfigure}
   \begin{subfigure}{0.475\textwidth} \centering
     \includegraphics[width=\textwidth]{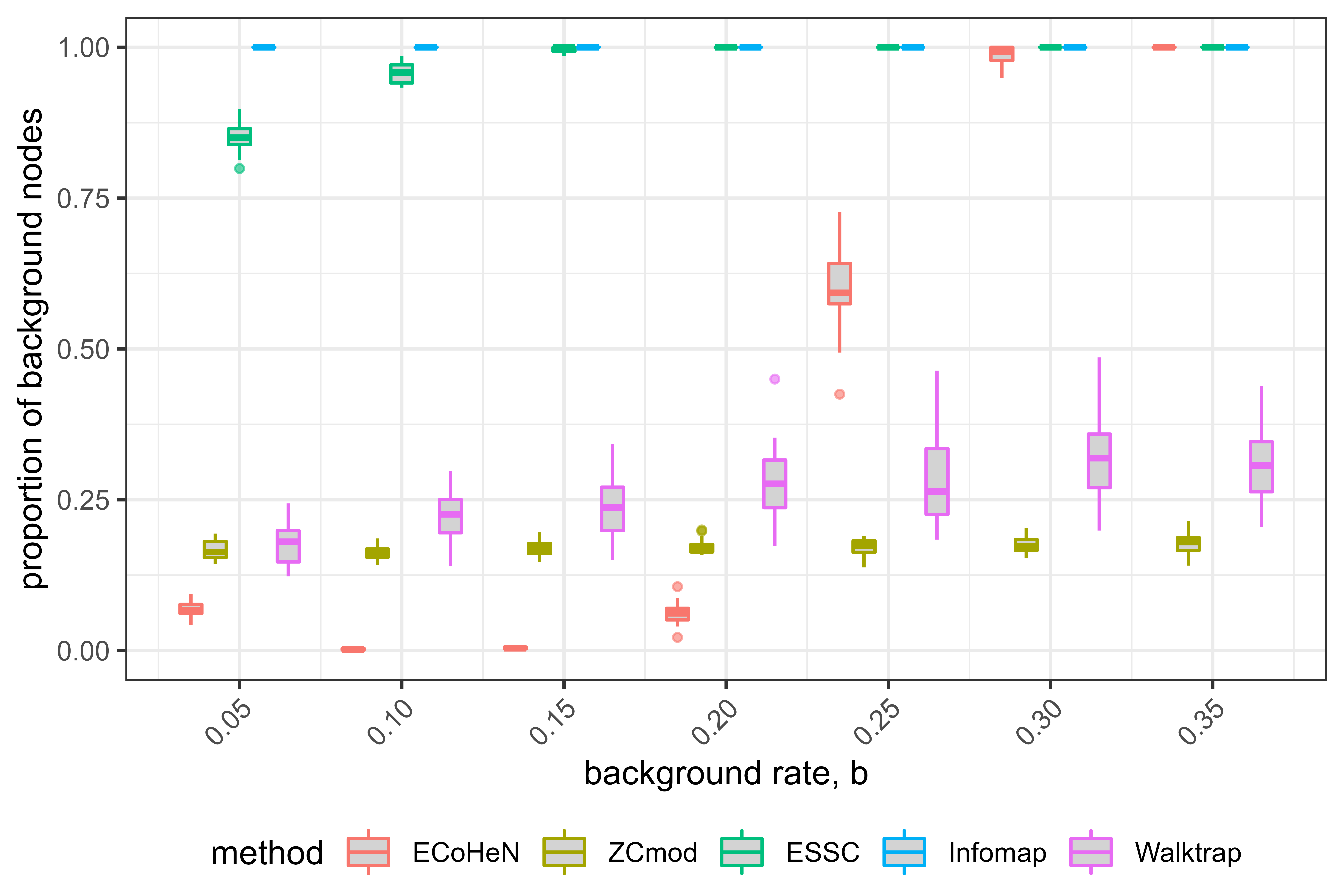}
     \caption{$(\xi,\phi)=(0, 0)$}
   \end{subfigure}
   \begin{subfigure}{0.475\textwidth} \centering
     \includegraphics[width=\textwidth]{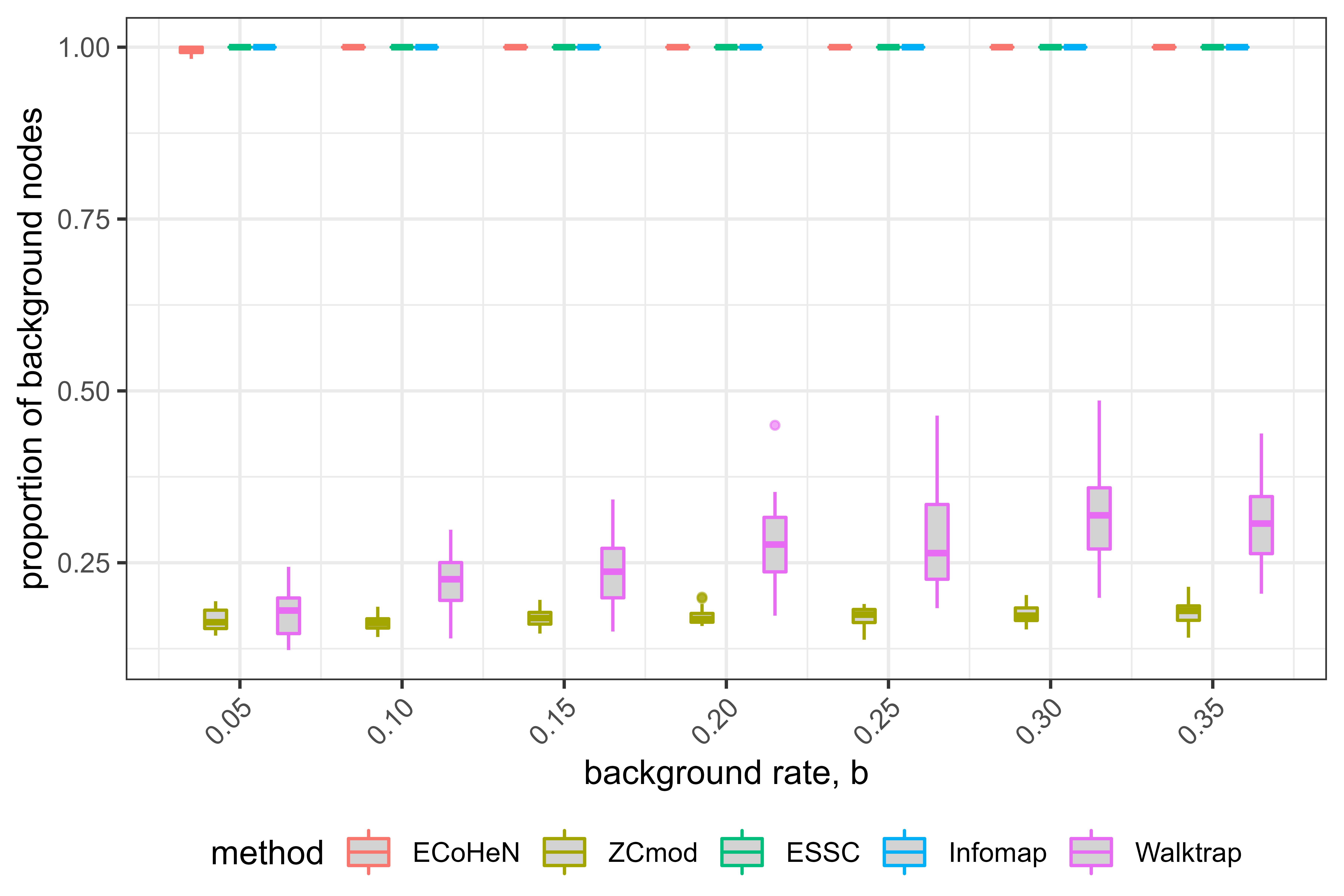}
     \caption{$(\xi,\phi)=(1, 0.99)$}
   \end{subfigure}
\caption{The learning, $\xi$, and decay, $\phi$, rate control the scale of changes when updating a candidate set from initialization to convergence. While the parameter setting $\xi=0$ and $\phi=0$ (microscopic changes) provides the best resolution for uncovering a community amongst background noise, it results in many small densely connected communities in a random network (see panels (a) and (c)). On the other hand, when $\xi=1$ and $\phi<1$ (macroscopic changes), ECoHeN identifies each node in a random network as background. Early macroscopic changes allow ECoHeN to break out of the low conductance initializations (see panels (b) and (d)). A practitioner's guide to selecting $\xi$ and $\phi$ are provided in the ``Parameter Choices'' subsection of the manuscript.} 
\label{fig:best_jaccard_no_structure}
\end{figure}

As discussed in ``Initialization'' subsection of the manuscript, ECoHeN is initialized at the neighborhood of each node. By nature, the neighborhoods in an Erdős-Rényi network have low conductance when compared to a random set of nodes. When the extraction procedure is parameterized to a maximal allowance of one for each iteration (i.e., $\xi=0$ and $\phi=0$), the neighborhoods are updated one node at a time, resulting in a small set of densely connected nodes. Thus, ECoHeN returns many small communities (see Figure \ref{fig:best_jaccard_no_structure}\textcolor{blue}{a}). As the density of the Erdős-Rényi network gets larger, the number of communities found by ECoHeN tends to zero and the proportion of nodes assigned to background approaches one (see Figure \ref{fig:best_jaccard_no_structure}\textcolor{blue}{c}). We demonstrate in the next subsection that these small communities are indeed particularly dense, at times ten to twenty times more connected internally than connected to the rest of the network.

To avoid finding communities in a random network, we can parameterize the extraction procedure such that the maximal allowance is one on the first iteration and tends to one with each passing iteration by setting the learning rate, $\xi=1$, and the decay rate, $\phi<1$. In this case, the algorithm is guaranteed to converge, and the number of communities found by ECoHeN is negligible for small $b$ and quickly tends to 0 as the $b$ gets larger (see Figure \ref{fig:best_jaccard_no_structure}\textcolor{blue}{b}). In all cases, the proportion of nodes assigned to background is near one (see Figure \ref{fig:best_jaccard_no_structure}\textcolor{blue}{d}). If one wishes to avoid finding dense subsets of nodes in an Erdős-Rényi graph, then it suffices to set $\xi=1$ regardless of choice of $\phi$. When $\xi=1$, a larger choice of $\phi$ tends to yield slightly better performance at recovering simulated community structure; however, the impact of $\phi$ is minimal when $\xi=1$. 

\paragraph{Investigation of Identified Communities}

We previously demonstrated that the extraction procedure parameterized by $\xi=0$ and $\phi=0$ provides the best resolution for extracting simulated heterogeneous and homogeneous communities from background. At the same time, we demonstrated that such a parameterization results in many small communities in a heterogeneous Erdős-Rényi network. Finding communities in an Erdős-Rényi network is not ideal behavior, so we investigate the characteristics of these identified communities. Considering the heterogeneous and homogeneous simulations were conducted at a background rate of 0.05, we will be examining the communities found by ECoHeN when $b=0.05$.

In each of twenty replicates, ECoHeN uncovers between 300 and 400 communities (see Figure \ref{fig:no_num_communities_hist}). To gauge the assortativity of each identified community, we compute the ratio of densities for each. In an Erdős-Rényi network, the expected ratio of densities is one, so any set with a ratio of densities sufficiently greater than one can arguably be called a community.

\begin{figure}[htbp!]
    \centering
    \includegraphics[scale = 0.9]{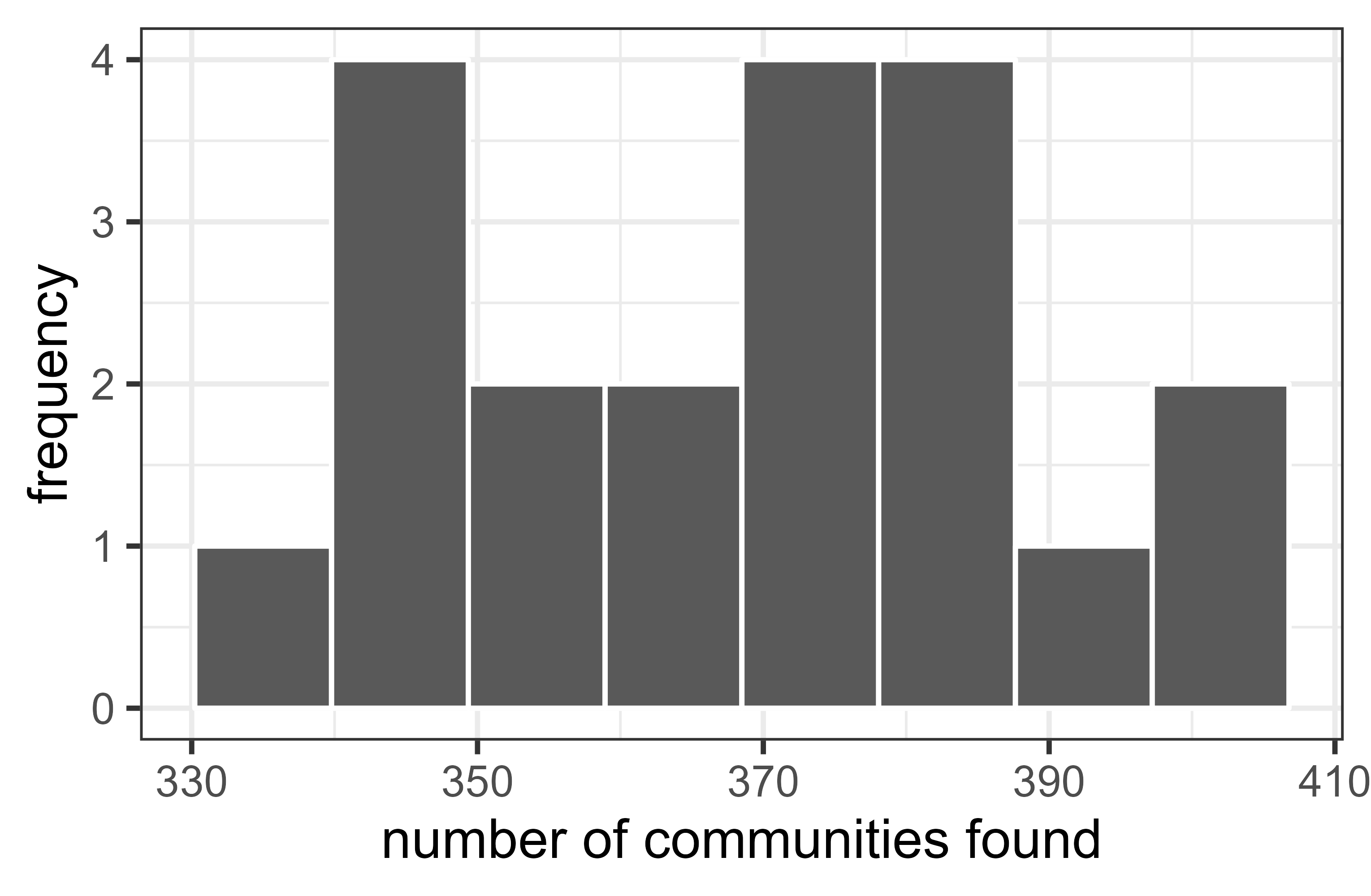}
    \caption{When $b=0.05$, ECoHeN uncovers between 300 and 400 communities for each of the 20 replicates.}
    \label{fig:no_num_communities_hist}
\end{figure}

\noindent Each of the communities found by ECoHeN is particularly dense where density scales naturally with the size of the community (see Figure \ref{fig:no_size_rat_scatter}). The majority of communities found by ECoHeN are small, and these communities feature a large ratio of densities, sometimes proving twenty times more dense internally than to the rest of the network. The larger communities found by ECoHeN tend to have relatively smaller (albeit large) ratio of densities, proving at least five times more dense internally than to the rest of the network.

\begin{figure}[htbp!]
    \centering
    \includegraphics[scale = 0.9]{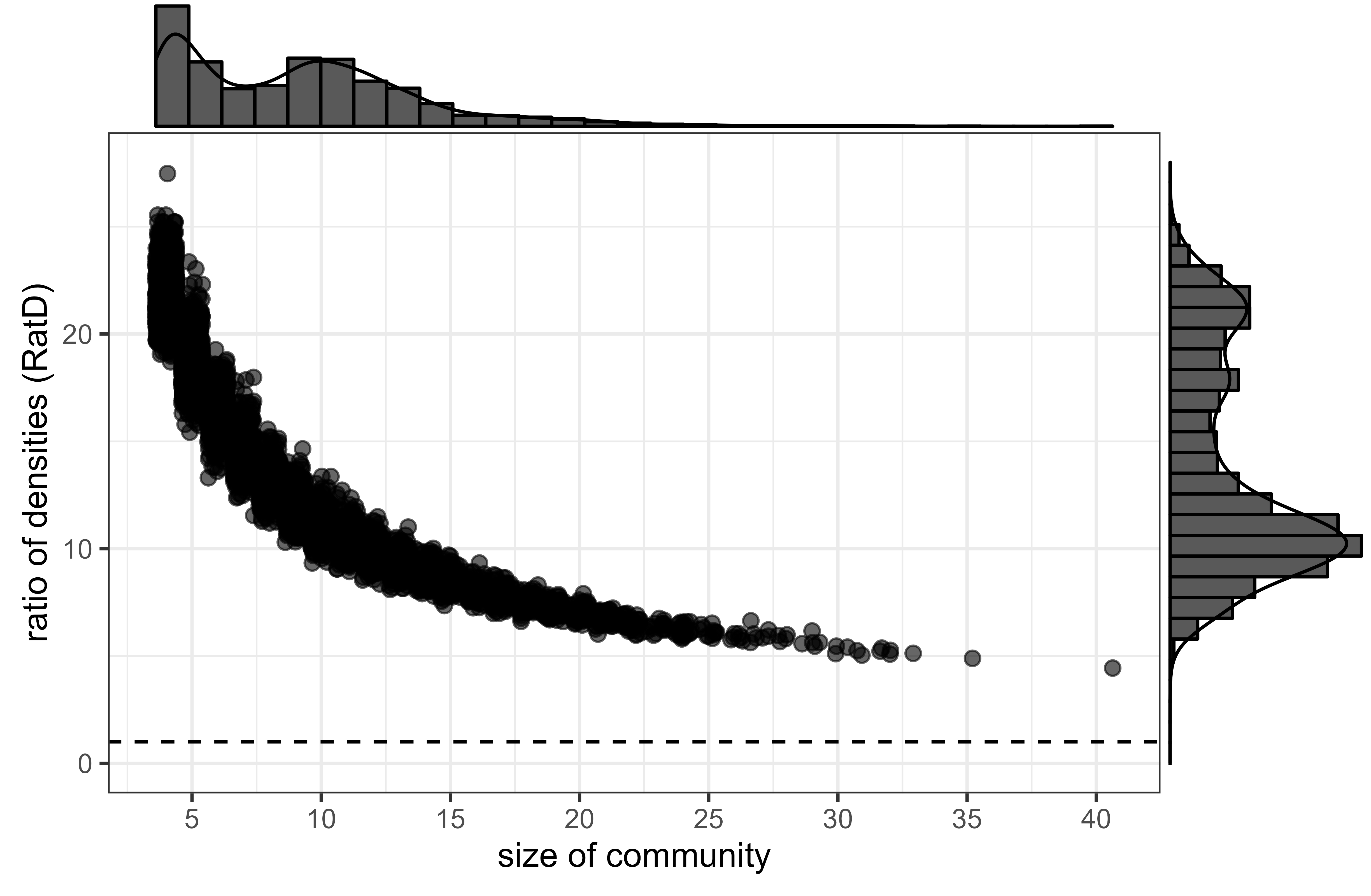}
    \caption{Each of the communities found by ECoHeN across the 20 replicates is particularly dense. Small communities necessitate a higher ratio of densities to be deemed a community, sometimes twenty times more dense internally than to the rest of the network. The larger communities found by ECoHeN tend to have relatively smaller, albeit still large, ratio of densities, proving to be at least five times more dense internally than to the rest of the network.}
    \label{fig:no_size_rat_scatter}
\end{figure}

To gauge how unlikely it would be to attain the observed ratio of densities, we gather 1000 snowball samples for each community, compute the ratio of densities for each sample, and record the 95\% quantile. Any observed ratio of densities larger than the respective 95\% quantile is deemed sufficiently dense to be considered a community. For computational feasibility, we isolate the largest community found by ECoHeN at each of the twenty replicates, plotted in Figure \ref{fig:no_largest_size_rat_scatter_jitter} in relation to the observed ratio of densities. Notice, as the community size increases, the threshold ratio of densities required decreases as it becomes increasingly unlikely to observe large communities with a large ratio of densities than a small community with an equally large ratio of densities. Each of the largest ECoHeN communities is sufficiently dense compared to the respective threshold, indicating that while these communities were attained from an Erdős-Rényi network, they are still sufficiently dense.

To describe the snowball sampling routine, consider an ECoHeN community of size $n_C$. One snowball sample is attained by first picking a node uniformly at random and recording its neighbors. There are $n_C-1$ nodes left to select. Should there be more neighbors than left to select, the remaining nodes left to select are chosen from the recorded neighbors uniformly at random. Otherwise, all of the neighbors are recorded, and the number of nodes left to select is updated accordingly. The process continues until $n_C$ nodes have been selected. Since the snowball sampling routine results in dense, well-connected sets of nodes, it is a reasonable (albeit computationally burdensome) null model when assessing community structure in an Erdős-Rényi network.

\begin{figure}[htbp!]
    \centering
    \includegraphics[scale=1]{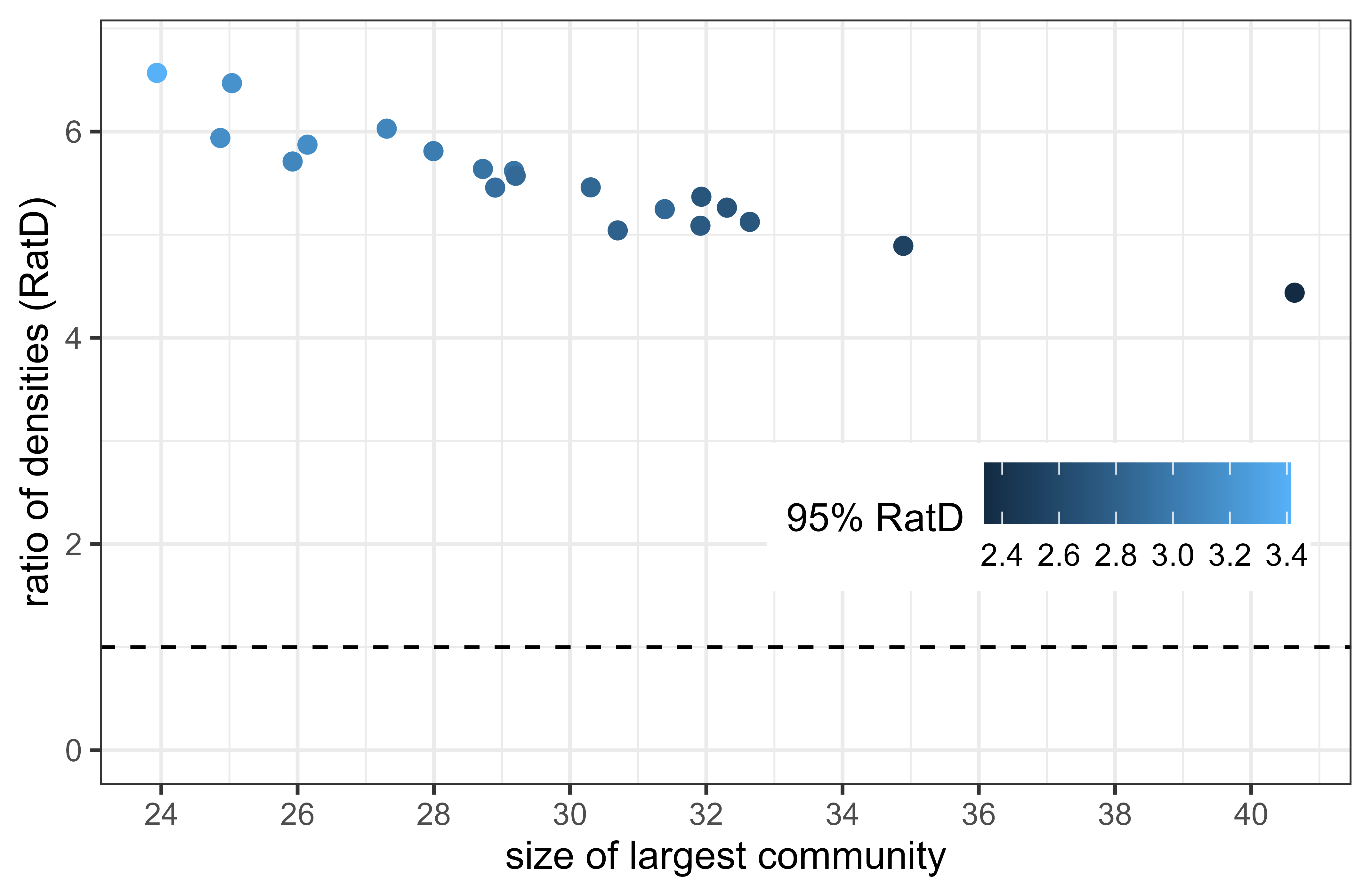}
    \caption{The largest communities across the twenty replicates are isolated and plotted with respect to the observed ratio of densities. To gauge how unlikely it would be to attain the observed ratio of densities, we gather 1000 snowball samples for each community, compute the ratio of densities for each sample, and record the 95\% quantile. Any observed value larger than the 95\% quantile is deemed sufficiently dense and arguably embodies community structure (even for a random network). Each of the largest ECoHeN communities maintains a ratio of density that is roughly twice as large as the estimated respective 95\% quantile.}
    \label{fig:no_largest_size_rat_scatter_jitter}
\end{figure}

\section{Extraction Routines}

\noindent The extraction routines AddWellConnected and RemoveLooselyConnected are provided in pseudocode in Algorithm \ref{algo:ecohen_algo_subroutine}. For the most up-to-date \texttt{R} implementation of ECoHeN (with a \texttt{C++} backend), see the following GitHub url: 
\begin{center}
    \url{https://github.com/ConGibbs10/ECoHeN}.
\end{center}

\begin{palgorithm}[htbp!]
    \centering
    \includegraphics[scale=0.98]{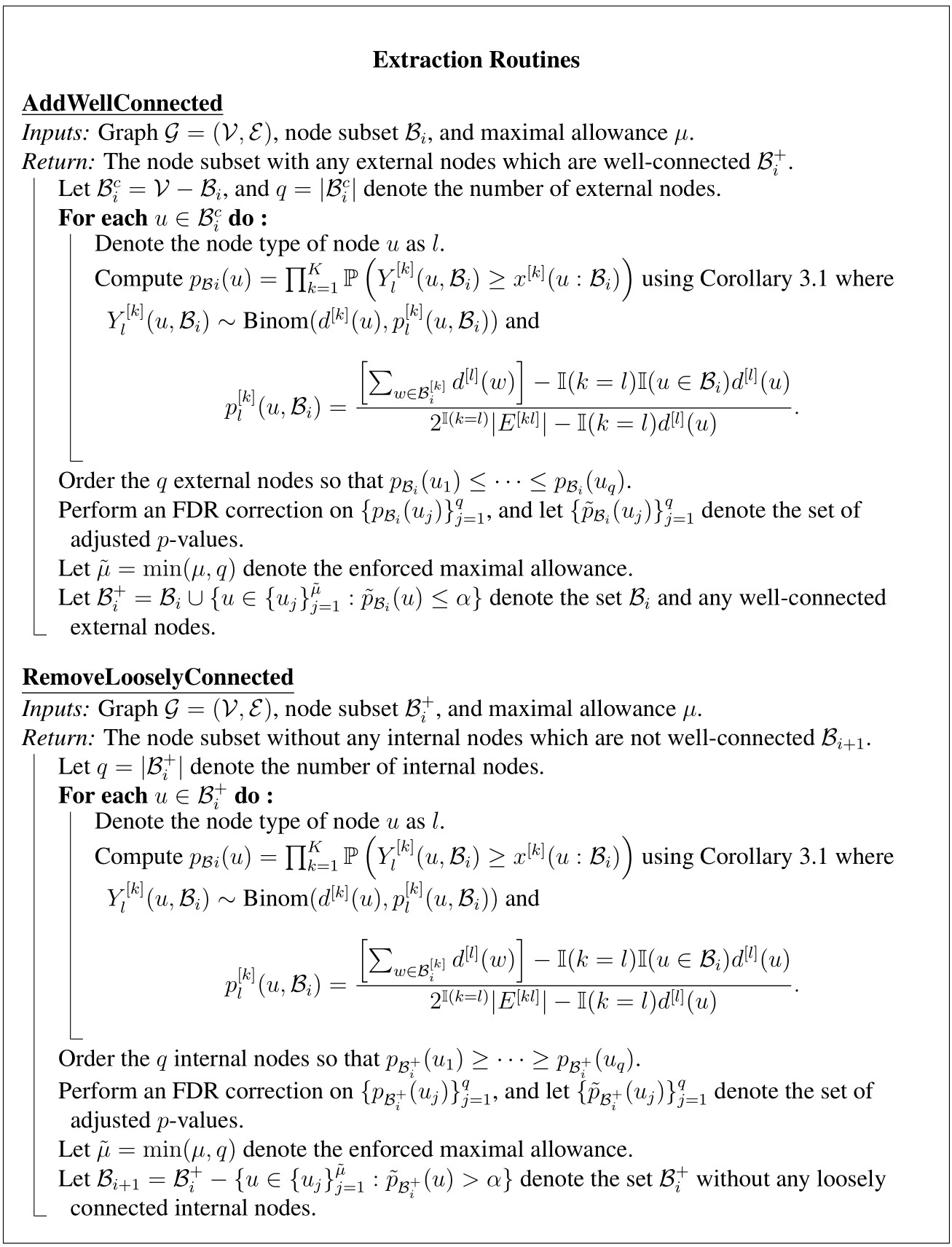}
    \captionof{palgorithm}{Pseudocode for the ECoHeN extraction routines AddWellConnected and RemoveLooselyConnected.}
    \label{algo:ecohen_algo_subroutine}
\end{palgorithm}

\end{document}